%%%%%%%%%%%%%%%%%%%%%%%%%%%%%%%%%%%%%%%
%%older versions at july15.tex,aug10.tex,aug16.tex
%%%%%%%%%%%%%%%%%%%%%%%%%%%%%%%%%%%%%%%
\documentclass[12pt]{amsart}
\usepackage{graphicx,amsmath,amssymb}

\headheight=6.15pt
\textheight=9.25in
\textwidth=6.5in
\oddsidemargin=0in
\evensidemargin=0in
\topmargin=-.375in

\numberwithin{equation}{section}

\newcommand{\C}{\mathbb C}
\newcommand{\R}{\mathbb R}
\newcommand{\Z}{\mathbb Z}
\newcommand{\N}{\mathbb N}
%Derivatives:
%%
\renewcommand{\d}{\prime}
\newcommand{\dd}{{\prime \prime}}

\renewcommand{\Re}{{\rm Re}\,}

\newtheorem{theorem}{Theorem}[section]
\newtheorem{lemma}[theorem]{Lemma}
\newtheorem{proposition}[theorem]{Proposition}
\newtheorem{corollary}[theorem]{Corollary}

\newtheorem{definition}{Definition}

\newtheorem*{remarks}{Remarks}

%%%%%%%%%%%%%%%%%%%%%%%%%%%%%%%%%%%%%%%%%%%%%%%%%%%%%%%%%%%%%%%%%%%

\begin{document}
\title[]
{Anharmonic oscillators in the complex plane, $\mathcal{PT}$-symmetry, and real eigenvalues.}
\author[]
{Kwang C. Shin}
\address{Department of Mathematics, University of West Georgia, Carrollton, GA 30118 USA}
%\date{July 10, 2008}
\date{\today}
\begin{abstract}
For integers $m\geq 3$ and $1\leq\ell\leq m-1$, we study the eigenvalue problems
$-u^\dd(z)+[(-1)^{\ell}(iz)^m-P(iz)]u(z)=\lambda u(z)$ with the boundary conditions that $u(z)$ decays to zero as
$z$ tends to infinity along the rays $\arg z=-\frac{\pi}{2}\pm \frac{(\ell+1)\pi}{m+2}$ in the complex plane,
where $P$ is a polynomial of degree at most $m-1$. We provide asymptotic expansions of the eigenvalues
$\lambda_{n}$. Then we show that if the eigenvalue problem is $\mathcal{PT}$-symmetric, then the eigenvalues are
all real and positive with at most finitely many exceptions. Moreover, we show that when $\gcd(m,\ell)=1$, the
eigenvalue problem has infinitely many real eigenvalues if and only if its translation or itself is
$\mathcal{PT}$-symmetric. Also, we will prove some other interesting direct and inverse spectral results.
\end{abstract}

\maketitle

\begin{center}
{\it 2010 {\it Mathematics Subject Classification}: 34L40, 34L20, 81Q12}\\
{\it  Key words: Anharmonic oscillators, asymptotics of the eigenvalues, $\mathcal{PT}$-symmetry}

\end{center}

\baselineskip = 18pt

\section{Introduction}
\label{introduction}
In this paper, we study Schr\"odinger eigenvalue problems with real and complex polynomial potentials in the complex plane under various decaying boundary conditions. We provide explicit asymptotic formulas relating the index $n$ to a series of fractional powers of the eigenvalue $\lambda_n$ (see Theorem \ref{main_thm1}). Also, we recover the polynomial potentials from asymptotic formula of the eigenvalues (see Theorem \ref{thm_112} and Corollary \ref{cor6}) as well as applications to the so-called $\mathcal{PT}$-symmetric Hamiltonians (see Theorems \ref{main_thm2} and \ref{thm_111}).

For integers $m\geq 3$  and $1\leq \ell\leq m-1$, we consider the Schr\"odinger eigenvalue
problem
\begin{equation}\label{ptsym}
\left(H_{\ell} u\right)(z):=\left[-\frac{d^2}{dz^2}+(-1)^{\ell}(iz)^m-P(iz)\right]u(z)=\lambda u(z),\quad\text{for
some $\lambda\in\C$},
\end{equation}
with the boundary condition that
\begin{equation}\label{bdcond}
\text{$u(z)\rightarrow 0$  as $z\rightarrow \infty$ along the two rays}\quad \arg z=-\frac{\pi}{2}\pm
\frac{(\ell+1)\pi}{m+2},
\end{equation}
where $P$ is a polynomial of degree at most $m-1$ of the form
\begin{equation}\nonumber
P(z)=a_1z^{m-1}+a_2z^{m-2}+\cdots+a_{m-1}z+a_m,\quad a_j\in\C\,\,\text{\,for $1\leq j\leq m$}.
\end{equation}

If a nonconstant function $u$ satisfies \eqref{ptsym} with some $\lambda\in\C$ and the boundary condition
\eqref{bdcond}, then we call $\lambda$ an {\it eigenvalue} of $H_{\ell}$ and $u$ an {\it eigenfunction of
$H_{\ell}$ associated with the eigenvalue $\lambda$}. Sibuya \cite{Sibuya} showed that the eigenvalues of $H_{\ell}$ are the zeros of an entire function of order
$\rho:=\frac{1}{2}+\frac{1}{m}\in (0,1)$ and hence, by the Hadamard factorization theorem (see, e.g., \cite[p.\ 291]{Conway}),
there are infinitely many eigenvalues. We call the entire function the Stokes multiplier (or the spectral
determinant), and the algebraic multiplicity of an eigenvalue $\lambda$ is the order of the zero $\lambda$ of the
Stokes multiplier. Also, the geometric multiplicity of an eigenvalue $\lambda$ is the number of linearly
independent eigenfunctions associated with the eigenvalue $\lambda$, that is $1$ for every eigenvalue $\lambda$ \cite[\S 7.4]{Hille}.

We number the eigenvalues $\{\lambda_{n}\}_{n\geq N_0}$ in the
order of nondecreasing magnitudes, counting their algebraic
multiplicities. We will show that the magnitude of large
eigenvalues is strictly increasing (see, Lemma \ref{monoton}) and
hence, there is a unique way of ordering large eigenvalues, but
this is not guaranteed for small eigenvalues. However, how we
order these small eigenvalues will not affect results in this
paper. Throughout this paper, we will use $\lambda_n$ to denote
the eigenvalues of $H_{\ell}$ without explicitly indicating their
dependence on the potential and the boundary condition. Also, we
let
$$a:=(a_1,a_2,\ldots, a_{m})\in\C^{m}$$ be the coefficient vector
of $P(z)$.

The anharmonic oscillators $H_{\ell}$ with the various boundary conditions \eqref{bdcond} are considered in
\cite{Bender2,Shin2}.  When $m$ is even and $\ell=\frac{m}{2}$, $H_{\frac{m}{2}}$ is a Schr\"odinger operator in
$L^2(\R)$ (see, e.g.,
\cite{AA,Bender-1,CGM1,Eremenko1,Eremenko2,HR,MAS,TIT2}). This is self-adjoint if the potential $V(z)=(-1)^{\ell}(iz)^m-P(iz)$ is real on the real line, and
non-self-adjoint if the potential is non-real.

 Some particular classes of $H_{1}$ have been studied extensively in recent years in the context of theory of
$\mathcal{PT}$-symmetry \cite{Bender,CGM,Dorey,Handy2,Ali1,Shin5,Znojil,Znojil2}. The $H_{\ell}$ is
$\mathcal{PT}$-symmetric if the potential $V$ satisfies $\overline{V(-\overline{z})}=V(z)$, $z\in\C$, that is equivalent to $a\in\R^{m}$. In  this paper, we will generalize results in \cite{Shin5} (where $H_1$ is studied) to $1\leq \ell\leq m-1$ and introduce some new results.
These results are consequences of the following asymptotic expansion of the eigenvalues.
\begin{theorem}\label{main_thm1}
For each integer $m\geq 3$ and $1\leq \ell\leq m-1$, there exists an integer $N_0=N_0(m,\ell)$ such that the
eigenvalues $\{\lambda_n\}_{n\geq N_0}$ of $H_{\ell}$ satisfy
\begin{equation}\label{main_result}
n+\frac{1}{2}
=\sum_{j=0}^{m+1}c_{\ell,j}(a)\lambda_{n}^{\frac{1}{2}+\frac{1-j}{m}}+\eta_{\ell}(a)+O\left(\lambda_n^{-\rho}\right)
\quad\text{as $n\to\infty$},
\end{equation}
where $c_{\ell,j}(a)$ and $\eta_{\ell}(a)$ are defined in \eqref{dmlj_def} and \eqref{eta_def}, respectively.
\end{theorem}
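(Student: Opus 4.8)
The plan is to obtain the asymptotic expansion by a WKB/complex-WKB analysis of the eigenvalue condition, combined with Sibuya's connection-coefficient machinery. First I would recall that, after the substitution $z\mapsto (\text{const})\, z$ and rotation of the complex plane, the boundary-value problem \eqref{ptsym}--\eqref{bdcond} becomes a problem of the standard Sibuya type: the eigenvalues $\lambda$ are precisely the zeros of a Stokes multiplier $C(\lambda,a)$, an entire function built from the subdominant solution in one Stokes sector and its rotations. The key analytic input is Sibuya's asymptotic formula for this Stokes multiplier for large $|\lambda|$ in a suitable sector; writing $C(\lambda,a)=0$ and taking logarithms converts the zero condition into an equation of the form $2\pi i (n+\tfrac12) = (\text{asymptotic series in fractional powers of }\lambda_n) + o(1)$.

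The main computational step is to produce that asymptotic series explicitly. I would introduce the large parameter through the Liouville--Green (WKB) transformation for $-u''+(V(z)-\lambda)u=0$, where $V(z)=(-1)^\ell (iz)^m-P(iz)$. The quantization phase integral is $\int \sqrt{V(z)-\lambda}\,dz$ taken along the relevant path between the two turning points (or, in Sibuya's formulation, a contour integral of $\sqrt{V(z)-\lambda}$ over a large circle). Expanding $\sqrt{V(z)-\lambda}=\sqrt{(-1)^\ell(iz)^m}\,\bigl(1-\tfrac{P(iz)+\lambda}{(-1)^\ell(iz)^m}\bigr)^{1/2}$ by the binomial series and integrating term by term produces a finite sum of powers $\lambda^{1/2+(1-j)/m}$ for $j=0,1,\dots,m+1$ (the series truncates because $\deg P\le m-1$, so only finitely many terms fail to decay), plus a constant term, plus a remainder of size $O(\lambda^{-\rho})$. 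Matching the coefficients of this expansion against the definitions \eqref{dmlj_def} and \eqref{eta_def} identifies $c_{\ell,j}(a)$ and $\eta_\ell(a)$; the $\ell$-dependence enters only through the choice of Stokes sectors, i.e., through constant phase factors, which is why the leading coefficient $c_{\ell,0}(a)$ is $\ell$-independent up to such factors but $\eta_\ell$ is not. The half-integer shift $\tfrac12$ is the usual Maslov/connection-formula contribution at the turning points.

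The step I expect to be the main obstacle is controlling the error term uniformly: one must show that the subdominant-solution asymptotics and the Stokes-multiplier formula hold with an error $O(\lambda^{-\rho})$ \emph{uniformly} as $\lambda\to\infty$ along the relevant directions, and that this survives taking a logarithm and solving for $\lambda_n$ in terms of $n$. This requires (i) establishing that large eigenvalues lie in a narrow sector around a fixed ray (so that the asymptotics apply), which is essentially the content of Lemma \ref{monoton} and the preceding estimates, and (ii) checking that the implicit-function inversion of \eqref{main_result} — solving $n+\tfrac12=F(\lambda_n)$ for $\lambda_n$ — does not degrade the error, using that $F$ has a one-term-dominant expansion $F(\lambda)\sim c_{\ell,0}\lambda^{\rho}$ with $\rho\in(0,1)$. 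A secondary technical point is bookkeeping the finitely many lower-order turning-point corrections and confirming that no term of order between $\lambda^{-1/2-1/m+\cdots}$ and $\lambda^{-\rho}$ has been overlooked; this is routine but must be done carefully to justify that the sum really runs only up to $j=m+1$ with everything smaller absorbed into $O(\lambda_n^{-\rho})$.
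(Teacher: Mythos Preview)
Your high-level plan---reduce to zeros of a Stokes multiplier, use Sibuya's asymptotics, take logarithms to get a quantization condition---is exactly the architecture the paper uses. But the central computation you propose contains a real error, and several structural steps in the paper's argument are absent from your outline.

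The binomial expansion you write,
\[
\sqrt{V(z)-\lambda}=\sqrt{(-1)^\ell(iz)^m}\Bigl(1-\tfrac{P(iz)+\lambda}{(-1)^\ell(iz)^m}\Bigr)^{1/2},
\]
treats $\lambda$ as \emph{small} compared to $(iz)^m$. That is only valid for $|z|\gg|\lambda|^{1/m}$, whereas the quantization contour passes through turning points at $|z|\sim|\lambda|^{1/m}$; on the relevant cycle your series does not converge, and term-by-term integration produces integer powers of $\lambda$, not the fractional powers $\lambda^{1/2+(1-j)/m}$. The correct large-$\lambda$ expansion scales $z=\lambda^{1/m}w$ and expands in $\lambda^{-1/m}$ with $P$ as the perturbation. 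The paper sidesteps a direct phase-integral computation altogether: it uses the regularized half-line integral $L(a,\lambda)$ of Theorem~\ref{prop}(iv), whose asymptotic series $\sum_j K_{m,j}(a)\lambda^{1/2+(1-j)/m}$ is taken as input (Lemma~\ref{asy_lemma}), and then combines \emph{four} shifted evaluations $L(G^{\pm s}(a),\omega^{\pm 2s}\lambda)$, $L(G^{\pm(s-1)}(a),\omega^{\pm(2s-2)}\lambda)$ coming from a two-term asymptotic of the Wronskian $\mathcal W_{-1,\ell}$. The trigonometric factors $\sin\bigl((j-1)\ell\pi/m\bigr)\cos\bigl((j-1)\pi/m\bigr)$ in $c_{\ell,j}(a)$ arise precisely from this four-term combination via Lemma~\ref{lemma_25}; you will not see them emerge from a single phase integral.

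Beyond the expansion, you are missing: (i) the two-term structure of the Wronskian near $\arg\lambda=0$ (Theorems in \S\ref{sec_5}), which is what makes ``take logarithms'' meaningful---one must first show the Stokes multiplier has the form $Ae^{\phi_1}+Be^{\phi_2}$ with both terms comparable on the eigenvalue ray; (ii) the case analysis by parity of $\ell$ and by whether $\ell<\lfloor m/2\rfloor$, $\ell=\lfloor m/2\rfloor$, or $\ell>m/2$, each requiring a different Wronskian identity; (iii) the bijection between large $n$ and $\lambda_n$, which the paper establishes via a univalence lemma (Proposition~\ref{gen_pro1}) plus Rouch\'e, not by implicit-function inversion alone; and (iv) the argument that $N_0$ depends only on $(m,\ell)$ and not on $a$, which uses Hurwitz's theorem on continuity of zeros. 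Your remark that $c_{\ell,0}$ is ``$\ell$-independent'' is also incorrect: $c_{\ell,0}$ contains the factor $\sin(\ell\pi/m)$.
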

Also, we obtain the partial reality of the eigenvalues for $\mathcal{PT}$-symmetric $H_{\ell}$.
\begin{theorem}\label{main_thm2}
If $H_{\ell}$ is $\mathcal{PT}$-symmetric, then eigenvalues are all real with at most finitely many exceptions.
\end{theorem}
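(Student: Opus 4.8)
The plan is to exploit the asymptotic expansion \eqref{main_result} together with the fact that $\mathcal{PT}$-symmetry of $H_\ell$ forces the spectrum to be invariant under complex conjugation. First I would observe that when $a\in\R^m$, the potential satisfies $\overline{V(-\overline z)}=V(z)$, and a short computation shows that if $u(z)$ solves \eqref{ptsym} with eigenvalue $\lambda$ and satisfies the boundary conditions \eqref{bdcond} (whose two rays are symmetric about $\arg z=-\frac\pi2$, hence mapped to themselves under $z\mapsto-\overline z$), then $\overline{u(-\overline z)}$ solves \eqref{ptsym} with eigenvalue $\overline\lambda$ and the same boundary conditions. Thus the multiset $\{\lambda_n\}$ is invariant under conjugation: non-real eigenvalues come in conjugate pairs $\lambda,\overline\lambda$ of equal algebraic multiplicity.

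Next I would feed this symmetry into the asymptotic formula. Since $a\in\R^m$, inspection of the definitions \eqref{dmlj_def} and \eqref{eta_def} shows that all the coefficients $c_{\ell,j}(a)$ and $\eta_\ell(a)$ are real. Now suppose, for contradiction, that there are infinitely many non-real eigenvalues; by the conjugation symmetry we may list infinitely many of them, say $\lambda_{n_k}$, together with their conjugates. Writing \eqref{main_result} for $\lambda_{n_k}$ and subtracting the same relation with $\lambda_{n_k}$ replaced by $\overline{\lambda_{n_k}}$ (which is also an eigenvalue, with some index), I get
\begin{equation}\nonumber
0=\sum_{j=0}^{m+1}c_{\ell,j}(a)\Bigl(\lambda_{n_k}^{\frac12+\frac{1-j}{m}}-\overline{\lambda_{n_k}}^{\,\frac12+\frac{1-j}{m}}\Bigr)+O\bigl(|\lambda_{n_k}|^{-\rho}\bigr).
\end{equation}
The leading term is $c_{\ell,0}(a)\bigl(\lambda_{n_k}^{\rho}-\overline{\lambda_{n_k}}^{\rho}\bigr)=2i\,c_{\ell,0}(a)\,\Im\bigl(\lambda_{n_k}^{\rho}\bigr)$, and $c_{\ell,0}(a)$ is a nonzero real constant. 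Dividing through by $|\lambda_{n_k}|^{\rho}$ and letting $k\to\infty$ (note $|\lambda_{n_k}|\to\infty$ since there are only finitely many eigenvalues of bounded magnitude), the lower-order terms $j\geq 1$ and the error term all vanish, forcing $\Im\bigl((\lambda_{n_k}/|\lambda_{n_k}|)^{\rho}\bigr)\to 0$, i.e. $\arg\lambda_{n_k}\to 0\pmod{2\pi/\rho}$. Since $0<\rho<1$ the only admissible limit in $(-\pi,\pi]$ is $\arg\lambda_{n_k}\to 0$, so the non-real eigenvalues cluster toward the positive real axis. To turn this clustering into an outright contradiction, I would then refine: knowing $\arg\lambda_{n_k}=o(1)$, re-examine the imaginary part of \eqref{main_result}. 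Because the $c_{\ell,j}$ and $\eta_\ell$ are real, taking imaginary parts of \eqref{main_result} gives
\begin{equation}\nonumber
0=\sum_{j=0}^{m+1}c_{\ell,j}(a)\,\Im\bigl(\lambda_{n_k}^{\frac12+\frac{1-j}{m}}\bigr)+O\bigl(|\lambda_{n_k}|^{-\rho}\bigr),
\end{equation}
and with $\lambda_{n_k}=r_k e^{i\theta_k}$, $\theta_k\to0$, the dominant term is $c_{\ell,0}(a)\,r_k^{\rho}\sin(\rho\theta_k)\sim c_{\ell,0}(a)\,\rho\,r_k^{\rho}\theta_k$; hence $r_k^{\rho}\theta_k=O(r_k^{-\rho})+O(r_k^{\rho-1/m})$, which bounds $\Im\lambda_{n_k}=r_k\sin\theta_k$ by $o(1)$. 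So the imaginary parts of the non-real eigenvalues tend to $0$. This still is not yet a contradiction, so the genuinely decisive step is to use the \emph{uniqueness/spacing} of large eigenvalues recorded in Lemma \ref{monoton}: the real-analytic inversion of \eqref{main_result} provides, for each large integer $n$, a unique eigenvalue near a specific point of the positive real axis, and the conjugate pair $\lambda_{n_k},\overline{\lambda_{n_k}}$ would then have to coincide with this single real eigenvalue, forcing $\Im\lambda_{n_k}=0$.

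The main obstacle, and the part requiring the most care, is precisely this last step: converting the "eigenvalues cluster near $\R_{>0}$" statement into "eigenvalues \emph{are} in $\R_{>0}$." The asymptotic expansion alone only controls eigenvalues up to an $O(\lambda_n^{-\rho})$ ambiguity, which a priori could hide a small nonzero imaginary part. The resolution is to invert \eqref{main_result}: solving for $\lambda_n$ as an asymptotic series in $n$ yields $\lambda_n = (\text{real series in } n) + O(n^{-\rho/\rho'})$ with real coefficients, so the imaginary part of $\lambda_n$ is $o(1)$ — but more importantly, the inversion shows the map $n\mapsto\lambda_n$ is, for large $n$, a bijection onto a set of simple zeros of the Stokes multiplier lying in a shrinking neighborhood of an explicit real sequence. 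Combined with the conjugation symmetry (which would pair such a zero with its conjugate, also a simple zero in the same neighborhood), a non-real $\lambda_{n_k}$ would produce \emph{two} distinct eigenvalues where the inversion permits only one. Hence all but finitely many eigenvalues are real. Finally, positivity: since the real eigenvalues satisfy \eqref{main_result} with $n+\tfrac12>0$ and $c_{\ell,0}(a)>0$ dominating, $\lambda_n$ must eventually be positive, giving the "real and positive" conclusion stated in the introduction.
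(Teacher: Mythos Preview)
Your proposal reaches the right conclusion using the same two ingredients as the paper (the conjugation symmetry of the spectrum, and Corollary~\ref{monoton}), but it takes a long detour and contains one genuine slip. In your first displayed equation you subtract \eqref{main_result} for $\lambda_{n_k}$ from \eqref{main_result} for $\overline{\lambda_{n_k}}$ and write $0$ on the left-hand side. That is not justified: $\overline{\lambda_{n_k}}$ is some $\lambda_{m_k}$ in the list, and the left side is the integer $n_k-m_k$, which you have not yet shown to be $0$ (indeed, showing it is $0$ is essentially the whole point). Your second displayed equation, obtained by taking imaginary parts of \eqref{main_result} directly, \emph{is} correct and already yields $\arg\lambda_{n_k}\to 0$ without any subtraction, so the first step is both flawed and unnecessary.

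More to the point, all of this analytic work is superfluous. The paper's proof is a two-line argument: Corollary~\ref{monoton} gives $|\lambda_n|<|\lambda_{n+1}|$ for all large $n$, so large eigenvalues have pairwise distinct moduli; since $\overline{\lambda_n}$ is also an eigenvalue and $|\overline{\lambda_n}|=|\lambda_n|$, it must coincide with $\lambda_n$, i.e.\ $\lambda_n\in\R$. Positivity then follows from $\arg\lambda_n\to 0$. Your final paragraph essentially gropes toward this (``two distinct eigenvalues where the inversion permits only one''), but the clean way to say it is simply that strict monotonicity of the moduli forbids a non-real conjugate pair. You never need to estimate $\Im\lambda_{n_k}$ or invert the asymptotic series.
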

\begin{proof}
If $u(z)$ is an eigenfunction associated with the eigenvalue $\lambda$ of a $\mathcal{PT}$-symmetric $H_{\ell}$,
then $\overline{u(-\overline{z})}$ is also an eigenfunction associated with $\overline{\lambda}$. In Corollary
\ref{monoton}, we will show that $|\lambda_n|<|\lambda_{n+1}|$ for all large  $n$ and $\arg(\lambda_n)\to 0$.
Thus, $\lambda_n$ are  real and positive for all large $n$ since $|\lambda|=|\overline{\lambda}|$.
\end{proof}
Many $\mathcal{PT}$-symmetric operators have  real eigenvalues only \cite{Dorey,Shin,Shin4}. However, there are some $\mathcal{PT}$-symmetric $H_{\ell}$ that produce a finite number of non-real eigenvalues \cite{Bender-1,Delabaere,Dorey1,Handy2}. When $H_{\ell}$ is self-adjoint, the spectrum is real. Conversely, when the spectrum is real, what can we conclude about $H_{\ell}$?  The next theorem provides a necessary and sufficient condition for $H_{\ell}$ to have infinitely many real eigenvalues.
\begin{theorem}\label{thm_111}
Suppose that $\gcd(m,\ell)=1$. Then $H_{\ell}$ with the potential $V(z)=-(iz)^m+P(iz)$ has infinitely many real
eigenvalues if and only if $H_{\ell}$ with the potential $V(z-z_0)$ for some $z_0\in\C$ is
$\mathcal{PT}$-symmetric.
\end{theorem}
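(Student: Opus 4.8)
\emph{Plan of proof.}

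\emph{The ``if'' direction.} First I would observe that translating the argument does not move the spectrum. If $v(z)=u(z-z_0)$, then $v$ solves $-v''+V(z-z_0)v=\lambda v$ precisely when $u$ solves $-u''+Vu=\lambda u$, and $v$ satisfies the boundary condition along the rays $\arg z=-\frac{\pi}{2}\pm\frac{(\ell+1)\pi}{m+2}$ precisely when $u$ does, since a ray and its translate are asymptotic to the same direction and the dominant/subdominant behaviour of a solution in a Stokes sector is insensitive to a bounded shift of the argument. Moreover $V(z-z_0)=-(i(z-z_0))^m+P(i(z-z_0))$ is again admissible, with leading term $-(iz)^m$ and remaining polynomial of degree $\le m-1$. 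Hence $H_\ell$ with potential $V(z)$ and $H_\ell$ with potential $V(z-z_0)$ have the same eigenvalues, and if the latter is $\mathcal{PT}$-symmetric, Theorem~\ref{main_thm2} gives that all but finitely many of those eigenvalues are real, so $H_\ell$ has infinitely many real eigenvalues.

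\emph{The ``only if'' direction, setup.} Assume $H_\ell$ with $V(z)=-(iz)^m+P(iz)$ has infinitely many real eigenvalues. Since $\arg\lambda_n\to0$ by Corollary~\ref{monoton}, the large real ones are positive, so there is a subsequence of real positive eigenvalues $\lambda_{n_k}\to\infty$. Put $V^{\sharp}(z):=\overline{V(-\overline z)}$, which is admissible with coefficient vector $\overline a$. As in the proof of Theorem~\ref{main_thm2}, $u(z)\mapsto\overline{u(-\overline z)}$ carries eigenfunctions of $H_\ell$ with potential $V$ at $\lambda$ to eigenfunctions of $H_\ell$ with potential $V^{\sharp}$ at $\overline\lambda$ (the pair of boundary rays is preserved by $z\mapsto-\overline z$), so the eigenvalues of the latter are $\{\overline{\lambda_n}\}$ in the same order of magnitude. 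Applying Theorem~\ref{main_thm1} to both problems at the index $n_k$ and using $\overline{\lambda_{n_k}}=\lambda_{n_k}$, then subtracting, I obtain
\[
\sum_{j=0}^{m+1}\bigl(c_{\ell,j}(a)-c_{\ell,j}(\overline a)\bigr)\,\lambda_{n_k}^{\frac12+\frac{1-j}{m}}+\bigl(\eta_\ell(a)-\eta_\ell(\overline a)\bigr)=O\bigl(\lambda_{n_k}^{-\rho}\bigr).
\]
Because the exponents $\frac12+\frac{1-j}{m}$ $(0\le j\le m+1)$ together with $0$ are pairwise distinct and $\lambda^{-\rho}=o(\lambda^{-1/2})$, dividing by the largest power present and letting $\lambda_{n_k}\to\infty$, then iterating downward, forces $c_{\ell,j}(a)=c_{\ell,j}(\overline a)$ for all $j$ and $\eta_\ell(a)=\eta_\ell(\overline a)$.

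\emph{Finishing the ``only if'' direction.} Next I would invoke Theorem~\ref{thm_112}: the coincidences just obtained of the spectral coefficients force the potential to be determined up to a translation of its argument, so $V(z-z_0)=V^{\sharp}(z)=\overline{V(-\overline z)}$ for some $z_0\in\C$. In the variable $w=iz$, with $Q(w):=-w^m+P(w)$ and $w_0:=iz_0$, this reads $Q(w-w_0)=\overline{Q(\overline w)}$. Comparing coefficients of $w^{m-1}$ gives $w_0=(\overline{a_1}-a_1)/m=-\frac{2i}{m}\,\mathrm{Im}(a_1)\in i\R$; setting $c:=-\frac{i}{m}\mathrm{Im}(a_1)$, so that $w_0=c-\overline c$, one checks that $R(w):=Q(w-c)$ satisfies $\overline{R(\overline w)}=Q(w-\overline c-w_0)=Q(w-c)=R(w)$, i.e.\ $R$ has real coefficients. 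Since $R(iz)=Q(iz-c)=Q(i(z-z_0'))$ with $z_0':=-ic=-\frac1m\mathrm{Im}(a_1)\in\R$, the potential $V(z-z_0')$ has a real coefficient vector, i.e.\ $H_\ell$ with potential $V(z-z_0')$ is $\mathcal{PT}$-symmetric.

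\emph{The main obstacle.} The decisive — and only genuinely hard — step is the appeal to Theorem~\ref{thm_112}: that matching the finitely many spectral coefficients really confines the potential to a single translation orbit. This is exactly where $\gcd(m,\ell)=1$ is indispensable. When $d=\gcd(m,\ell)>1$ the eigenvalue problem carries an additional rotational symmetry (already the self-adjoint case $m$ even, $\ell=\frac m2$ produces potentials that are real on $\R$, hence have real spectrum, without being translates of $\mathcal{PT}$-symmetric ones), so the spectral coefficients no longer separate translation orbits and the statement fails; thus the proof must route the reality hypothesis through the coprime inverse-spectral theorem. By comparison, the translation-invariance of the spectrum, the reflection $u\mapsto\overline{u(-\overline z)}$, the reality-of-powers argument, and the closing algebra with $w_0\in i\R$ are all routine.
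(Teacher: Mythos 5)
Your proof is correct, and the ``if'' direction coincides with the paper's (translate, note the spectrum is translation-invariant, apply Theorem~\ref{main_thm2}). The ``only if'' direction, however, takes a genuinely different route. The paper fixes the normalization $a_1=0$ at the outset and then, directly from the reality of $c_{\ell,j}(a)$ and $\eta_\ell(a)$, runs the same induction on $j$ as in the proof of Corollary~\ref{cor6} to conclude each $a_j$ is real. You instead introduce the reflected potential $V^{\sharp}(z)=\overline{V(-\overline z)}$ with coefficient vector $\overline a$, compare the two asymptotic expansions along a real subsequence of eigenvalues to obtain $c_{\ell,j}(a)=c_{\ell,j}(\overline a)$ and $\eta_\ell(a)=\eta_\ell(\overline a)$ (which, since these are real polynomials in $a$, is literally the same condition the paper extracts, namely $c_{\ell,j}(a)\in\R$), then invoke Theorem~\ref{thm_112}/Corollary~\ref{cor6} as a black box to obtain $V^{\sharp}(z)=V(z-z_0)$, and finally solve explicitly for a real $z_0'$ with $V(z-z_0')$ having real coefficient vector. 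Both arguments rest on the same structural facts about $c_{\ell,j}$ under $\gcd(m,\ell)=1$; yours is more modular and makes the translation explicit, at the cost of the closing algebra with $w_0\in i\R$, which the paper's $a_1=0$ normalization simply sidesteps. Your remark that the $\gcd$ hypothesis enters precisely through the inverse-spectral step is accurate.
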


The next theorem reveals an interesting feature of the
eigenvalues as a sequence:
\begin{theorem}\label{thm_112}
Suppose that $\gcd(m,\ell)=1$. Let $\{\lambda_n\}_{n\geq N_0}$ and
$\{\widetilde{\lambda}_n\}_{n\geq N_0}$ be the eigenvalues of
$H_{\ell}$  with the potentials $V$ and $\widetilde{V}$,
respectively. Suppose that
$\lambda_n-\widetilde{\lambda}_n=o\left(1\right)$ as $n\to\infty$.
Then  $\widetilde{V}(z)=V(z-z_0)$ for some $z_0\in\C$ and $\lambda_n=\widetilde{\lambda}_n$ for all $n\geq N_0$ after, if needed, small eigenvalues are reordered.
\end{theorem}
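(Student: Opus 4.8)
The plan is to substitute the hypothesis $\lambda_n-\widetilde\lambda_n=o(1)$ into the asymptotic expansion \eqref{main_result} for the two potentials, conclude that $V$ and $\widetilde V$ produce the same ``spectral data'' $\{c_{\ell,j}\}$ and $\eta_\ell$, then show by an inverse-spectral argument that this forces $\widetilde V$ to be a translate of $V$, and finally use that translates of a potential are isospectral.

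\emph{Comparing the asymptotics.} Let $\widetilde a$ be the coefficient vector of $\widetilde V$ and set $F_a(\lambda)=\sum_{j=0}^{m+1}c_{\ell,j}(a)\lambda^{\rho-j/m}+\eta_\ell(a)$. By Corollary \ref{monoton} the large $\lambda_n$ are simple, strictly increasing in modulus and satisfy $\arg\lambda_n\to0$; together with $\widetilde\lambda_n=\lambda_n+o(1)$ and $|\lambda_n|\to\infty$ this gives $\widetilde\lambda_n/\lambda_n\to1$ and $\arg\widetilde\lambda_n\to0$, so every fractional power appearing below is the unambiguous principal branch. From \eqref{main_result} applied to $a$ and to $\widetilde a$ we get $F_a(\lambda_n)-F_{\widetilde a}(\widetilde\lambda_n)=O(\lambda_n^{-\rho})$; since $F_{\widetilde a}'(\lambda)=O(\lambda^{\rho-1})$ on the segment of length $o(1)$ joining $\lambda_n$ to $\widetilde\lambda_n$, and $-\rho<\rho-1$, this yields
\begin{equation*}
F_a(\lambda_n)-F_{\widetilde a}(\lambda_n)=o\!\left(\lambda_n^{\rho-1}\right).
\end{equation*}
The left side is a finite combination of distinct complex powers $\lambda_n^{\rho-j/m}$ ($0\le j\le m+1$) and a constant; as $\rho<1$, for $0\le j\le m-1$ the power is of strictly larger order than $\lambda_n^{\rho-1}$ and the constant does not tend to $0$, so peeling off the terms one exponent at a time and then dividing the remainder by $\lambda_n^{\rho-1}$ forces $c_{\ell,j}(a)=c_{\ell,j}(\widetilde a)$ for $0\le j\le m$ and $\eta_\ell(a)=\eta_\ell(\widetilde a)$ (when $m$ is even the index $j=(m+2)/2$ merges with the constant and only the combined identity is obtained there).

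\emph{Isospectrality of translates and reduction.} If $\widehat V(z)=V(z-z_0)$ and $u$ is an eigenfunction of $H_\ell$ with potential $V$, then $u(\cdot-z_0)$ satisfies the equation with potential $\widehat V$ and still decays along the same two rays, because a finite translation does not move the Stokes directions; hence $H_\ell$ with $\widehat V$ has exactly the same eigenvalues with the same multiplicities. Moreover $\widehat V$ is again of the form $-(iz)^m+\widehat P(iz)$ with $\deg\widehat P\le m-1$, the coefficient vector $\widehat a$ depends polynomially on $z_0$, and $\widehat a_1-a_1$ is a nonzero constant times $z_0$; thus each translation orbit meets the slice $\{a_1=0\}$ in exactly one point, and $c_{\ell,j},\ \eta_\ell$ are constant along orbits. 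After translating $V$ and $\widetilde V$ so that $a_1=\widetilde a_1=0$ it therefore suffices to prove $a=\widetilde a$.

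\emph{The inverse-spectral step and conclusion.} Here the explicit formulas \eqref{dmlj_def} and \eqref{eta_def} are used: on the slice $a_1=0$ they present $c_{\ell,j}(a)$ as a function of $a_2,\dots,a_j$ that is affine in $a_j$ with leading coefficient a period integral which is nonzero precisely because $\gcd(m,\ell)=1$ --- the mechanism underlying Theorem \ref{thm_111}. Reading the identities $c_{\ell,j}(a)=c_{\ell,j}(\widetilde a)$ from the previous step in the order $j=2,3,\dots,m$ then recovers $a_2,a_3,\dots,a_m$ one after another, so $a=\widetilde a$; the single degenerate index $j=(m+2)/2$ for even $m$ is handled by combining its relation with $\eta_\ell(a)=\eta_\ell(\widetilde a)$ and the explicit form of \eqref{eta_def}. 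Hence $\widetilde V(z)=V(z-z_0)$, the eigenvalue multisets of $V$ and $\widetilde V$ coincide by the isospectrality above, and since the large eigenvalues are simple and ordered by modulus, $\lambda_n=\widetilde\lambda_n$ for all large $n$; the finitely many small eigenvalues are then reindexed to match. The substance of the proof, and the step I expect to be hardest, is this last one: one must read off from \eqref{dmlj_def}--\eqref{eta_def} both the triangular dependence of $c_{\ell,j}$ on $a_1,\dots,a_j$ and, decisively, the non-vanishing of the relevant period integrals, which is exactly where the hypothesis $\gcd(m,\ell)=1$ cannot be dispensed with; by comparison, the term-peeling is routine apart from attention to the complex fractional powers and the even-$m$ coincidence.
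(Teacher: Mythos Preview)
Your proposal is correct and follows essentially the same route as the paper: deduce equality of the spectral coefficients from the asymptotic expansion, then apply the triangular inverse-spectral argument of Corollary~\ref{cor6} (which is exactly what you unpack in your last paragraph). The only cosmetic difference is that the paper first passes to the inverted expansion \eqref{somm_eq}, so that $\lambda_n-\widetilde\lambda_n=o(1)$ immediately gives $d_{\ell,j}(a)=d_{\ell,j}(\widetilde a)$ without your mean-value step, and then quotes Corollary~\ref{cor6}; your direct use of \eqref{main_result} with the $c_{\ell,j}$ is equivalent and equally valid.
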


The asymptotic expansions of the eigenvalues of $H_{\ell}$ with $\ell=\lfloor\frac{m}{2}\rfloor$ have been studied
in, for example, \cite{Fedoryuk,HR,MAS}.
 Maslov \cite{MAS}  computed the first three terms of asymptotic expansions of $\lambda_n^{\frac{3}{4}}$,
 where $\lambda_n$ are the eigenvalues of $-\frac{d^{2}}{dx^{2}}u+x^{4}u=\lambda u,\,\, u\in L^2(\R).$
 Helffer and Robert \cite{HR} considered
\begin{equation}\nonumber
-\frac{d^{2k}}{dx^{2k}}u+(x^{2m}+p(x))u=\lambda u,\quad u\in
L^2(\R),
\end{equation}
where $k,\, m$ are positive integers and where $p(\cdot)$ is a {\it real}  polynomial of degree at most $2m-1$.
They obtained the existence of asymptotic expansions of the eigenvalues to all orders, and suggested an explicit
way of computing the coefficients of the asymptotic expansion. In particular, for the case when the potential is
$\varepsilon x^4+x^2$, $\varepsilon>0$, Helffer and Robert \cite{HR} computed the first nine terms of the
asymptotic expansion of $\lambda_n^{\frac{3}{4}}$. Also, Fedoryuk \cite[\S 3.3]{Fedoryuk} considered \eqref{ptsym}
with complex polynomial potentials and with \eqref{bdcond} for $\ell=\lfloor\frac{m}{2}\rfloor$ and computed the
first term in the asymptotic expansion. Also, Sibuya \cite{Sibuya} computed the first term in the asymptotic
expansion for $\ell=1$.

This paper is organized as follows. In Section \ref{sec_cor}, we define $c_{\ell,j}(a)$ and  some other notations. Also, we invert \eqref{main_result}, expressing $\lambda_n$ as a series of fractional powers of the index $n$ and  prove Theorems \ref{thm_111} and \ref{thm_112}, and other interesting direct and inverse spectral results.  In Section \ref{prop_sect}, we introduce some properties of the solutions of the differential equation in \eqref{ptsym}, due to  Hille \cite{Hille} and Sibuya \cite{Sibuya}. We study the asymptotic of the Stokes multiplier associated with $H_1$ in Section \ref{sec_4} and treat the general case $H_{\ell}$ in Section \ref{sec_5}. In Section \ref{asymp_eigen}, we relate the eigenvalues of $H_{\ell}$ with the zeros of the Stokes multiplier. We prove Theorem \ref{main_thm1} for $1\leq \ell<\frac{m}{2}$ in Section \ref{sec_7} and for $\ell\geq\frac{m}{2}$ in Section \ref{sec_8}.

%%%%%%%%%%%%%%%%%%%%%%%%%%%%%%%%%%%%%%%%%%%%%%%%%%%%%%%%%%%%%%%%%%%%%
%%%%%%%%%%%%%%%%%%%%%%%%%%%%%%%%%%%%%%%%%%%%%%%%%%%%%%%%%%%%%%%%%%%%%
%%%%%%%%%%%%%%%%%%%%%%%%%%%%%%%%%%%%%%%%%%%%%%%%%%%%%%%%%%%%%%%%%%%%%
\section{Notations and corollaries}\label{sec_cor}
In this section, we will define $c_{\ell,j}(a)$ and some other notations and introduce some corollaries of Theorem \ref{main_thm1}.

We define, for  nonnegative integers $k, j$,
\begin{equation}
\text{$b_{j,k}( {a})$ is the coefficient of $\frac{1}{z^{j}}$ in
${\frac{1}{2}\choose{k}}\left(\frac{a_1}{z}+\frac{a_2}{z^2}+\dots+\frac{a_m}{z^m}\right)^k$\,\,\, and}
\end{equation}
\begin{equation}
b_j( {a})=\sum_{k=0}^j b_{j,k}( {a}),\quad j\in\N.
\end{equation}
Notice that $b_{0,0}(a)=b_0(a)=1$, $b_{j,0}(a)=0$ if $j\geq 1$, and $b_{j,k}(a)=0$ if $j<k$ or $k<\frac{j}{m}$.
We will also use
\begin{equation}\label{Kmj=}
K_{m,j}( {a})=\sum_{k=0}^j K_{m,j,k}\,b_{j,k}( {a}) \,\,\, \text{for $j\geq 0$,}
\end{equation}
where for $k<\frac{j}{m}$, $K_{m,j,k}=0$ and for $\frac{j}{m}\leq k\leq j$,
\begin{align}\nonumber
K_{m,j,k}&:=\left\{
                    \begin{array}{cl}
                    \frac{B\left(\frac{1}{2},1+\frac{1}{m}\right)}{2\cos\left(\frac{\pi}{m}\right)}
\quad &\text{if $j=k=0$},\\
&\\
-\frac{2}{m}
\quad &\text{if $j=k=1$},\\
&\\
\frac{2}{m}\left(\ln 2-\sum_{s=1}^{k-1}\frac{1}{2s-1}\right) \, &\text{if $j=\frac{m}{2}+1$, $m$ even,}\\
&\\
                  \frac{1}{m}B\left(k-\frac{j-1}{m},\,\frac{j-1}{m}-\frac{1}{2}\right) \, &
                  \text{if  $j\not=1$ or $j\not=\frac{m}{2}+1$.}
\end{array}\right.\nonumber
\end{align}
Now we are ready to define $c_{\ell,j}(a)$ as follows: for $1\leq j\leq m+1$,
\begin{equation}\label{dmlj_def}
c_{\ell,j}(a)=-\frac{2}{\pi}\sum_{k=0}^j(-1)^{(\ell+1)k}K_{m,j,k}b_{j,k}(a)\sin\left(\frac{(j-1)\ell\pi}{m}\right)\cos\left(\frac{(j-1)\pi}{m}\right),
\end{equation}
and we also define
\begin{equation}\label{eta_def}
\eta_{\ell}(a)=\left\{
              \begin{array}{rl}
             (-1)^{\frac{\ell-1}{2}}\frac{2\nu(a)}{m}\quad &\text{if $\ell$ is odd,}\\
             &\\
              0 \quad &\text{if otherwise}
              \end{array}
                         \right. \,\, \text{and }\,\, \mu( {a})=\frac{m}{4}-\nu( {a}),
\end{equation}
where
\begin{eqnarray}
\nu( {a})=\left\{
              \begin{array}{rl}
              b_{\frac{m}{2}+1}( {a}) \quad &\text{if $m$ is even,}\\
              &\\
              0 \quad &\text{if $m$ is odd.}
              \end{array}
                         \right. \nonumber
\end{eqnarray}

\subsection{Further direct spectral results}
One can invert the asymptotic formulas \eqref{main_result} to obtain formulas for $\lambda_n$ in terms of $n$.
\begin{corollary}
 One can compute $d_{\ell,j}( {a})$ explicitly such that
\begin{equation}\label{somm_eq}
\lambda_n=\sum_{j=0}^{m+1}d_{\ell,j}(
{a})\left(n+\frac{1}{2}\right)^{\frac{2m}{m+2}\left(1-\frac{j}{m}\right)}+O\left(n^{-\frac{4}{m+2}}\right),
\end{equation}
where $d_{\ell,0}(a)=
\left(\pi^{-1}B\left(\frac{1}{2},1+\frac{1}{m}\right)\sin\left(\frac{\ell\pi}{m}\right)\right)^{-\frac{2m}{m+2}}>0$.
\end{corollary}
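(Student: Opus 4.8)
The plan is to regard \eqref{main_result} as an implicit equation for $\lambda_n$ in terms of $N:=n+\frac12$ and to invert it as a generalized (Puiseux-type) series near $\lambda_n=\infty$. Write $f(\lambda):=\sum_{j=0}^{m+1}c_{\ell,j}(a)\lambda^{\rho-\frac{j}{m}}+\eta_{\ell}(a)$, so that \eqref{main_result} reads $N=f(\lambda_n)+O(\lambda_n^{-\rho})$. The exponents of $\lambda$ appearing in $f$ are $\rho-\frac{j}{m}$, $0\le j\le m+1$, the dominant one being $\rho$; hence inverting $f$ should produce $\lambda_n$ as a descending series in powers of $N^{1/\rho}=N^{\frac{2m}{m+2}}$ whose successive exponents drop by $N^{-\frac{1}{m\rho}}=N^{-\frac{2}{m+2}}$, which is exactly the shape of \eqref{somm_eq}, since $\frac{2m}{m+2}\bigl(1-\frac{j}{m}\bigr)=\frac{2m}{m+2}-\frac{2j}{m+2}$.

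First I would extract the leading order. By \eqref{main_result} and the facts recalled in the proof of Theorem \ref{main_thm2} --- that $|\lambda_n|\to\infty$ and $\arg\lambda_n\to0$ (Corollary \ref{monoton}) --- one has $c_{\ell,0}(a)\lambda_n^{\rho}=N\bigl(1+o(1)\bigr)$, hence $\lambda_n=c_{\ell,0}(a)^{-1/\rho}N^{1/\rho}\bigl(1+o(1)\bigr)$; in particular $\lambda_n\asymp n^{\frac{2m}{m+2}}$, so (after enlarging $N_0$ if necessary) $N^{-\frac{2}{m+2}}\asymp\lambda_n^{-1/m}$ is a genuinely small parameter for all $n\ge N_0$. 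Evaluating \eqref{dmlj_def} at $j=0$ --- where only the $k=0$ term survives, with $b_{0,0}(a)=1$ and $K_{m,0,0}=\tfrac12 B(\tfrac12,1+\tfrac1m)\sec(\tfrac{\pi}{m})$ --- gives
\begin{equation}\nonumber
c_{\ell,0}(a)=\frac{2}{\pi}\cos\!\left(\tfrac{\pi}{m}\right)\sin\!\left(\tfrac{\ell\pi}{m}\right)K_{m,0,0}=\frac{1}{\pi}\,B\!\left(\tfrac12,1+\tfrac1m\right)\sin\!\left(\tfrac{\ell\pi}{m}\right),
\end{equation}
which is independent of $a$ and strictly positive because $B(\tfrac12,1+\tfrac1m)>0$ and $0<\tfrac{\ell\pi}{m}<\pi$. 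Therefore
\[
d_{\ell,0}(a)=c_{\ell,0}(a)^{-1/\rho}=\left(\pi^{-1}B\!\left(\tfrac12,1+\tfrac1m\right)\sin\!\left(\tfrac{\ell\pi}{m}\right)\right)^{-\frac{2m}{m+2}}>0,
\]
the asserted value.

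Next I would determine the remaining coefficients by the method of undetermined coefficients (equivalently, Lagrange inversion). Substituting the ansatz $\lambda_n=\sum_{j=0}^{m+1}d_{\ell,j}(a)N^{\frac{2m}{m+2}(1-j/m)}+R_n$ into the right-hand side of \eqref{main_result}, I would expand each factor $\lambda_n^{\rho-j/m}=\lambda_n^{\rho}\,\lambda_n^{-j/m}$ by the generalized binomial theorem --- legitimate as an asymptotic expansion since $\lambda_n^{-1/m}\to 0$ --- and then collect, for $j=0,1,\dots,m+1$, the coefficient of $N^{\frac{2m}{m+2}(1-j/m)}$; the constant $\eta_{\ell}(a)$ enters only at $N^{0}$, i.e. at $j=m$. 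Matching against the left-hand side $N$ gives a triangular system that determines $d_{\ell,1}(a),\dots,d_{\ell,m+1}(a)$ uniquely, and explicitly, in terms of $c_{\ell,0}(a),\dots,c_{\ell,m+1}(a)$ and $\eta_{\ell}(a)$.

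It remains to bound $R_n$. Truncating the ansatz after $j=m+1$ discards a term of order $N^{\frac{2m}{m+2}(1-\frac{m+2}{m})}=N^{-\frac{4}{m+2}}$, while the remainder $O(\lambda_n^{-\rho})=O(N^{-1})$ of \eqref{main_result}, divided by the local derivative $f'(\lambda_n)\asymp\lambda_n^{\rho-1}\asymp N^{(\rho-1)/\rho}=N^{-\frac{m-2}{m+2}}$, also contributes $O\bigl(N^{-1}N^{\frac{m-2}{m+2}}\bigr)=O\bigl(N^{-\frac{4}{m+2}}\bigr)$; hence $R_n=O\bigl(n^{-\frac{4}{m+2}}\bigr)$, which is \eqref{somm_eq}. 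The only substantial work is this bookkeeping, and the step I expect to be the most delicate is the propagation of the $O(\lambda_n^{-\rho})$ error through the nonlinear inversion: one must keep the entire lower-order string in \eqref{main_result}, not merely its leading term, to see that the inversion error stays within $O(n^{-4/(m+2)})$.
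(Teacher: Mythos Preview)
Your approach---inverting the asymptotic relation \eqref{main_result} as a Puiseux-type series via undetermined coefficients and then controlling the remainder---is exactly what the paper does; the paper simply asserts that \eqref{main_result} ``can be solved for $\lambda_n$'' and defers the bookkeeping to \cite[\S 5]{Shin2}, so you have in fact supplied more detail (including the explicit computation of $c_{\ell,0}(a)$ and hence $d_{\ell,0}(a)$) than the paper itself. One small caution: your appeal to Corollary~\ref{monoton} for $\arg\lambda_n\to0$ is circular, since that corollary is proved \emph{from} \eqref{somm_eq}; instead, either note that the proof of Theorem~\ref{main_thm1} already places all large eigenvalues in $|\arg\lambda|\le\delta$, or deduce it directly from \eqref{main_result} together with $c_{\ell,0}(a)>0$.
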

\begin{proof}
Equations \eqref{main_result} is an asymptotic equation and it can be solved for $\lambda_n$, resulting in
\eqref{somm_eq}. For details,  see, for example, \cite[\S 5]{Shin2}.
\end{proof}

In the next corollary, we provide an asymptotic formula for the nearest neighbor spacing of the eigenvalues. And
the large eigenvalues increase monotonically in magnitude and have their argument approaching zero.
\begin{corollary}\label{monoton}
The space between successive eigenvalues is
\begin{equation}
 \lambda_{n+1}-\lambda_n
\underset{n\to+\infty}{=}\frac{2m}{m+2}\,d_{\ell,0} \cdot
\left(n+\frac{1}{2}\right)^{\frac{m-2}{m+2}}+o\left(n^{\frac{m-2}{m+2}}\right).
\end{equation}
In particular, $\lim_{n\to\infty}\left|\lambda_{n+1}-\lambda_n\right|=\infty$ and
$\lim_{n\to+\infty}\arg(\lambda_n)=0.$ Hence:

\begin{equation}\nonumber
\left|\lambda_n\right|<\left|\lambda_{n+1}\right|\,\,\,\text{for all large $n$}.
\end{equation}
\end{corollary}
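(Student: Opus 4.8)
The plan is to obtain everything from the inverted expansion \eqref{somm_eq}, whose leading coefficient $d_{\ell,0}(a)$ is a \emph{positive real} number. Abbreviate $\alpha_j:=\frac{2m}{m+2}\left(1-\frac{j}{m}\right)$, so that \eqref{somm_eq} reads $\lambda_n=\sum_{j=0}^{m+1}d_{\ell,j}(a)\left(n+\frac12\right)^{\alpha_j}+O\!\left(n^{-\frac{4}{m+2}}\right)$ with $\alpha_0=\frac{2m}{m+2}$. First I would compute the spacing by subtracting \eqref{somm_eq} at $n+1$ and at $n$ term by term. The mean value theorem gives, for each fixed real $\alpha$, $\left(n+\frac32\right)^{\alpha}-\left(n+\frac12\right)^{\alpha}=\alpha\left(n+\frac12\right)^{\alpha-1}+O\!\left(n^{\alpha-2}\right)$. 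Applying this with $\alpha=\alpha_j$: the $j=0$ term produces $\frac{2m}{m+2}\,d_{\ell,0}\left(n+\frac12\right)^{\frac{m-2}{m+2}}$ (since $\alpha_0-1=\frac{m-2}{m+2}$); for every $j\ge1$ one has $\alpha_j-1\le\frac{m-4}{m+2}<\frac{m-2}{m+2}$, so those terms are $o\!\left(n^{\frac{m-2}{m+2}}\right)$; and since $m\ge3$ the difference of the two $O\!\left(n^{-4/(m+2)}\right)$ remainders is $O\!\left(n^{-\frac{4}{m+2}}\right)=o\!\left(n^{\frac{m-2}{m+2}}\right)$. Collecting these gives the asserted formula for $\lambda_{n+1}-\lambda_n$.

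The two stated limits then follow at once. Since $m\ge3$ we have $\frac{m-2}{m+2}>0$ and $d_{\ell,0}>0$, so $\left|\lambda_{n+1}-\lambda_n\right|=\frac{2m}{m+2}d_{\ell,0}\left(n+\frac12\right)^{\frac{m-2}{m+2}}(1+o(1))\to\infty$. For $\arg(\lambda_n)\to0$, divide \eqref{somm_eq} by its leading term to get $\lambda_n\big/\!\left(d_{\ell,0}\left(n+\frac12\right)^{\frac{2m}{m+2}}\right)=1+\sum_{j=1}^{m+1}\frac{d_{\ell,j}(a)}{d_{\ell,0}(a)}\left(n+\frac12\right)^{-\frac{2j}{m+2}}+O\!\left(n^{-\frac{2m+4}{m+2}}\right)=1+o(1)$; thus $\lambda_n$ is a positive real multiple of $1+o(1)$, so $\arg(\lambda_n)\to0$ (and, incidentally, $\Re\lambda_n\to+\infty$).

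For the concluding inequality $\left|\lambda_n\right|<\left|\lambda_{n+1}\right|$ I would \emph{not} estimate $\left|\lambda_{n+1}\right|-\left|\lambda_n\right|$ directly: the complex correction terms in $\lambda_n$ are of order $n^{2(m-1)/(m+2)}$, far larger than the spacing $n^{(m-2)/(m+2)}$, so the triangle inequality is too lossy. Instead I would use
\begin{equation}\nonumber
\left|\lambda_{n+1}\right|^2-\left|\lambda_n\right|^2=2\,\Re\!\big(\overline{\lambda_n}\,(\lambda_{n+1}-\lambda_n)\big)+\left|\lambda_{n+1}-\lambda_n\right|^2,
\end{equation}
which exploits that $\lambda_n$ and the increment $\lambda_{n+1}-\lambda_n$ both point in asymptotically the positive real direction. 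Writing $\lambda_n=\left|\lambda_n\right|e^{i\theta_n}$ with $\theta_n\to0$ and $\lambda_{n+1}-\lambda_n=\left(n+\frac12\right)^{\frac{m-2}{m+2}}\left(\frac{2m}{m+2}d_{\ell,0}+o(1)\right)$, one finds $\Re\!\big(\overline{\lambda_n}(\lambda_{n+1}-\lambda_n)\big)=\left|\lambda_n\right|\left(n+\frac12\right)^{\frac{m-2}{m+2}}\left(\frac{2m}{m+2}d_{\ell,0}\cos\theta_n+o(1)\right)>0$ for all large $n$, and $\left|\lambda_{n+1}-\lambda_n\right|^2>0$; hence $\left|\lambda_{n+1}\right|^2>\left|\lambda_n\right|^2$ eventually. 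The only genuinely delicate point in the whole proof is this last step — that one must compare \emph{directions} rather than magnitudes — and it is resolved by the displayed identity; the rest is routine asymptotic bookkeeping with \eqref{somm_eq}.
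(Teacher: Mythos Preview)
Your proof is correct and follows essentially the same route the paper indicates: the paper's own proof is a one-line pointer saying the claims follow from \eqref{somm_eq} and the generalized binomial expansion, with details deferred to \cite[\S 6]{Shin2}. Your term-by-term subtraction via the mean value theorem is exactly that binomial bookkeeping, and your derivations of $|\lambda_{n+1}-\lambda_n|\to\infty$ and $\arg(\lambda_n)\to 0$ are the natural ones. The one place you supply genuinely more than the paper is the final inequality $|\lambda_n|<|\lambda_{n+1}|$: your identity $|\lambda_{n+1}|^2-|\lambda_n|^2=2\,\Re\big(\overline{\lambda_n}(\lambda_{n+1}-\lambda_n)\big)+|\lambda_{n+1}-\lambda_n|^2$, combined with $\theta_n\to 0$ and the real-positive leading spacing, is a clean and correct way to bypass the cancellation issue you identified, and is not spelled out in the paper itself.
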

\begin{proof}
 These claims are consequences of \eqref{somm_eq} and the generalized binomial expansion. For details,  see, for example, \cite[\S 6]{Shin2}.
\end{proof}

Suppose that $\lambda_n$ are eigenvalues of $H_{\ell}$ for some  $P$. Then the degree $m$ of the polynomial
potential can be recovered by
\begin{equation}\nonumber
\frac{2m}{m+2}=\lim_{n\to\infty}\frac{\ln(|\lambda_n|)}{\ln n}<2.
\end{equation}
The next corollary shows that one can recover the polynomial potential  from the asymptotic formula for the
eigenvalues.
\begin{corollary}\label{cor6}
Suppose that $\gcd(m,\ell)=1$ and that $\{\lambda_n\}_{n\geq N_0}$ are the eigenvalues of $H_{\ell}$ with the
potential $V$. If
\begin{equation}\label{bohr_eq}
\sum_{j=0}^{m+1}c_j^*
\lambda_n^{\frac{1}{2}+\frac{1-j}{m}}+O\left(\lambda_n^{-\frac{1}{2}-\frac{1}{m}}\right){=}\left(n+\frac{1}{2}\right)
\end{equation}
as $n\to\infty$ for some $c_j^*\in\C$, or if
\begin{equation}\label{bohr_eq1}
\lambda_n=\sum_{j=0}^{m+1}d_j^*\left(n+\frac{1}{2}\right)^{\frac{2m}{m+2}\left(1-\frac{j}{m}\right)}+O\left(n^{-\frac{4}{m+2}}\right),
\end{equation}
for some $d_j^*\in\C$, then  there exists $V_0$ such that for each polynomial $V$ with which $H_{\ell}$ generates
$\{\lambda_n\}_{n\geq N_0}$, $V(z)=V_0(z-z_0)$ for some $z_0\in\C$.
\end{corollary}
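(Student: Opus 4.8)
The plan is to show that the numbers $\{c_j^*\}$ (equivalently, via \eqref{somm_eq}, the numbers $\{d_j^*\}$) form a complete list of translation-invariant spectral data for $H_{\ell}$, and then to read off the conclusion from the rigidity already proved in Theorem \ref{thm_112}.

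First I would rewrite Theorem \ref{main_thm1} in the shape of \eqref{bohr_eq}. When $m$ is even the exponent $\frac12+\frac{1-j}{m}$ vanishes exactly at $j=\frac m2+1$, so the additive constant $\eta_{\ell}(a)$ is of the same order as that one term; when $m$ is odd no exponent vanishes and $\eta_{\ell}(a)=0$. Setting $\widehat c_{\ell,j}(a)=c_{\ell,j}(a)$ for $j\neq\frac m2+1$ and $\widehat c_{\ell,\frac m2+1}(a)=c_{\ell,\frac m2+1}(a)+\eta_{\ell}(a)$ (only when $m$ is even), \eqref{main_result} becomes
\[
n+\frac12=\sum_{j=0}^{m+1}\widehat c_{\ell,j}(a)\,\lambda_n^{\frac12+\frac{1-j}{m}}+O\!\left(\lambda_n^{-\rho}\right).
\]
I would also record that the inversion yielding \eqref{somm_eq} is a lower-triangular substitution whose diagonal entry is the nonzero constant $\widehat c_{\ell,0}$, so each $d_{\ell,j}$ is a fixed universal function of $\widehat c_{\ell,0},\dots,\widehat c_{\ell,j}$ and conversely; in particular \eqref{bohr_eq} with data $\{c_j^*\}$ and \eqref{bohr_eq1} with the corresponding data $\{d_j^*\}$ are equivalent hypotheses, so it suffices to argue from \eqref{bohr_eq}.

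Next I would extract the invariants. Fix any polynomial $V$, with coefficient vector $a$, such that $H_{\ell}$ with potential $V$ realizes $\{\lambda_n\}_{n\ge N_0}$ as its eigenvalues (one exists by hypothesis). Subtracting the normalized Theorem \ref{main_thm1} from \eqref{bohr_eq} gives, for large $n$,
\[
\sum_{j=0}^{m+1}\bigl(c_j^*-\widehat c_{\ell,j}(a)\bigr)\,\lambda_n^{e_j}=O\!\left(\lambda_n^{-\frac12-\frac1m}\right),\qquad e_j:=\tfrac12+\tfrac{1-j}{m}.
\]
By Corollary \ref{monoton} we have $|\lambda_n|\to\infty$ and $\arg\lambda_n\to0$, so the powers are unambiguous for large $n$; the $e_j$ are strictly decreasing in $j$ and each $e_j\ge e_{m+1}=-\frac12$ lies strictly above the error exponent $-\frac12-\frac1m$. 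Dividing by $\lambda_n^{e_0}$ and letting $n\to\infty$ (using $e_j<e_0$ for $j\ge1$, $-\frac12-\frac1m<e_0$, and $|\lambda_n|\to\infty$) annihilates all terms with $j\ge1$ and the remainder, forcing $c_0^*=\widehat c_{\ell,0}(a)$; iterating with $\lambda_n^{e_1},\lambda_n^{e_2},\dots$ yields $c_j^*=\widehat c_{\ell,j}(a)$ for every $j$. Hence $\{c_j^*\}$ is exactly the coefficient list that Theorem \ref{main_thm1} attaches to $V$, and likewise to any other admissible polynomial $\widetilde V$ with vector $\widetilde a$; therefore $\widehat c_{\ell,j}(a)=\widehat c_{\ell,j}(\widetilde a)$ for all $j$.

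Finally I would close with rigidity. From $\widehat c_{\ell,j}(a)=\widehat c_{\ell,j}(\widetilde a)$ and the triangular relation above, $d_{\ell,j}(a)=d_{\ell,j}(\widetilde a)$ for all $j$, so $\lambda_n-\widetilde\lambda_n=O\!\left(n^{-4/(m+2)}\right)=o(1)$ by \eqref{somm_eq}. Since $\gcd(m,\ell)=1$, Theorem \ref{thm_112} applies and gives $\widetilde V(z)=V(z-z_0)$ for some $z_0\in\C$; taking $V_0$ to be one fixed such $V$ finishes the proof. The main obstacle is essentially bookkeeping: absorbing the constant $\eta_{\ell}(a)$ into the $j=\frac m2+1$ term when $m$ is even, and checking that the error exponent in \eqref{bohr_eq} sits strictly below every $e_j$ so that the coefficients can be peeled off one at a time. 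Once all the $\widehat c_{\ell,j}$ are matched, the rigidity step is precisely Theorem \ref{thm_112} and introduces nothing new.
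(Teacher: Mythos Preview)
Your argument is circular. In this paper, Theorem~\ref{thm_112} is \emph{derived from} Corollary~\ref{cor6}: its proof reads ``Then $d_{\ell,j}(a)=d_{\ell,j}(\widetilde a)$ for all $1\le j\le m+1$. Then Corollary~\ref{cor6} completes proof.'' So you cannot invoke Theorem~\ref{thm_112} as the rigidity input for Corollary~\ref{cor6}. Notice, moreover, that if Theorem~\ref{thm_112} were already available, the asymptotic hypotheses \eqref{bohr_eq}--\eqref{bohr_eq1} would be entirely superfluous: two potentials $V,\widetilde V$ generating the \emph{same} sequence $\{\lambda_n\}$ satisfy $\lambda_n-\widetilde\lambda_n\equiv 0=o(1)$, and Theorem~\ref{thm_112} would finish immediately without any of your Steps~1--2. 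The fact that the hypotheses are present should signal that Corollary~\ref{cor6} is the primitive result.

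Your Step~2 (peeling off the coefficients to get $c_j^*=\widehat c_{\ell,j}(a)$) is correct and useful; what is missing is the actual inverse step. The paper supplies this directly: it records that (i) the $c_{\ell,j}(a)$ are real polynomials in $a$, (ii) $c_{\ell,0}$ is a constant, and (iii) for $j\ne 1,\tfrac m2+1$ and $\gcd(m,\ell)=1$, $c_{\ell,j}(a)$ is a non-constant linear function of $a_j$ depending only on $a_1,\dots,a_j$ (with $\eta_\ell$ playing the role of $c_{\ell,\frac m2+1}$ when $m$ is even). Fixing $a_1=0$ to kill the translation freedom, one then solves $c_{\ell,j}(a)=c_j^*$ inductively for $a_2,\dots,a_m$, which produces $V_0$ and forces any other admissible $V$ to be a translate of it. That triangular inversion on the coefficient side---not an appeal to Theorem~\ref{thm_112}---is the substance of the corollary, and it is exactly what then feeds the proofs of Theorems~\ref{thm_111} and~\ref{thm_112}.
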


\begin{proof}
The coefficients $c_{\ell,j}( {a})$ and  $d_{\ell,j}( {a})$  have the following properties.
\begin{itemize}
\item[]{(i)} The $c_{\ell,j}( {a})$ and  $d_{\ell,j}( {a})$ are all {\it real} polynomials in terms of the coefficients $ {a}$ of $P(x)$.
\item[]{(ii)} The coefficients $c_{\ell,0}( {a})$ and  $d_{\ell,0}( {a})$ do not depend on $ {a}$ (they are  constants).
\item[]{(iii)} For $1\leq j\leq m$, the polynomials $c_{\ell,j}( {a})$ and  $d_{\ell,j}( {a})$ depend only on   $a_1,\,a_2,\dots,a_{j}$.
Furthermore, if $\gcd(m,\ell)=1$ and if $j\not=1,\,\frac{m}{2}+1$, then $c_{\ell,j}( {a})$ and  $d_{\ell,j}( {a})$ are
non-constant linear functions of $a_j$.
\end{itemize}
Here we will sketch the proof when \eqref{bohr_eq} is assumed. First, notice that $c_1^*=0$. Otherwise,  $\lambda_n$
are not the eigenvalues of $H_{\ell}$, according to  \eqref{dmlj_def}. Choose $a_1=0$. Then since $c_{\ell,2}(
{a})$ is a non-constant linear function of $a_2$ and depends on $a_1$ and $a_2$, $c_{\ell, 2}( {a})=c_2^*$
determines $a_2$. If $m$ is even, $c_{\ell,\frac{m}{2}+1}(a)=0$ and we use $\eta_{\ell}(a)$ instead of
$c_{\ell,\frac{m}{2}+1}(a)$. Suppose that for $2\leq j\leq m-1$, the $c_2^*,\dots,c_j^*$ uniquely determine
$a_2,\, a_3,\dots, a_j$. Then by (iii), $c_{\ell,j+1}( {a})=c_{j+1}^*$ determines $a_{j+1}$ uniquely. So by
induction, one can determine all $a_j$ for $2\leq j\leq m$ and $V_0$ is the potential with these $a_j$'s. Then
$V(z)=V_0(z-z_0)$ for some $z_0\in\C$. Otherwise, the eigenvalues $\{\lambda_n\}$ do not satisfy \eqref{bohr_eq}.

The case when \eqref{bohr_eq1} holds can be handled similarly.
\end{proof}

Next, we will prove Theorems \ref{thm_111} and \ref{thm_112}.
\begin{proof}[Proof of Theorems ~\ref{thm_111}]
If $V(z-z_0)$ for some $z_0\in\C$ is $\mathcal{PT}$-symmetric, then by Theorem \ref{main_thm2}, all but finitely many eigenvalues are real and hence, there are infinitely many real eigenvalues.

Suppose that $H_{\ell}$ with the potential $V(z)$ has infinitely many real eigenvalues. Then we can always find $z_0\in\C$ so that $V(z-z_0)$ has no $z^{m-1}$-term, that is, $a_1=0$. We will show that $V(z-z_0)$ is $\mathcal{PT}$-symmetric. Since there are infinitely many real eigenvalues, $c_{\ell,j}(a)$, $0\leq j\leq m+1$, and $\eta_{\ell}$ in \eqref{main_result} are real. Also, the sine term in \eqref{dmlj_def} does not vanish except for $j=1$. And the cosine term does not vanish except for $j=\frac{m}{2}+1$ when $m$ is even in which $\eta_{\ell}$ replaces the role of $c_{\ell,\frac{m}{2}}$.

Like we did for proof of Corollary \ref{cor6}, since $a_1=0$, from the above properties of $c_{\ell,j}$ and $\eta_{\ell}$,  by the induction, we can show that $a_j$ for $2\leq j\leq m$ are all real and hence, $V(z-z_0)$ is $\mathcal{PT}$-symmetric.
\end{proof}
\begin{proof}[Proof of Theorems ~\ref{thm_112}]
Suppose that
$\lambda_n-\widetilde{\lambda}_n=o\left(1\right)$ as $n\to\infty$. Then $d_{\ell,j}(a)=d_{\ell,j}(\widetilde{a})$ for all $1\leq j\leq m+1$. Then Corollary \ref{cor6} completes  proof.
\end{proof}
%%%%%%%%%%%%%%%%%%%%%%%%%%%%%%%%%%%%%%%%%%%%%%%%%%%%%%%%%%%%%%%%%%%%%%%%%%%
%%%%%%%%%%%%%%%%%%%%%%%%%%%%%%%%%%%%%%%%%%%%%%%%%%%%%%%%%%%%%%%%%%%%%%%%%%%
%%%%%%%%%%%%%%%%%%%%%%%%%%%%%%%%%%%%%%%%%%%%%%%%%%%%%%%%%%%%%%%%%%%%%%%%%%%
\section{Properties of the solutions}
\label{prop_sect}
In this section, we introduce work of Hille \cite{Hille} and Sibuya \cite{Sibuya} about properties of the solutions of \eqref{ptsym}.

First, we scale equation \eqref{ptsym} because many facts that we
need later are stated for the scaled equation. Let $u$ be a
solution of (\ref{ptsym}) and let $v(z)=u(-iz)$.
 Then $v$ solves
\begin{equation}\label{rotated1}
-v^\dd(z)+[(-1)^{\ell+1}z^m+P(z)+\lambda]v(z)=0.
\end{equation}
When $\ell$ is odd, \eqref{rotated1} becomes
\begin{equation}\label{rotated}
-v^\dd(z)+[z^m+P(z)+\lambda]v(z)=0.
\end{equation}
Later we will handle the case when $\ell$ is even.

Since we scaled the argument of $u$, we must rotate the boundary conditions. We state them in a more general
context by using the following definition.
\begin{definition}
{\rm {\it The Stokes sectors} $S_k$ of the equation (\ref{rotated})
are
$$ S_k=\left\{z\in \C:\left|\arg (z)-\frac{2k\pi}{m+2}\right|<\frac{\pi}{m+2}\right\}\quad\text{for}\quad k\in \Z.$$ }
\end{definition}
See Figure \ref{f:graph1}.
\begin{figure}[t]
    \begin{center}
    \includegraphics[width=.4\textwidth]{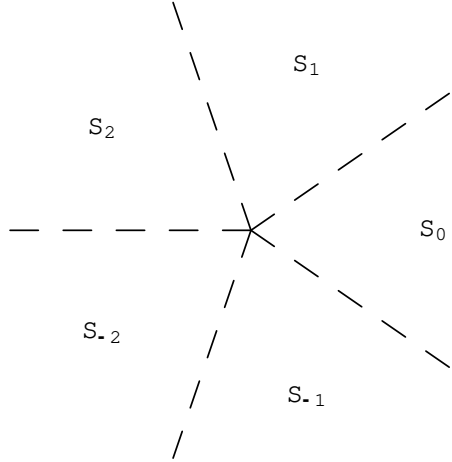}
    \end{center}
 \vspace{-.5cm}
\caption{The Stokes sectors for $m=3$. The dashed rays represent $\arg z=\pm\frac{\pi}{5},\,\pm\frac{3\pi}{5},\, \pi.$}\label{f:graph1}
\end{figure}

It is known from Hille \cite[\S 7.4]{Hille} that every nonconstant solution of (\ref{rotated}) either decays to
zero or blows up exponentially, in each Stokes sector $S_k$.
\begin{lemma}[\protect{\cite[\S 7.4]{Hille}}]\label{gen_pro}
${} $

\begin{itemize}
\item [(i)] For each $k\in\Z$, every solution $v$ of (\ref{rotated})  is asymptotic to
\begin{equation} \label{asymp-formula}
(const.)z^{-\frac{m}{4}}\exp\left[\pm \int^z \left[\xi^m+P(\xi)+\lambda\right]^{\frac{1}{2}}\,d\xi\right]
\end{equation}
as $z \rightarrow \infty$ in every closed subsector of $S_k$.

\item [(ii)] If a nonconstant solution $v$ of \eqref{rotated} decays in $S_k$, it must blow up
in $S_{k-1}\cup S_{k+1}$. However, when $v$ blows up in $S_k$, $v$ need not be decaying in $S_{k-1}$ or in $S_{k+1}$.
\end{itemize}
\end{lemma}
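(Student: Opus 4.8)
The plan is to treat this as a standard application of the asymptotic theory of second-order linear ODEs with a polynomial coefficient that is large in a sector, essentially the Liouville--Green (WKB) method combined with a careful Stokes-line analysis; this is exactly the content of Hille's treatment in \cite[\S 7.4]{Hille}, so I would organize the argument around reproducing that reasoning. First I would put the equation $v'' = Q(z)v$ with $Q(z) = z^m + P(z) + \lambda$ into Liouville--Green normal form: introduce the new independent variable $\zeta = \int^z Q(\xi)^{1/2}\,d\xi$ and the new dependent variable $w = Q(z)^{1/4} v$, so that the equation becomes $w'' = (1 + R(\zeta))w$ where the error term $R$ is $O(|z|^{-(m+4)/2})$ and in particular is integrable along any ray heading to infinity inside a fixed Stokes sector. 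The point of choosing the bisecting rays $\arg z = 2k\pi/(m+2)$ to define $S_k$ is that along these rays $\Re\big(\int^z Q^{1/2}\big)$ is monotone of one sign, so exactly one of $e^{\zeta}, e^{-\zeta}$ is recessive; the half-width $\pi/(m+2)$ is the largest opening of a sector in which this domination persists, because on the boundary rays the real part of the exponent vanishes to leading order. This establishes part (i): every solution is a linear combination $c_+ w_+ + c_- w_-$ of the two Liouville--Green solutions $w_\pm \sim e^{\pm\zeta}$, hence is asymptotic to $(\text{const})\,z^{-m/4}\exp\big(\pm\int^z Q^{1/2}\big)$ in each closed subsector of $S_k$ — the exponent $-m/4$ coming from the $Q^{-1/4} = z^{-m/4}(1+o(1))$ prefactor, and the ``const'' being $c_+$ if $c_+\ne 0$ (the subdominant coefficient being invisible) and $c_-$ otherwise.

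For part (i) the remaining care is to make the error estimate uniform: I would fix a closed subsector $\overline{S_k'} \subset S_k$, note that $\int^\infty |R(\zeta(t))||dt| < \infty$ along every ray in it with a bound uniform in the ray, and invoke the standard Volterra-integral-equation / successive-approximation construction (as in Olver) to get solutions $w_\pm(\zeta) = e^{\pm\zeta}(1 + o(1))$ with the $o(1)$ uniform; transporting back through $v = Q^{-1/4}w$ and $\zeta = \int^z Q^{1/2}$ gives \eqref{asymp-formula}. Because $\Re\,\zeta \to +\infty$ along rays in the left half of $S_k$ and $\to -\infty$ in the right half (after orienting $\zeta$), a solution with $c_+ \ne 0$ blows up throughout $S_k$ while the one proportional to $w_-$ (i.e. $c_+ = 0$) decays throughout $S_k$; thus ``decays vs.\ blows up'' is the only dichotomy, proving the first assertion of (ii) about $S_k$ itself.

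The genuinely delicate point — and the main obstacle — is the \emph{connection} statement in (ii): a solution recessive in $S_k$ must be dominant in the two neighboring sectors $S_{k\pm 1}$, yet a solution dominant in $S_k$ need not be recessive in either neighbor. Here one cannot work in a single sector; one must continue the Liouville--Green representation across the common boundary ray (a Stokes ray) of $S_k$ and $S_{k+1}$ and track how the pair $(c_+, c_-)$ transforms. The key facts are: (a) the recessive solution in $S_k$ is unique up to scalar, and its analytic continuation into $S_{k+1}$ has a \emph{nonzero} coefficient on the solution that is dominant in $S_{k+1}$ (this nonvanishing is the crux — it is where the "Stokes phenomenon" lives, and it is proved by an explicit asymptotic computation of the continuation, not by soft arguments), so the recessive-in-$S_k$ solution blows up in $S_{k+1}$ and, by the mirror-image argument, in $S_{k-1}$; while (b) a solution dominant in $S_k$ is a combination $c_+ w_+ + c_- w_-$ with $c_+ \ne 0$, and after continuation to $S_{k+1}$ its new subdominant coefficient may or may not vanish depending on $c_-$, which is why no conclusion can be drawn. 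I would present (a) via the one-sided nature of the Volterra construction — the recessive solution is characterized by a convergent integral equation with kernel built from $e^{-|\zeta(s)-\zeta(t)|}$, valid on a sector of opening $2\pi/(m+2)$ straddling the Stokes ray, and reading off its behavior on the far side of that ray exhibits the dominant term with a computable nonzero coefficient — and simply cite \cite[\S 7.4]{Hille} for the classical packaging of these facts, since the statement we need is verbatim theirs.
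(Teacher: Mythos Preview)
The paper does not give a proof of this lemma at all: it is stated as a quotation from Hille \cite[\S 7.4]{Hille} and used as a black box, so there is no ``paper's own proof'' to compare against. Your sketch is a correct outline of the classical Liouville--Green argument that underlies Hille's treatment, and would serve as a proof.

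One small simplification for part (ii): you frame the fact that the recessive-in-$S_k$ solution blows up in $S_{k\pm1}$ as a Stokes-multiplier nonvanishing statement that needs an ``explicit asymptotic computation of the continuation.'' In fact no separate connection computation is needed. The Volterra construction of the recessive solution $f_k$ gives its asymptotic $z^{-m/4}\exp[-F(z)]$ uniformly on a sector of opening $3\pi/(m+2)$ centered on $S_k$ (this is exactly the content of Theorem~\ref{prop}(ii) in the paper, taken from Sibuya). That larger sector already covers $S_{k-1}\cup S_k\cup S_{k+1}$, and in $S_{k\pm1}$ the \emph{same} exponential $\exp[-F(z)]$ has $\Re(-F)\to+\infty$, so $f_k$ blows up there automatically. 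The Stokes phenomenon enters only when you try to push into $S_{k\pm2}$, which the lemma does not require. So your argument is right but slightly heavier than necessary at that step.
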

Lemma \ref{gen_pro} (i) implies that if $v$ decays along one ray in $S_k$, then it decays along all rays in $S_k$.
Also, if $v$ blows up along one ray in $S_k$, then it blows up along all rays in $S_k$. Thus, the boundary
conditions \eqref{bdcond} with $1\leq\ell\leq m-1$ represent all decaying boundary conditions.

Still with $\ell$ odd, the two rays in \eqref{bdcond} map, by $z\mapsto -iz$,  to the rays
$\arg(z)=\pm\frac{(\ell+1)\pi}{m+2}$ which are the center rays of the Stokes sectors $S_{\frac{\ell+1}{2}}$ and
$S_{-\frac{\ell+1}{2}}$ and the boundary conditions \eqref{bdcond} on $u$ become
\begin{equation}\nonumber
\text{$v$ decays to zero in the Stokes sector $S_{\frac{\ell+1}{2}}$ and $S_{-\frac{\ell+1}{2}}$.}
\end{equation}

When $\ell$ is even, we let
 $y(z)=v(\omega^{-\frac{1}{2}}z)$ so that \eqref{rotated1} becomes
\begin{equation}\label{rotated2}
-y^\dd(z)+[z^m+\omega^{-1} P(\omega^{-\frac{1}{2}}z)+\omega^{-1}\lambda]y(z)=0,
\end{equation}
 where
$$\omega=\exp\left[\frac{2\pi i}{m+2}\right]$$
and hence, $\omega^{-{\frac{m}{2}+1}}=-1$. For these cases, the boundary conditions \eqref{bdcond} become
\begin{equation}\nonumber
\text{$y$ decays to zero in the Stokes sector $S_{\frac{\ell+2}{2}}$ and $S_{-\frac{\ell}{2}}$.}
\end{equation}

The following theorem is a special case of Theorems 6.1, 7.2, 19.1 and 20.1 of Sibuya \cite{Sibuya} that is the
main ingredient of the proofs of the main results in this paper. For this we will use $r_m=-\frac{m}{4}$ if $m$ is
odd, and $r_m=-\frac{m}{4}-b_{\frac{m}{2}+1}(a)$ if $m$ is even.
\begin{theorem}\label{prop}
Equation (\ref{rotated}), with $a\in \C^{m}$, admits a solution  $f(z,a,\lambda)$ with the following properties.
\begin{enumerate}
\item[(i)] $f(z,a,\lambda)$ is an entire function of $z,a $ and $\lambda$.
\item[(ii)] $f(z,a,\lambda)$ and $f^\d(z,a,\lambda)=\frac{\partial}{\partial z}f(z,a,\lambda)$ admit
the following asymptotic expansions. Let $\varepsilon>0$. Then
\begin{align}
f(z,a,\lambda)=&\qquad z^{r_m}(1+O(z^{-1/2}))\exp\left[-F(z,a,\lambda)\right],\nonumber\\
f^\d(z,a,\lambda)=&-z^{r_m+\frac{m}{2}}(1+O(z^{-1/2}))\exp\left[-F(z,a,\lambda) \right],\nonumber
\end{align}
as $z$ tends to infinity in  the sector $|\arg z|\leq \frac{3\pi}{m+2}-\varepsilon$, uniformly on each compact set of $(a,\lambda)$-values .
Here
\begin{equation}\nonumber
F(z,a,\lambda)=\frac{2}{m+2}z^{\frac{m}{2}+1}+\sum_{1\leq j<\frac{m}{2}+1}\frac{2}{m+2-2j}b_j(a) z^{\frac{1}{2}(m+2-2j)}.
\end{equation}
\item[(iii)] Properties \textup{(i)} and \textup{(ii)} uniquely determine the solution $f(z,a,\lambda)$ of (\ref{rotated}).
\item[(iv)] For each fixed $a\in\C^{m}$ and $\delta>0$, $f$ and $f^\d$ also admit the asymptotic expansions,
\begin{align}
f(0,a,\lambda)=&[1+O\left(\lambda^{-\rho}\right)]\lambda^{-1/4}\exp\left[L(a,\lambda)\right],\label{eq1}\\
f^\d(0,a,\lambda)=&-[1+O\left(\lambda^{-\rho}\right)]\lambda^{1/4}\exp\left[L(a,\lambda)\right],\label{eq2}
\end{align}
as $\lambda\to\infty$ in the sector $|\arg(\lambda)|\leq\pi-\delta$, uniformly on each compact set of
$a\in\C^{m}$, where
\begin{align}
L(a,\lambda)=\left\{
                    \begin{array}{rl}
                    &\int_0^{+\infty}\left(\sqrt{t^m+P(t)+\lambda}- t^{\frac{m}{2}}-\sum_{j=1}^{\frac{m+1}{2}}b_j(a)t^{\frac{m}{2}-j}\right)\,dt \quad \text{if $m$ is odd,}\\
                   &\int_0^{+\infty}\left(\sqrt{t^m+P(t)+\lambda}- t^{\frac{m}{2}}-\sum_{j=1}^{\frac{m}{2}}b_j(a)t^{\frac{m}{2}-j}-\frac{b_{\frac{m}{2}+1}}{t+1}\right)\,dt  \quad \text{if $m$ is even.}
                    \end{array}
                         \right. \nonumber
\end{align}
\item[(v)] The entire functions  $\lambda\mapsto f(0,a,\lambda)$ and $\lambda\mapsto f^\d(0,a,\lambda)$ have orders $\frac{1}{2}+\frac{1}{m}$.
\end{enumerate}
\end{theorem}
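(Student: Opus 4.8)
The plan is to realise $f(z,a,\lambda)$ as the recessive (subdominant) solution of \eqref{rotated} in the Stokes sector $S_0$ and to read off all of its analytic and asymptotic properties from a Liouville--Green normalisation combined with a Volterra fixed-point construction; this is in essence the route of Sibuya \cite{Sibuya}. Write the equation as $v^{\dd}=Q(z)v$ with $Q(z)=z^m+P(z)+\lambda$, fix the branch of $Q^{1/2}$ with $Q^{1/2}\sim z^{m/2}$ as $z\to+\infty$, and let $F(z,a,\lambda)$ be the sum of exactly those terms of $\int^{z}Q^{1/2}\,d\xi$ that do not tend to $0$ as $z\to\infty$, with the sole exception of the logarithmic term. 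A binomial expansion $Q^{1/2}=z^{m/2}\sum_{k\geq0}\binom{1/2}{k}\bigl((P(z)+\lambda)/z^{m}\bigr)^{k}$ identifies the coefficients so obtained with $\tfrac{2}{m+2}$ and the $b_j(a)$ of \eqref{Kmj=}, while the coefficient $b_{\frac m2+1}(a)$ of the $z^{-1}$ term (present only when $m$ is even) is absorbed into the power prefactor, which accounts for the shift $r_m=-\tfrac m4-b_{\frac m2+1}(a)$. Substituting $v(z)=z^{r_m}\exp[-F(z,a,\lambda)](1+w(z))$ converts the equation into a Volterra integral equation $w(z)=\int_{z}^{\infty}\mathcal K(z,t)\,(1+w(t))\,dt$ taken along a suitable path running to infinity in $S_0$ (or its adjacent sectors), whose kernel obeys $\int^{\infty}|\mathcal K(z,t)|\,|dt|<\infty$ precisely because $Q$ is a polynomial. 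A standard majorant argument then yields convergence of the Neumann series with $w=O(z^{-1/2})$, uniformly for $|\arg z|\leq\frac{3\pi}{m+2}-\varepsilon$ and uniformly on compact $(a,\lambda)$-sets; since each iterate is entire and the convergence is locally uniform, $w$ is holomorphic in $(z,a,\lambda)$ on its domain, and solving the ODE from initial data at a fixed base point of $S_0$ — which depend holomorphically on $(a,\lambda)$ — extends the corresponding $f$ to an entire function of $z,a,\lambda$. Differentiating the integral equation produces the companion expansion for $f^{\d}$. This is (i) and (ii).

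Property (iii) is then immediate: any solution of \eqref{rotated} with the asymptotics in (ii) is recessive in $S_0$, whereas a second, linearly independent solution has a nonzero $\exp[+F]$-component and hence blows up in $S_0$ by Lemma \ref{gen_pro}(i); so the recessive solution is unique up to a constant, and the normalisation $z^{r_m}(1+o(1))\exp[-F]$ fixes that constant. Part (iv) is the substantive point. With the branch and path fixed as above, the recessive solution satisfies $f(z)\sim Q(z)^{-1/4}\exp[-G(z)]$, where $G$ is the antiderivative of $Q^{1/2}$ normalised by the requirement $G(z)-F(z)-b_{\frac m2+1}(a)\log z\to0$ as $z\to\infty$ in $S_0$ (delete the logarithm when $m$ is odd). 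For large $|\lambda|$ the Liouville--Green approximation remains valid all the way down to $z=0$: there $Q(0)=a_m+\lambda$ is large, and the integration path can be deformed so as to stay clear of the turning points of $Q$, i.e.\ the zeros of $z^m+P(z)+\lambda$, which cluster near $|z|=|\lambda|^{1/m}$; the hypothesis $|\arg\lambda|\leq\pi-\delta$ is exactly what makes such a path available. Hence $f(0,a,\lambda)=[1+O(\lambda^{-\rho})]\,Q(0)^{-1/4}\exp[-G(0)]=[1+O(\lambda^{-\rho})]\,\lambda^{-1/4}\exp[-G(0)]$, and rewriting the normalising condition on $G$ as a renormalised integral from $0$ to $\infty$ identifies $-G(0)$ with the expression $L(a,\lambda)$ displayed in the theorem, giving \eqref{eq1}; the same computation for $f^{\d}$, with $Q(0)^{1/4}=\lambda^{1/4}$, gives \eqref{eq2}. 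I expect the main obstacle to be exactly this uniform passage from the $z\to\infty$ normalisation of $f$ to its $\lambda\to\infty$ behaviour at $z=0$: one must choose the path and the error-control function in the Volterra estimate so that the bound survives the limit uniformly over the full sector $|\arg\lambda|\leq\pi-\delta$ and over compact sets of $a$, which is the technical heart of Sibuya's Theorems~19.1 and 20.1.

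Finally, (v) follows from (iv). For $|\arg\lambda|\leq\pi-\delta$ one has $\log|f(0,a,\lambda)|=\Re L(a,\lambda)+O(1)$, and the leading behaviour of $L$ comes from its $a$-independent part $\int_0^{\infty}(\sqrt{t^m+\lambda}-t^{m/2})\,dt$, which after the substitution $t=\lambda^{1/m}s$ equals $\lambda^{\frac12+\frac1m}\int_0^{\infty}(\sqrt{s^m+1}-s^{m/2})\,ds$, a convergent Beta-type integral equal to a positive constant; the remaining, $a$-dependent terms of $L$ grow like strictly smaller powers of $\lambda$. Thus $|f(0,a,\lambda)|$ grows like $\exp\bigl(c\,|\lambda|^{\frac12+\frac1m}\bigr)$ along $\lambda\to+\infty$, and no faster on $|\arg\lambda|\leq\pi-\delta$; a Phragm\'en--Lindel\"of argument on the complementary sliver $|\arg\lambda|\in(\pi-\delta,\pi]$ turns this into a global bound, so $\lambda\mapsto f(0,a,\lambda)$ has order exactly $\frac12+\frac1m$, and likewise $\lambda\mapsto f^{\d}(0,a,\lambda)$.
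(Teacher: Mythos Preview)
Your strategy is correct and is precisely the Sibuya construction the paper invokes: the paper's own proof consists solely of pointers to Theorems~6.1, 7.2, 19.1 and 20.1 of \cite{Sibuya} (and to \cite{Shin3} for the sharpened error), whereas you unpack what those theorems actually do---Liouville--Green normalisation, Volterra fixed point for (i)--(ii), recessive-solution uniqueness for (iii), uniform WKB down to $z=0$ for (iv), and the growth estimate for (v). The only refinement worth flagging is that Sibuya's original statements give $o(1)$ in \eqref{eq1}--\eqref{eq2}; the upgrade to $O(\lambda^{-\rho})$ needs the finer control of the error-control function carried out in \cite{Shin3}, so if you intend to prove (iv) with the stated remainder you should indicate where that extra decay comes from in your Volterra estimate.
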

\begin{proof}
In  Sibuya's book \cite{Sibuya}, see Theorem 6.1 for a proof of (i) and (ii); Theorem 7.2 for a proof of (iii) with the error terms $o(1)$;
and Theorem 19.1 for a proof of (iv).  Moreover, (v) is a consequence of (iv) along with Theorem 20.1 in
\cite{Sibuya}. The error terms in (iii) are improved from $o(1)$ to $O\left(\lambda^{-\rho}\right)$ in \cite{Shin3}. Note that properties (i), (ii) and (iv) are summarized on pages 112--113 of Sibuya \cite{Sibuya}.
\end{proof}

\begin{remarks}
{\rm Throughout this paper, we will deal with numbers like
$\left(\omega^{\alpha}\lambda\right)^{s}$ for some $s\in\R$ and $\alpha\in\C$. As usual, we will use
$$\omega^{\alpha}=\exp\left[\alpha \frac{2\pi i}{m+2}\right]$$
and if $\arg(\lambda)$ is specified, then
$$\arg\left(\left(\omega^{\alpha}\lambda\right)^{s}\right)=s\left[\arg(\omega^{\alpha})+\arg(\lambda)\right]=s\left[\Re(\alpha)\frac{2\pi}{m+2}+\arg(\lambda)\right],\quad s\in\R.$$
}
\end{remarks}

\begin{lemma}\label{asy_lemma}
Let $m\geq 3$ and $a\in\C^{m}$ be fixed. Then
\begin{equation}\label{L_def}
L(a,\lambda)=\sum_{j=0}^{\infty}K_{m,j}(a)\lambda^{\frac{1}{2}+\frac{1-j}{m}}-\frac{\nu(a)}{m}\ln(\lambda)
\end{equation}
as $\lambda\to\infty$ in the sector $|\arg(\lambda)|\leq \pi-\delta$, uniformly on each compact set of
$a\in\C^{m}$.
\end{lemma}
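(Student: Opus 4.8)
The plan is to compute the integral $L(a,\lambda)$ defining it in Theorem \ref{prop}(iv) by expanding the integrand in powers of $\lambda$ and integrating term by term, then recognizing the resulting coefficients as the $K_{m,j}(a)$ defined in \eqref{Kmj=}. First I would write $\sqrt{t^m+P(t)+\lambda}$ for large $\lambda$ as $\lambda^{1/2}\sqrt{1+\lambda^{-1}(t^m+P(t))}$ and apply the binomial series $\sqrt{1+x}=\sum_{k\geq 0}\binom{1/2}{k}x^k$; collecting powers of $t$ inside, the coefficient of $t^{m k - r}$ that contributes at a given power of $\lambda$ is governed exactly by the numbers $b_{j,k}(a)$, since $b_{j,k}(a)$ was defined as the coefficient of $z^{-j}$ in $\binom{1/2}{k}(a_1/z+\cdots+a_m/z^m)^k$. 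The subtracted terms $t^{m/2}$, $\sum b_j(a)t^{m/2-j}$ (and the $b_{(m/2)+1}/(t+1)$ correction when $m$ is even) are precisely the terms of the expansion that would otherwise make the integral divergent at $t=\infty$ (and, for the even correction, that handle the borderline logarithmic term), so after subtraction each surviving term is of the form $c\int_0^\infty t^{\alpha}(t^m+\lambda)^{\beta}\,dt$ type, convergent, and evaluable by the Beta-function substitution $t=\lambda^{1/m}s$.

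The key computation is the scalar integral: for suitable exponents one has $\int_0^\infty t^{p}(t^m+\lambda)^{q}\,dt = \frac{1}{m}\lambda^{q+\frac{p+1}{m}}B\!\left(\frac{p+1}{m},\,-q-\frac{p+1}{m}\right)$ after $t=\lambda^{1/m}s$ and $s^m=u$. Matching $p$ and $q$ against the generic term coming from the $k$-th binomial coefficient and the $j$-th power of $t^{-1}$ inside, the exponent of $\lambda$ works out to $\tfrac12+\tfrac{1-j}{m}$ and the Beta-function argument becomes $B\!\left(k-\tfrac{j-1}{m},\,\tfrac{j-1}{m}-\tfrac12\right)$, which is exactly the generic case $K_{m,j,k}$ in the definition following \eqref{Kmj=}; multiplying by $b_{j,k}(a)$ and summing over $k$ gives $K_{m,j}(a)$ as the coefficient of $\lambda^{\frac12+\frac{1-j}{m}}$, as claimed. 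The special cases $j=0$ (where the substitution produces $B(\tfrac12,1+\tfrac1m)/(2\cos\tfrac\pi m)$ after using $B(\tfrac1m,\tfrac12-\tfrac1m)$ and a reflection/duplication identity for the Gamma function), $j=1$ (the linear term, giving $-2/m$), and $j=\tfrac m2+1$ for $m$ even (the logarithmic borderline, where the $1/(t+1)$ regulator produces the $\ln 2-\sum_{s=1}^{k-1}\frac1{2s-1}$ and the $-\frac{\nu(a)}{m}\ln\lambda$ term) must be checked separately; each is a standard but slightly delicate Gamma-function manipulation.

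The main obstacle is making the term-by-term integration and the asymptotic claim rigorous: I need to justify that truncating the binomial expansion of the integrand at order $N$ leaves a remainder whose integral is $O(\lambda^{\frac12+\frac{1-N}{m}})$ uniformly for $\lambda$ in $|\arg\lambda|\le\pi-\delta$ and $a$ in a compact set, and in particular that the branch of the square root is unambiguous along the ray of integration (this uses $\Re(t^m+P(t)+\lambda)$ staying away from the negative reals for $t\ge 0$ once $|\lambda|$ is large and $|\arg\lambda|\le\pi-\delta$, since $t^m\ge 0$). This is the usual Watson's-lemma-style estimate: split the integral at $t=|\lambda|^{1/m}$, on $[0,|\lambda|^{1/m}]$ expand and bound the tail of the binomial series, on $[|\lambda|^{1/m},\infty)$ factor out $t^m$ instead and bound directly. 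Once those uniform bounds are in place, the formal computation above yields \eqref{L_def}, with the understanding that \eqref{L_def} is an asymptotic expansion (the series need not converge), matching the statement.
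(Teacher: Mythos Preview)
The paper gives no proof of its own here; it simply cites \cite{Shin2}. Your outline is the natural computation and is essentially correct, and almost certainly coincides with what is done in that reference, since the coefficients $K_{m,j,k}$ are visibly Beta-function values arising from integrals $\int_0^\infty t^{p}(t^m+\lambda)^{q}\,dt$.

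One point to tighten: your opening factorization $\sqrt{t^m+P(t)+\lambda}=\lambda^{1/2}\sqrt{1+\lambda^{-1}(t^m+P(t))}$ does not by itself produce terms of the shape $t^{\alpha}(t^m+\lambda)^{\beta}$. The factorization that does, and that makes your Beta computation with $p=mk-j$, $q=\tfrac12-k$ come out exactly, is
\[
\sqrt{t^m+P(t)+\lambda}=(t^m+\lambda)^{1/2}\Bigl(1+\frac{P(t)}{t^m+\lambda}\Bigr)^{1/2}
=\sum_{k\geq 0}\binom{1/2}{k}\frac{P(t)^k}{(t^m+\lambda)^{k-1/2}},
\]
after which $\binom{1/2}{k}P(t)^k=\sum_j b_{j,k}(a)\,t^{mk-j}$ by the very definition of $b_{j,k}$. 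With this correction your identification of the generic term, the Beta evaluation $\frac{1}{m}B\bigl(k-\frac{j-1}{m},\,\frac{j-1}{m}-\frac12\bigr)$, the handling of the exceptional indices $j=0$, $j=1$, and (for $m$ even) $j=\tfrac{m}{2}+1$ with its logarithmic contribution $-\frac{\nu(a)}{m}\ln\lambda$, and the Watson-type remainder argument via splitting at $t=|\lambda|^{1/m}$ are all on target. The only remaining bookkeeping, which you allude to but do not spell out, is that the subtracted polynomial terms $t^{m/2}+\sum_j b_j(a)t^{m/2-j}$ must be paired with the divergent pieces of the individual $\int_0^\infty t^{mk-j}(t^m+\lambda)^{1/2-k}\,dt$ for $j\leq \tfrac{m}{2}+1$ to render each such integral finite before the Beta identity is applied.
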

\begin{proof}
See \cite{Shin2} for a proof.
\end{proof}
Sibuya \cite{Sibuya} proved the following corollary, directly from Theorem \ref{prop}, that will be used later in
Sections \ref{sec_4} and \ref{sec_5}.
\begin{corollary}\label{lemma_decay}
Let $a\in\C^{m}$ be fixed. Then $L(a,\lambda)=K_{m}\lambda^{\frac{1}{2}+\frac{1}{m}}(1+o(1))$ as $\lambda$ tends
to infinity in the sector $|\arg \lambda|\leq \pi-\delta$, and hence
\begin{equation}\label{re_part}
\Re
\left(L(a,\lambda)\right)=K_{m}\cos\left(\frac{m+2}{2m}\arg(\lambda)\right)|\lambda|^{\frac{1}{2}+\frac{1}{m}}(1+o(1))
\end{equation}
as $\lambda\to\infty$ in the sector $|\arg (\lambda)|\leq \pi-\delta$.

In particular, $\Re \left(L(a,\lambda)\right)\to+\infty$ as $\lambda\to\infty$ in any closed subsector of the
sector $|\arg(\lambda)|<\frac{m\pi}{m+2}$. In addition, $\Re \left(L(a,\lambda)\right)\to-\infty$ as
$\lambda\to\infty$ in any closed subsector of the sectors $\frac{m\pi}{m+2}<|\arg(\lambda)|<\pi-\delta$.
\end{corollary}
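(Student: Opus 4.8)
The plan is to extract everything from the full asymptotic expansion already recorded in Lemma~\ref{asy_lemma} by isolating its leading term. First I would note that the $j=0$ coefficient there, $K_{m,0}(a)=K_{m,0,0}\,b_{0,0}(a)=\frac{B\left(\frac12,1+\frac1m\right)}{2\cos\left(\frac{\pi}{m}\right)}$, is independent of $a$ and strictly positive for $m\geq 3$; this is the constant denoted $K_m$ in the statement, so set $K_m:=K_{m,0}(a)$. In the expansion
$$L(a,\lambda)=\sum_{j=0}^{\infty}K_{m,j}(a)\lambda^{\frac12+\frac{1-j}{m}}-\frac{\nu(a)}{m}\ln(\lambda)$$
the $j=0$ summand is $K_m\lambda^{\frac12+\frac1m}$, while every summand with $j\geq 1$ is $O\left(\lambda^{\frac12}\right)$ and the logarithmic term is $O\left(\ln\lambda\right)$; since $\frac1m>0$, all of these are $o\left(\lambda^{\frac12+\frac1m}\right)$. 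Hence $L(a,\lambda)=K_m\lambda^{\frac12+\frac1m}(1+o(1))$ as $\lambda\to\infty$ in $|\arg\lambda|\leq\pi-\delta$, uniformly on compact sets of $a$, which is the first assertion.

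For \eqref{re_part} I would invoke the branch convention fixed in the Remarks: once $\arg(\lambda)$ is specified,
$$\lambda^{\frac12+\frac1m}=|\lambda|^{\frac12+\frac1m}\exp\left[i\,\frac{m+2}{2m}\arg(\lambda)\right],$$
so that $\Re\left(K_m\lambda^{\frac12+\frac1m}\right)=K_m\cos\left(\frac{m+2}{2m}\arg(\lambda)\right)|\lambda|^{\frac12+\frac1m}$. Combining this with $L(a,\lambda)=K_m\lambda^{\frac12+\frac1m}(1+o(1))$ and taking real parts gives \eqref{re_part}, where the trailing $1+o(1)$ is read as an additive error $o\!\left(|\lambda|^{\frac12+\frac1m}\right)$ — the only honest reading near the rays where the cosine vanishes.

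The two ``in particular'' statements then follow by bounding the sign and size of the cosine on closed subsectors. If $\arg(\lambda)$ ranges over a closed subsector of $|\arg(\lambda)|<\frac{m\pi}{m+2}$, then $\left|\frac{m+2}{2m}\arg(\lambda)\right|\leq\frac{\pi}{2}-\varepsilon_0$ for some $\varepsilon_0>0$, so $\cos\left(\frac{m+2}{2m}\arg(\lambda)\right)$ is bounded below by a positive constant; since $K_m>0$ and $|\lambda|^{\frac12+\frac1m}\to\infty$, \eqref{re_part} yields $\Re(L(a,\lambda))\to+\infty$. If instead $\arg(\lambda)$ ranges over a closed subsector of $\frac{m\pi}{m+2}<|\arg(\lambda)|<\pi-\delta$, then $\frac{\pi}{2}<\left|\frac{m+2}{2m}\arg(\lambda)\right|\leq\frac{m+2}{2m}(\pi-\delta)<\frac{3\pi}{2}$, where the last inequality uses $\frac{m+2}{2m}<\frac32$ for $m\geq 3$; moreover the quantity stays bounded away from both $\frac{\pi}{2}$ and $\frac{3\pi}{2}$, so the cosine is bounded above by a negative constant and \eqref{re_part} gives $\Re(L(a,\lambda))\to-\infty$. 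I do not anticipate a genuine obstacle here; the only points that need a moment's care are the bookkeeping of $\arg(\lambda^{\frac12+\frac1m})$ near the transition rays $\arg\lambda=\pm\frac{m\pi}{m+2}$, and checking that $\frac{m+2}{2m}(\pi-\delta)$ never reaches $\frac{3\pi}{2}$, so that the cosine cannot swing back to positive values in the outer sectors.
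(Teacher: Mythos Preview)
Your argument is correct. The paper itself does not supply a proof of this corollary; it simply attributes the statement to Sibuya and remarks that it follows ``directly from Theorem~\ref{prop}.'' Your route via Lemma~\ref{asy_lemma}---isolating the $j=0$ term $K_m\lambda^{\frac12+\frac1m}$ and checking that every remaining term is $o(\lambda^{\frac12+\frac1m})$---is the natural way to extract the leading behaviour, and your handling of the real part and the sign analysis on closed subsectors is accurate (in fact $\frac{m+2}{2m}\leq\frac56$ for $m\geq3$, so the bound $\frac{m+2}{2m}(\pi-\delta)<\pi<\frac{3\pi}{2}$ holds with room to spare). Your remark that the $1+o(1)$ in \eqref{re_part} must be read additively near the transition rays $\arg\lambda=\pm\frac{m\pi}{m+2}$ is also the correct caveat.
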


Based on Corollary \ref{lemma_decay}, Sibuya \cite[Thm.\ 29.1]{Sibuya} also computed the leading term  in
\eqref{main_result} for $\ell=1$. Also, Sibuya \cite{Sibuya} constructed solutions of \eqref{rotated} that decays
in $S_k$, $k\in\Z$. Before we introduce this, we let
\begin{equation}\label{G_def}
G^{\ell}(a):=(\omega^{(m+1)\ell}a_1, \omega^{m\ell}a_2,\ldots,\omega^{2\ell}a_{m})\quad \text{for}\quad \ell\in
\frac{1}{2}\Z.
\end{equation}
Then we have the following lemma, regarding some properties of $G^{\ell}(\cdot)$.
\begin{lemma}\label{lemma_25}
For $a\in\C^{m}$ fixed, and $\ell_1,\ell_2,\ell\in\frac{1}{2}\Z$,
$G^{\ell_1}(G^{\ell_2}(a))=G^{\ell_1+\ell_2}(a)$, and
\begin{equation}\nonumber
b_{j,k}(G^{\ell}(a))=\omega^{((m+2)k-j)\ell}b_{j,k}(a),\quad \ell\in\frac{1}{2}\Z.
\end{equation}
In particular,
\begin{equation}\nonumber
b_j(G^{\ell}(a))=\omega^{-j\ell}b_j(a),\,\, \ell\in\Z.
\end{equation}
\end{lemma}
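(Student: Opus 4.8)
The plan is to reduce the whole lemma to a single substitution identity for the generating Laurent series, and then read off the two formulas by comparing coefficients of powers of $z^{-1}$. Throughout, write $Q_a(z)=\frac{a_1}{z}+\frac{a_2}{z^2}+\cdots+\frac{a_m}{z^m}=\sum_{i=1}^{m}a_iz^{-i}$, so that by definition $b_{j,k}(a)$ is the coefficient of $z^{-j}$ in ${\frac{1}{2}\choose{k}}Q_a(z)^k$; recall from the Remarks following Theorem \ref{prop} that $\omega^{\alpha}=\exp[\alpha\cdot 2\pi i/(m+2)]$ for all $\alpha\in\C$, so $\omega^{\ell}$ is a well-defined number for every $\ell\in\frac{1}{2}\Z$ and $\omega^{m+2}=1$.

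First I would dispose of the cocycle relation. By \eqref{G_def} the $i$-th coordinate of $G^{\ell}(a)$ is $\omega^{(m+2-i)\ell}a_i$ for $1\le i\le m$, so applying $G^{\ell_1}$ to $G^{\ell_2}(a)$ multiplies that coordinate by $\omega^{(m+2-i)\ell_1}$, giving $\omega^{(m+2-i)(\ell_1+\ell_2)}a_i$, which is exactly the $i$-th coordinate of $G^{\ell_1+\ell_2}(a)$. Hence $G^{\ell_1}(G^{\ell_2}(a))=G^{\ell_1+\ell_2}(a)$.

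Next comes the core computation, the identity $Q_{G^{\ell}(a)}(z)=\omega^{(m+2)\ell}\,Q_a(\omega^{\ell}z)$: indeed,
\[
Q_{G^{\ell}(a)}(z)=\sum_{i=1}^{m}\omega^{(m+2-i)\ell}a_iz^{-i}=\omega^{(m+2)\ell}\sum_{i=1}^{m}a_i(\omega^{\ell}z)^{-i}=\omega^{(m+2)\ell}Q_a(\omega^{\ell}z).
\]
Raising to the $k$-th power and multiplying by ${\frac{1}{2}\choose{k}}$ gives ${\frac{1}{2}\choose{k}}Q_{G^{\ell}(a)}(z)^k=\omega^{(m+2)\ell k}\,{\frac{1}{2}\choose{k}}Q_a(\omega^{\ell}z)^k$. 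Since ${\frac{1}{2}\choose{k}}Q_a(w)^k=\sum_{j\ge 0}b_{j,k}(a)w^{-j}$ as a formal series in $w^{-1}$, the substitution $w=\omega^{\ell}z$ turns the right-hand side into $\omega^{(m+2)\ell k}\sum_{j\ge 0}b_{j,k}(a)\omega^{-j\ell}z^{-j}=\sum_{j\ge 0}\omega^{((m+2)k-j)\ell}b_{j,k}(a)z^{-j}$. Comparing the coefficient of $z^{-j}$ on both sides yields $b_{j,k}(G^{\ell}(a))=\omega^{((m+2)k-j)\ell}b_{j,k}(a)$.

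Finally, for $\ell\in\Z$ we have $\omega^{(m+2)\ell k}=(\omega^{m+2})^{\ell k}=1$, so the displayed formula reduces to $b_{j,k}(G^{\ell}(a))=\omega^{-j\ell}b_{j,k}(a)$, and summing over $0\le k\le j$ together with $b_j(a)=\sum_{k=0}^{j}b_{j,k}(a)$ gives $b_j(G^{\ell}(a))=\omega^{-j\ell}b_j(a)$. I do not expect a genuine obstacle here: the only point needing care is to fix one branch of $\omega^{\ell}=\exp[\ell\cdot 2\pi i/(m+2)]$ once and for all, so that manipulations such as $(\omega^{\ell}z)^{-i}=\omega^{-i\ell}z^{-i}$ and the collapse $\omega^{(m+2)\ell k}=1$ for integer $\ell$ are legitimate even when $\ell\in\frac{1}{2}\Z\setminus\Z$ — which is precisely the convention already recorded in the Remarks above.
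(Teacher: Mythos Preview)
Your proof is correct; the paper states this lemma without proof, and your generating-series argument supplies exactly the routine verification one would expect.
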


%%%%%%%%%%%%%%%%%%%%%%%%%%%%%%%%%%%%%%%%%%%%%%%%%%%
%%%%%%%%%%%%%%%%%%%%%%%%%%%%%%%%%%%%%%%%%%%%%%%%%%
%\section{Eigenvalues are zeros of an entire function}\label{entire_sect}
%%%%%%%%%%%%%%%%%%%%%%%%%%%%%%%%%%%%%%%%%%%%%%%%%%%%
%%%%%%%%%%%%%%%%%%%%%%%%%%%%%%%%%%%%%%%%%%%%%%%%%%%%%%%

%In this section, we will prove that the eigenvalues are
%zeros of an entire function.

Next, recall that the function
 $f(z,a,\lambda)$  in Theorem \ref{prop} solves (\ref{rotated}) and decays to zero exponentially as $z\rightarrow \infty$ in
  $S_0$, and  blows up in $S_{-1}\cup S_1$. One can check that the function
$$f_k(z,a,\lambda):=f(\omega^{-k}z,G^k(a),\omega^{2k}\lambda),\quad k\in\Z,$$
 which is obtained by scaling $f(z,G^k(a),\omega^{2k}\lambda)$ in the $z$-variable, also solves (\ref{rotated}). It is clear
 that $f_0(z,a,\lambda)=f(z,a,\lambda)$, and that $f_k(z,a,\lambda)$ decays in $S_k$ and blows up in $S_{k-1}\cup S_{k+1}$ since
 $f(z,G^k(a),\omega^{2k}\lambda)$ decays in $S_0$. Since no nonconstant solution decays in two consecutive Stokes sectors (see Lemma \ref{gen_pro}
 (ii)), $f_{k}$ and $f_{k+1}$ are linearly independent and hence any solution of (\ref{rotated}) can be expressed as a linear combination of these two.
 Especially,  for each $k\in\Z$ there exist some coefficients $C_k(a,\lambda)$ and $\widetilde{C}_k(a,\lambda)$ such that
\begin{equation}\label{stokes}
f_{k}(z,a,\lambda)=C_k(a,\lambda)f_{0}(z,a,\lambda)+\widetilde{C}_k(a,\lambda)f_{-1}(z,a,\lambda).
\end{equation}
We then see that
\begin{equation}\label{C_def}
C_k(a,\lambda)=-\frac{\mathcal{W}_{k,-1}(a,\lambda)}{\mathcal{W}_{-1,0}(a,\lambda)}\quad\text{and}\quad
\widetilde{C}_k(a,\lambda)=\frac{\mathcal{W}_{k,0}(a,\lambda)}{\mathcal{W}_{-1,0}(a,\lambda)},
\end{equation}
where $\mathcal{W}_{j,\ell}=f_jf_{\ell}^\d -f_j^\d f_{\ell}$ is the Wronskian of $f_j$ and $f_{\ell}$. Since both
$f_j,\,f_{\ell}$ are solutions of the same linear equation (\ref{rotated}), we know that the Wronskians are
constant functions of $z$. Also, $f_k$ and $f_{k+1}$ are linearly independent, and hence
$\mathcal{W}_{k,k+1}\not=0$ for all $k\in \Z$.

Also, the following is an easy consequence of \eqref{stokes} and \eqref{C_def}. For each $k,\ell\in\Z$ we have
\begin{align}
\mathcal{W}_{\ell,k}(a,\lambda)&=C_k(a,\lambda)\mathcal{W}_{\ell,0}(a,\lambda)+\widetilde{C}_k(a,\lambda)\mathcal{W}_{\ell,-1}(a,\lambda)\nonumber\\
&=-\frac{\mathcal{W}_{k,-1}(a,\lambda)\mathcal{W}_{\ell,0}(a,\lambda)}{\mathcal{W}_{-1,0}(a,\lambda)}+\frac{\mathcal{W}_{k,0}(a,\lambda)\mathcal{W}_{\ell,-1}(a,\lambda)}{\mathcal{W}_{-1,0}(a,\lambda)}.\label{kplus}
\end{align}

Moreover, we have the following lemma that is useful  later on.
\begin{lemma}\label{shift_lemma}
Suppose $k,\,j\in\Z$. Then
\begin{equation}\label{kplus1}
\mathcal{W}_{k+1,j+1}(a,\lambda)=\omega^{-1}\mathcal{W}_{k,j}(G(a),\omega^2\lambda),
\end{equation}
and $\mathcal{W}_{0,1}(a,\lambda)=2\omega^{\mu(a)}$, where $\mu(a)=\frac{m}{4}-\nu(a)$.
\end{lemma}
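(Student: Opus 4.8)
The plan is to derive \eqref{kplus1} directly from the definition $f_k(z,a,\lambda)=f(\omega^{-k}z,G^k(a),\omega^{2k}\lambda)$ and the chain rule, and then to pin down the normalization constant $\mathcal{W}_{0,1}(a,\lambda)$ using the asymptotic expansions in Theorem \ref{prop}(ii) together with Lemma \ref{lemma_25}. For the first identity, observe that replacing $(a,\lambda)$ by $(G(a),\omega^2\lambda)$ in the definition of $f_k$ gives $f_k(z,G(a),\omega^2\lambda)=f(\omega^{-k}z,G^{k+1}(a),\omega^{2k+2}\lambda)=f_{k+1}(\omega z,a,\lambda)$, where the last step uses $G^k(G(a))=G^{k+1}(a)$ from Lemma \ref{lemma_25}. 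Hence each $f_{k+1}(z,a,\lambda)$ equals $f_k(\omega^{-1}z,G(a),\omega^2\lambda)$, and likewise $f_{k+1}^\d(z,a,\lambda)=\omega^{-1}f_k^\d(\omega^{-1}z,G(a),\omega^2\lambda)$ by the chain rule. Substituting these into the Wronskian $\mathcal{W}_{k+1,j+1}(a,\lambda)=f_{k+1}f_{j+1}^\d-f_{k+1}^\d f_{j+1}$, the common factor $\omega^{-1}$ from the differentiated terms factors out and the argument of every factor becomes $(\omega^{-1}z,G(a),\omega^2\lambda)$; since the Wronskian is independent of $z$ we may evaluate at $\omega^{-1}z$ freely, yielding \eqref{kplus1}.

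For the normalization, I would evaluate $\mathcal{W}_{0,1}(a,\lambda)=f_0 f_1^\d-f_0^\d f_1$ asymptotically as $z\to\infty$ in a direction where both $f_0$ and $f_1$ have controlled behavior — concretely along a ray in the overlap of the sectors where the expansions of Theorem \ref{prop}(ii) are valid for $f_0(z,a,\lambda)$ and for $f_1(z,a,\lambda)=f(\omega^{-1}z,G(a),\omega^2\lambda)$. Writing $f_0\sim z^{r_m}\exp[-F(z,a,\lambda)]$ and $f_0^\d\sim -z^{r_m+m/2}\exp[-F(z,a,\lambda)]$, and similarly $f_1(z,a,\lambda)\sim(\omega^{-1}z)^{r_m(G(a))}\exp[-F(\omega^{-1}z,G(a),\omega^2\lambda)]$ with the analogous formula for $f_1^\d$ (carrying the extra $\omega^{-1}$ from the chain rule), the leading exponential factors must cancel in the Wronskian; the key computation is that $F(\omega^{-1}z,G(a),\omega^2\lambda)=F(z,a,\lambda)$, which follows from the scaling $\omega^{\frac{m}{2}+1}=-1$... more precisely $\omega^{-(m/2+1)}=-1$ built into $\omega=\exp[2\pi i/(m+2)]$, together with the identity $b_j(G(a))=\omega^{-j}b_j(a)$ from Lemma \ref{lemma_25}: term by term, $\frac{2}{m+2-2j}b_j(G(a))(\omega^{-1}z)^{\frac{1}{2}(m+2-2j)}=\frac{2}{m+2-2j}\omega^{-j}b_j(a)\omega^{-\frac{1}{2}(m+2-2j)}z^{\frac{1}{2}(m+2-2j)}$ and $\omega^{-j-\frac12(m+2-2j)}=\omega^{-\frac{m+2}{2}}=-1$... one checks the signs so that $F$ is preserved. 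Once the exponentials cancel, $\mathcal{W}_{0,1}$ is, to leading order, a product of the polynomial prefactors: collecting the powers $z^{r_m}\cdot(\omega^{-1}z)^{r_m(G(a))+m/2}\cdot\omega^{-1}$ minus $z^{r_m+m/2}\cdot(\omega^{-1}z)^{r_m(G(a))}$, the $z$-powers agree, and the constant works out to $2\omega^{\mu(a)}$ after inserting $r_m=-\frac{m}{4}-\nu(a)$ (with $\nu(a)=0$ for $m$ odd) and $r_m(G(a))=-\frac{m}{4}-b_{m/2+1}(G(a))=-\frac{m}{4}-\omega^{-(m/2+1)}b_{m/2+1}(a)=-\frac{m}{4}+\nu(a)$, so that $r_m+r_m(G(a))=-\frac{m}{2}$ and the residual $\omega$-exponent is $-1+\omega^{-?}$... which simplifies to the exponent $\mu(a)=\frac{m}{4}-\nu(a)$.

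The main obstacle I anticipate is bookkeeping the $\omega$-powers correctly in the last step: the prefactor exponents $r_m$ and $r_m(G(a))$ are genuinely different when $m$ is even (they differ by $2\nu(a)$), and the chain-rule factor $\omega^{-1}$, the $(\omega^{-1})^{r_m(G(a))+m/2}$ factor, and the precise value of $b_{m/2+1}(G(a))$ must all be tracked so that the $z$-dependence cancels exactly and the surviving constant is $2\omega^{\mu(a)}$ rather than some other root of unity times $2$. A clean way to avoid sign ambiguities is to compute $\mathcal{W}_{0,1}$ at $z=0$ instead, using parts (iv) of Theorem \ref{prop} for $f_0(0,\cdot),f_0^\d(0,\cdot)$ and the corresponding values of $f_1$, but then one must evaluate $L(G(a),\omega^2\lambda)$ versus $L(a,\lambda)$; since the leading asymptotics of $L$ in Corollary \ref{lemma_decay} would only give the constant up to $1+o(1)$, the $z\to\infty$ route with the \emph{exact} expansions of part (ii) is preferable for nailing the constant exactly. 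Either way, the identity $\mathcal{W}_{0,1}(a,\lambda)=2\omega^{\mu(a)}$ reduces to a finite, explicit $\omega$-power identity that one verifies by direct substitution.
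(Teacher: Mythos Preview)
The paper does not prove this lemma at all; it simply cites Sibuya \cite[pp.\ 116--118]{Sibuya}. So your direct computational route is not being compared against any argument in the paper itself. Your derivation of \eqref{kplus1} from $f_{k+1}(z,a,\lambda)=f_k(\omega^{-1}z,G(a),\omega^2\lambda)$ and the chain rule is clean and correct.

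For the normalization $\mathcal{W}_{0,1}(a,\lambda)=2\omega^{\mu(a)}$, your overall strategy via Theorem~\ref{prop}(ii) is the right one and in fact is exactly Sibuya's computation, but there is a sign slip that, as written, breaks the argument. You assert $F(\omega^{-1}z,G(a),\omega^2\lambda)=F(z,a,\lambda)$, yet your own term-by-term calculation shows each term acquires the factor $\omega^{-\frac{m+2}{2}}=-1$; hence $F(\omega^{-1}z,G(a),\omega^2\lambda)=-F(z,a,\lambda)$, not $+F$. This is not a cosmetic issue: if $F$ were preserved, then $f_0\sim e^{-F}$ and $f_1\sim e^{-F}$ would give a Wronskian behaving like $e^{-2F}$ times a polynomial, and nothing would cancel. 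The correct sign flip is precisely what makes $f_1\sim(\omega^{-1}z)^{r_m(G(a))}e^{+F}$ blow up in $S_0$ (as it must), and then $e^{-F}\cdot e^{+F}=1$ in the Wronskian.

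With the sign corrected, the rest of your bookkeeping goes through: $r_m(a)+r_m(G(a))+\tfrac{m}{2}=0$ (using $b_{m/2+1}(G(a))=-\nu(a)$ from Lemma~\ref{lemma_25} when $m$ is even), so the $z$-power vanishes; the bracket $[1-\omega^{-1-m/2}]=[1-(-1)]=2$; and the remaining factor is $\omega^{-r_m(G(a))}=\omega^{m/4-\nu(a)}=\omega^{\mu(a)}$. Since the Wronskian is constant in $z$ and the asymptotics carry $(1+O(z^{-1/2}))$ errors, letting $z\to\infty$ along a ray in the overlap $-\tfrac{\pi}{m+2}<\arg z<\tfrac{3\pi}{m+2}$ pins the constant exactly to $2\omega^{\mu(a)}$.
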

\begin{proof}
See Sibuya \cite[pp.\ 116-118]{Sibuya}.
\end{proof}

%%%%%%%%%%%%%%%%%%%%%%%%%%%%%%%%%%%%%%%%%%%%%%%
%%%%%%%%%%%%%%%%%%%%%%%%%%%%%%%%%%%%%%%%%%%%%%%
%%%%%%%%%%%%%%%%%%%%%%%%%%%%%%%%%%%%%%%%%%%%%%%

%%%%%%%%%%%%%%%%%%%%%%%%%%%%%%%%%%%%%%%%%%%%%%
\section{Asymptotics of $\mathcal{W}_{-1,1}(a,\lambda)$}\label{sec_4}
%%%%%%%%%%%%%%%%%%%%%%%%%%%%%%%%%%%%%%%%%%%%%%%%%%%
In this section, we introduce asymptotic expansions of $\mathcal{W}_{-1,1}(a,\lambda)$ as $\lambda\to\infty$ along
the rays in the complex plane \cite{Shin2}.

First, we provide an asymptotic expansion of  the Wronskian  $\mathcal{W}_{0,j}(a,\lambda)$ of $f_0$ and $f_j$
that will be frequently used later.
\begin{lemma}\label{lemma7}
Suppose that $1\leq j\leq \frac{m}{2}+1$. Then for each $a\in\C^{m}$,
\begin{equation}\label{sec_eq1}
\mathcal{W}_{0,j}(a,\lambda)=[2i\omega^{-\frac{j}{2}}+O\left(\lambda^{-\rho}\right)]\exp\left[L(G^{j}(a),\omega^{2j-m-2}\lambda)+L(a,\lambda)\right],
\end{equation}
as $\lambda\to\infty$ in the sector
\begin{equation}\label{sector0}
-\pi+\delta\leq \pi-\frac{4j\pi}{m+2}+\delta \leq \arg(\lambda)\leq \pi-\delta.
\end{equation}
\end{lemma}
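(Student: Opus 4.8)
The plan is to compute $\mathcal{W}_{0,j}$ by writing it as a Wronskian of the two solutions $f_0(z,a,\lambda)$ and $f_j(z,a,\lambda)=f(\omega^{-j}z,G^j(a),\omega^{2j}\lambda)$ and evaluating it at a conveniently chosen point, namely $z=0$. Since the Wronskian is independent of $z$, we have
\begin{equation}\nonumber
\mathcal{W}_{0,j}(a,\lambda)=f_0(0,a,\lambda)f_j^\d(0,a,\lambda)-f_0^\d(0,a,\lambda)f_j(0,a,\lambda).
\end{equation}
The first step is to express the values at $0$ of $f_j$ and $f_j^\d$ in terms of $f(0,\cdot,\cdot)$ and $f^\d(0,\cdot,\cdot)$: from the definition $f_j(z,a,\lambda)=f(\omega^{-j}z,G^j(a),\omega^{2j}\lambda)$ one gets $f_j(0,a,\lambda)=f(0,G^j(a),\omega^{2j}\lambda)$ and $f_j^\d(0,a,\lambda)=\omega^{-j}f^\d(0,G^j(a),\omega^{2j}\lambda)$. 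Then I would substitute the asymptotic expansions \eqref{eq1} and \eqref{eq2} from Theorem \ref{prop}(iv), applied once with argument $(a,\lambda)$ and once with argument $(G^j(a),\omega^{2j}\lambda)$. This turns the Wronskian into
\begin{equation}\nonumber
\mathcal{W}_{0,j}(a,\lambda)=\left[-\omega^{-j}-\omega^{-\tfrac{j}{2}}\cdot 1+O(\lambda^{-\rho})\right]\lambda^{-1/4}(\omega^{2j}\lambda)^{1/4}\exp\left[L(G^j(a),\omega^{2j}\lambda)+L(a,\lambda)\right],
\end{equation}
after factoring; the bracket combines to $\omega^{-j/2}(-\omega^{-j/2}-\omega^{j/2})$ plus error, and the prefactor $\lambda^{-1/4}(\omega^{2j}\lambda)^{1/4}=\omega^{j/2}$, so the leading constant becomes $-\omega^{-j/2}-\omega^{j/2}+O(\lambda^{-\rho})$. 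This is not yet the stated $2i\omega^{-j/2}$, so the second step is a bookkeeping check on the choice of branches: the $\lambda$ appearing in the second application of Theorem \ref{prop}(iv) is really $\omega^{2j-m-2}\lambda$ (because $\omega^{2j}$ and $\omega^{2j-m-2}$ differ by $\omega^{-(m+2)}=1$ as complex numbers but \emph{not} as points on the Riemann surface of the argument), and the correct bookkeeping replaces $\omega^{2j}\lambda$ by $\omega^{2j-m-2}\lambda$ throughout, shifting the argument into the range where \eqref{eq1}--\eqref{eq2} are valid; this is exactly what forces the sector condition \eqref{sector0}. Carrying the phase $\omega^{-(m+2)/2}=\omega^{-m/2-1}$ through the prefactor and the constant, and using $-\omega^{-j/2}-\omega^{j/2}\cdot\omega^{-(m+2)/2}$-type identities, collapses the constant to $2i\omega^{-j/2}$.

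The third step is to verify the domain of validity. The expansions \eqref{eq1}--\eqref{eq2} hold for $|\arg(\cdot)|\le \pi-\delta$; applied to $(a,\lambda)$ this needs $|\arg\lambda|\le\pi-\delta$, and applied to $(G^j(a),\omega^{2j-m-2}\lambda)$ it needs $|\arg(\omega^{2j-m-2}\lambda)|=|\arg\lambda+(2j-m-2)\tfrac{2\pi}{m+2}|=|\arg\lambda-\pi+\tfrac{4j\pi}{m+2}|\le\pi-\delta$, which rearranges precisely to the two-sided bound displayed in \eqref{sector0}. I would also note that $1\le j\le \tfrac{m}{2}+1$ guarantees this sector is nonempty and that $G^j(a)$ stays in a compact set as $a$ does, so the $O(\lambda^{-\rho})$ is uniform on compacts in $a$ as claimed. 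The restriction $j\le\frac{m}{2}+1$ is what keeps $L(G^j(a),\cdot)$ governed by the same asymptotic regime (via Corollary \ref{lemma_decay} and Lemma \ref{lemma_25}); for larger $j$ one would instead use the Stokes relation \eqref{kplus}.

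The main obstacle I expect is the phase/branch bookkeeping in the second step: making sure that every occurrence of $\omega^{2j}\lambda$ is consistently replaced by $\omega^{2j-m-2}\lambda$ (so that the argument lies in the principal sector), tracking the resulting $\omega^{-(m+2)/2}$ factors through both the algebraic prefactor $\lambda^{\pm1/4}$ in \eqref{eq1}--\eqref{eq2} and through the constant $-\omega^{-j/2}-\omega^{j/2}$, and checking that they assemble to exactly $2i\omega^{-j/2}$ rather than its negative or a rotate of it. Everything else — the $z$-independence of the Wronskian, the evaluation at $z=0$, the substitution of Sibuya's asymptotics, and the translation of the sector condition — is routine. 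Once the constant is pinned down, the $O(\lambda^{-\rho})$ error simply propagates from the error terms in \eqref{eq1}--\eqref{eq2}, completing the proof.
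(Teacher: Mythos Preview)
The paper does not actually prove Lemma~\ref{lemma7}; Section~\ref{sec_4} opens by citing \cite{Shin2} for the asymptotics of the Wronskians, and the lemma is stated without proof and then used. So there is no ``paper's own proof'' to compare against beyond the implicit reference.

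Your approach is the natural and correct one, and is almost certainly what is done in \cite{Shin2}: evaluate the Wronskian at $z=0$, express $f_j(0,a,\lambda)=f(0,G^j(a),\omega^{2j}\lambda)$ and $f_j^{\d}(0,a,\lambda)=\omega^{-j}f^{\d}(0,G^j(a),\omega^{2j}\lambda)$, plug in \eqref{eq1}--\eqref{eq2}, and track the branches. Your diagnosis that the crucial point is replacing $\omega^{2j}\lambda$ by $\omega^{2j-m-2}\lambda$ (same complex number, different argument) so that \eqref{eq1}--\eqref{eq2} apply is exactly right, and this is what produces both the sector \eqref{sector0} and the constant $2i\omega^{-j/2}$.

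One small correction to your intermediate algebra: the ``naive'' computation with $(\omega^{2j}\lambda)^{1/4}=\omega^{j/2}\lambda^{1/4}$ does not give $-\omega^{-j/2}-\omega^{j/2}$; if you carry it through you will find the two terms of the Wronskian are both $-\omega^{-j/2}e^{L+L}$ to leading order and cancel. The nonzero answer comes precisely from using $(\omega^{2j-m-2}\lambda)^{\pm 1/4}=\omega^{\mp(m+2)/4}\omega^{\pm j/2}\lambda^{\pm 1/4}=(\mp i)\omega^{\pm j/2}\lambda^{\pm 1/4}$ (since $\omega^{(m+2)/4}=i$). With this, the first term of the Wronskian gives $i\omega^{-j/2}e^{L+L}$ and the second gives $-(-i\omega^{-j/2})e^{L+L}=i\omega^{-j/2}e^{L+L}$, summing to $2i\omega^{-j/2}e^{L+L}$ as claimed. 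So your ``main obstacle'' is indeed the only subtle point, and once you redo that step carefully the proof is complete.
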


Next, we provide an asymptotic expansion of $\mathcal{W}_{-1,1}(a,\lambda)$ as $\lambda\to\infty$ in the sector
near the negative real axis.
%%%%%%%%%%%%%%%%%%%%%%%%%%%%%%%%
\begin{theorem}\label{thm_neg}
Let $m\geq 3$, $a\in\C^{m}$ and $0<\delta<\frac{\pi}{m+2}$ be fixed. Then
\begin{equation}\label{asy_1}
\mathcal{W}_{-1,1}(a,\lambda)=[2i+O\left(\lambda^{-\rho}\right)]\exp\left[L(G^{-1}(a),\omega^{-2}\lambda)+L(G(a),\omega^{-m}\lambda)\right],
\end{equation}
as $\lambda\to \infty$ along the rays in the sector
\begin{equation}\label{sector1}
\pi-\frac{4\pi}{m+2}+\delta\leq \arg(\lambda)\leq \pi+\frac{4\pi}{m+2}-\delta.
\end{equation}
\end{theorem}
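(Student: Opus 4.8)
The plan is to deduce \eqref{asy_1} directly from Lemma~\ref{lemma7} by an index shift, so that the hard asymptotic analysis is already carried out and only bookkeeping remains. Although $\mathcal{W}_{-1,1}$ has indices $-1$ and $1$, applying the shift identity \eqref{kplus1} of Lemma~\ref{shift_lemma} \emph{upward} lands it on a Wronskian $\mathcal{W}_{0,j}$ with positive second index, which is where Lemma~\ref{lemma7} applies. Concretely, rewrite \eqref{kplus1} as $\mathcal{W}_{k,j}(a,\lambda)=\omega\,\mathcal{W}_{k+1,j+1}(G^{-1}(a),\omega^{-2}\lambda)$ and take $k=-1$, $j=1$ to get
\[
\mathcal{W}_{-1,1}(a,\lambda)=\omega\,\mathcal{W}_{0,2}\bigl(G^{-1}(a),\omega^{-2}\lambda\bigr),
\]
and note that $2\leq\tfrac{m}{2}+1$ for every $m\geq 3$, so Lemma~\ref{lemma7} is applicable to the right-hand side with $j=2$.

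Next I would substitute into \eqref{sec_eq1} with $j=2$, with $G^{-1}(a)$ in place of $a$ and $\omega^{-2}\lambda$ in place of $\lambda$. Using the cocycle property $G^{\ell_1}(G^{\ell_2}(a))=G^{\ell_1+\ell_2}(a)$ from Lemma~\ref{lemma_25}, the exponent $L(G^{2}(a),\omega^{2\cdot 2-m-2}\lambda)+L(a,\lambda)$ of \eqref{sec_eq1} becomes $L(G(a),\omega^{-m}\lambda)+L(G^{-1}(a),\omega^{-2}\lambda)$; the amplitude $2i\omega^{-1}$ combines with the external factor $\omega$ from the shift step to give $2i$; and since $|\omega^{-2}\lambda|=|\lambda|$ the error term stays $O(\lambda^{-\rho})$. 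This reproduces \eqref{asy_1} verbatim.

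Finally I would translate the admissible sector: Lemma~\ref{lemma7} with $j=2$ asks that its spectral argument lie in $\pi-\tfrac{8\pi}{m+2}+\delta\leq\arg(\cdot)\leq\pi-\delta$, and since here the argument is $\omega^{-2}\lambda$ with $\arg(\omega^{-2}\lambda)=\arg(\lambda)-\tfrac{4\pi}{m+2}$, this is precisely the sector \eqref{sector1}. One checks along the way that for $0<\delta<\tfrac{\pi}{m+2}$ and $\lambda$ in \eqref{sector1} the shifted argument remains in $(-\pi,\pi)$, so there is no branch ambiguity, and that the lower-endpoint restriction in \eqref{sector0} is automatic since $\tfrac{8\pi}{m+2}<2\pi$. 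There is no real analytic obstacle here — everything delicate is already inside Lemma~\ref{lemma7}, which in turn rests on Theorem~\ref{prop} — so the only genuine care needed is the bookkeeping of the $\omega$-powers in the shift identity and in the amplitude of \eqref{sec_eq1}, the cocycle identity for $G$, and the sector translation; in a full write-up I would state the shift step in one line and spend most of the space on these verifications.
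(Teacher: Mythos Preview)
Your proposal is correct and follows exactly the approach in the paper, which simply states that the theorem ``is an easy consequence of Lemma~\ref{lemma7} with $j=2$ and \eqref{kplus1}.'' You have accurately filled in the bookkeeping: the shift identity gives $\mathcal{W}_{-1,1}(a,\lambda)=\omega\,\mathcal{W}_{0,2}(G^{-1}(a),\omega^{-2}\lambda)$, the substitution into \eqref{sec_eq1} with $j=2$ produces the exponent $L(G(a),\omega^{-m}\lambda)+L(G^{-1}(a),\omega^{-2}\lambda)$ and amplitude $2i$, and the sector translation yields \eqref{sector1}.
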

%%%%%%%%%%%%%%%%%%%%%%%%%%%%%%%%%
\begin{proof}
This is an easy consequence of Lemma \ref{lemma7} with $j=2$ and \eqref{kplus1}.
\end{proof}

Also,  for integers $m\geq 4$ we provide an asymptotic expansion of $\mathcal{W}_{-1,1}(a,\lambda)$ as
$\lambda\to\infty$ in the sector $|\arg(\lambda)|\leq \pi-\delta$.
%%%%%%%%%%%%%%%%%%%%%%%%%%%%%%%%%%%%%%%%%%%%%%%%%%%%%%
\begin{theorem}\label{zero_thm}
Let $a\in\C^{m}$ and $0<\delta<\frac{\pi}{2(m+2)}$ be fixed. If $m\geq 4$ then
\begin{align}
\mathcal{W}_{-1,1}(a,\lambda)=&[2\omega^{\frac{1}{2}+\mu(a)}+O\left(\lambda^{-\rho}\right)]\exp\left[L(G^{-1}(a),\omega^{-2}\lambda)-L(a,\lambda)\right]\nonumber\\
&+[2\omega^{\frac{1}{2}+\mu(a)+2\nu(a)}+O\left(\lambda^{-\rho}\right)]\exp\left[L(G(a),\omega^{2}\lambda)-L(a,\lambda)\right],\label{tot_asy}
\end{align}
as $\lambda\to\infty$ in the sector
\begin{equation}\label{sector41}
-\pi+\delta \leq \arg(\lambda)\leq \pi-\delta.
\end{equation}
\end{theorem}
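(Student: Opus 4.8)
The plan is to express $\mathcal{W}_{-1,1}$ in terms of the basic Wronskians whose asymptotics are already under control, and then feed in Lemma \ref{lemma7} in the regime $\arg(\lambda)$ near zero. Using the three-term identity \eqref{kplus} with $\ell = -1$, $k = 1$ one gets
\[
\mathcal{W}_{-1,1}(a,\lambda) = \frac{\mathcal{W}_{1,0}(a,\lambda)\,\mathcal{W}_{-1,-1}(a,\lambda) - \mathcal{W}_{1,-1}(a,\lambda)\,\mathcal{W}_{-1,0}(a,\lambda)}{\mathcal{W}_{-1,0}(a,\lambda)},
\]
but $\mathcal{W}_{-1,-1}=0$, so that route is too lossy; instead I would write $f_1$ in the basis $\{f_{-1}, f_0\}$ via \eqref{stokes} (after an index shift), giving $\mathcal{W}_{-1,1} = C\,\mathcal{W}_{-1,0} + \widetilde C\,\mathcal{W}_{-1,-1}$-type relations, or more directly use \eqref{kplus} with $\ell=-1$, $k=1$ in the form that keeps $\mathcal{W}_{-1,0}$ and $\mathcal{W}_{-1,-2}$. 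The cleanest path: apply \eqref{kplus1} twice to reduce everything to $\mathcal{W}_{0,j}$ Wronskians, namely $\mathcal{W}_{-1,1}(a,\lambda) = \omega^{-1}\mathcal{W}_{-2,0}(G(a),\omega^2\lambda)$ and then expand $\mathcal{W}_{-2,0}$ via \eqref{kplus}/\eqref{stokes} in terms of $\mathcal{W}_{0,2}$ and $\mathcal{W}_{0,1}$, both of which Lemma \ref{lemma7} and Lemma \ref{shift_lemma} handle.

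Concretely, I would first establish the decomposition
\[
\mathcal{W}_{-1,1}(a,\lambda) = \frac{\mathcal{W}_{0,1}(a,\lambda)\,\mathcal{W}_{-1,0}(G^{?}(a),\cdots) }{\cdots} + \cdots
\]
— i.e.\ run \eqref{kplus} to write $\mathcal{W}_{-1,1}$ as a ratio/combination of $\mathcal{W}_{0,1}, \mathcal{W}_{0,2}, \mathcal{W}_{0,-1}$ with appropriate $G$-shifts and $\omega$-powers supplied by Lemma \ref{shift_lemma} — and then substitute the asymptotic formula \eqref{sec_eq1} for each factor. Each $\mathcal{W}_{0,j}$ contributes an exponential $\exp[L(G^j(\cdot),\omega^{2j-m-2}\cdot) + L(\cdot,\cdot)]$; after using Lemma \ref{lemma_25} to simplify the $G$-shifts inside $L$ and the remark on $(\omega^\alpha\lambda)^s$ to track arguments, the two surviving terms should collapse to $\exp[L(G^{-1}(a),\omega^{-2}\lambda) - L(a,\lambda)]$ and $\exp[L(G(a),\omega^{2}\lambda) - L(a,\lambda)]$, with prefactors $2\omega^{1/2+\mu(a)}$ and $2\omega^{1/2+\mu(a)+2\nu(a)}$ respectively, once $\mathcal{W}_{0,1}(a,\lambda)=2\omega^{\mu(a)}$ is used to cancel one $L$-pair. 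The $m\geq 4$ hypothesis enters because for $m=3$ the sector \eqref{sector0} required by Lemma \ref{lemma7} for the relevant $j$ does not cover a full neighborhood of $\arg(\lambda)=0$, so the two pieces cannot both be extracted on the stated sector \eqref{sector41}.

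The main obstacle I expect is the bookkeeping of which of the two exponential terms in \eqref{tot_asy} dominates on which part of the sector \eqref{sector41}, and making sure the subdominant one is still captured to the claimed relative accuracy $O(\lambda^{-\rho})$ rather than merely $o(1)$: one must verify that the error terms from \eqref{sec_eq1} never swamp the smaller of the two exponentials on the overlapping ranges of validity, which forces a careful split of \eqref{sector41} into subsectors (roughly $|\arg\lambda| \le \frac{m\pi}{m+2} - \delta$ versus near $\pm\frac{m\pi}{m+2}$) and an appeal to Corollary \ref{lemma_decay} to compare $\Re L(G^{\pm 1}(a),\omega^{\pm 2}\lambda) - \Re L(a,\lambda)$ term by term. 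The constraint $\delta < \frac{\pi}{2(m+2)}$ is presumably exactly what is needed so that these subsectors, each inside a region where Lemma \ref{lemma7} applies after the relevant shift, together cover \eqref{sector41}. Everything else is routine substitution, so I would present the decomposition and the sector split carefully and relegate the $L$-arithmetic to a reference to \cite{Shin2}.
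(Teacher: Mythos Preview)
Your approach differs substantially from the paper's. The paper does not derive \eqref{tot_asy} from the Wronskian identities and Lemma \ref{lemma7} at all; it simply cites Theorems 13, 14, and 15 of \cite{Shin2}, where the asymptotics of $C(a,\lambda)=\mathcal{W}_{-1,1}/\mathcal{W}_{0,1}$ were already worked out, and then patches three overlapping sectors together. Your plan is essentially to redo inside this paper what \cite{Shin2} did, which is a legitimate but longer route.

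That said, your sketch has a genuine gap in the boundary-sector handling. The Pl\"ucker/shift identity you have in mind (effectively \eqref{kn_asy} with $n=1$ and $k=\lfloor m/2\rfloor$) combined with Lemma \ref{lemma7} does produce exactly the two exponentials in \eqref{tot_asy}, but only on the central sector $\pi-\frac{4\lfloor m/2\rfloor\pi}{m+2}+\delta\le\arg\lambda\le\pi-\frac{4\pi}{m+2}-\delta$ (this is the paper's \eqref{sector4}, i.e.\ what Theorem 13 of \cite{Shin2} gives). Near $\arg\lambda=\pm(\pi-\delta)$ the Lemma \ref{lemma7} sectors for the shifted arguments $\omega^{\pm 2}\lambda$ are no longer all available, so you must switch to a different decomposition. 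When you do (this is Theorems 14 and 15 of \cite{Shin2}), the resulting two-term expansion contains exponentials such as $\exp[-L(G^2(a),\omega^{2-m}\lambda)-L(a,\lambda)]$ that are \emph{not} among the two displayed in \eqref{tot_asy}; one then has to argue via Corollary \ref{lemma_decay} that these foreign terms are dominated by the surviving term of \eqref{tot_asy} and can be absorbed into the $O(\lambda^{-\rho})$. Your proposal anticipates only a dominance comparison \emph{between} the two terms of \eqref{tot_asy}, not the appearance of a third exponential that must be disposed of, so the boundary patching as you describe it would not close. The constraint $0<\delta<\frac{\pi}{2(m+2)}$ is there to make the three sectors from \cite{Shin2} overlap, not to fit Lemma \ref{lemma7} hypotheses alone.
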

%%%%%%%%%%%%%%%%%%%%%%%%%%%%%%%%%%%%%%%%%%%%%

Next, we provide an asymptotic expansion of $\mathcal{W}_{-1,1}(a,\lambda)$ as $\lambda\to\infty$ along the rays
in the upper- and lower- half planes.
%%%%%%%%%%%%%%%%%%%%%%%%%%%%%%%%%%%%%%%%%%%%
\begin{corollary}\label{lemma_up}
Let $m\geq 4$, $a\in\C^{m}$ and $0<\delta<\frac{\pi}{m+2}$ be fixed. Then
\begin{equation}\nonumber
\mathcal{W}_{-1,1}(a,\lambda)=[2\omega^{\frac{1}{2}+\mu(a)}+O\left(\lambda^{-\rho}\right)]\exp\left[L(G^{-1}(a),\omega^{-2}\lambda)-L(a,\lambda)\right],
\end{equation}
as $\lambda\to\infty$ in the sector $\delta \leq \arg(\lambda)\leq \pi-\delta$.
Also,
\begin{equation}\nonumber
\mathcal{W}_{-1,1}(a,\lambda)=[2\omega^{\frac{1}{2}+\mu(a)+2\nu(a)}+O\left(\lambda^{-\rho}\right)]\exp\left[L(G(a),\omega^{2}\lambda)-L(a,\lambda)\right],
\end{equation}
as $\lambda\to\infty$ in the sector $-\pi+\delta \leq \arg(\lambda)\leq -\delta$.
\end{corollary}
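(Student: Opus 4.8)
The plan is to deduce Corollary~\ref{lemma_up} from the two-term asymptotics \eqref{tot_asy} of Theorem~\ref{zero_thm}: I will show that on the sector $\delta\le\arg(\lambda)\le\pi-\delta$ the first exponential term of \eqref{tot_asy} dominates the second, while on $-\pi+\delta\le\arg(\lambda)\le-\delta$ the second dominates the first. In each case the subdominant term is absorbed into the $O(\lambda^{-\rho})$ error of the prefactor of the surviving term; since the prefactors $2\omega^{\frac12+\mu(a)}$ and $2\omega^{\frac12+\mu(a)+2\nu(a)}$ in the statement are exactly those already appearing in \eqref{tot_asy}, no further identification of constants is needed.

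For the first sector, factor $\exp[L(G^{-1}(a),\omega^{-2}\lambda)-L(a,\lambda)]$ out of \eqref{tot_asy} to write $\mathcal{W}_{-1,1}(a,\lambda)=[2\omega^{\frac12+\mu(a)}+O(\lambda^{-\rho})](1+R(\lambda))\exp[L(G^{-1}(a),\omega^{-2}\lambda)-L(a,\lambda)]$ with
$$R(\lambda)=\frac{2\omega^{\frac12+\mu(a)+2\nu(a)}+O(\lambda^{-\rho})}{2\omega^{\frac12+\mu(a)}+O(\lambda^{-\rho})}\exp\left[L(G(a),\omega^{2}\lambda)-L(G^{-1}(a),\omega^{-2}\lambda)\right].$$
Since $2\omega^{\frac12+\mu(a)}\neq0$, the fraction is $O(1)$, so it suffices to prove $\Re\left[L(G^{-1}(a),\omega^{-2}\lambda)-L(G(a),\omega^{2}\lambda)\right]\ge c\,|\lambda|^{\rho}$ for some $c>0$ and all large $\lambda$ in the sector: then $R(\lambda)=O(e^{-c|\lambda|^{\rho}})=O(\lambda^{-\rho})$, and the first formula follows. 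Writing $\alpha=\frac{m+2}{2m}\arg(\lambda)$ and applying Corollary~\ref{lemma_decay} (after the rotations $\lambda\mapsto\omega^{\pm2}\lambda$) gives $\Re L(G^{-1}(a),\omega^{-2}\lambda)=K_m|\lambda|^{\rho}\cos(\alpha-\tfrac{2\pi}{m})(1+o(1))$ and $\Re L(G(a),\omega^{2}\lambda)=K_m|\lambda|^{\rho}\cos(\alpha+\tfrac{2\pi}{m})(1+o(1))$, with $K_m>0$ the constant of Corollary~\ref{lemma_decay}; by $\cos(\alpha-\beta)-\cos(\alpha+\beta)=2\sin\alpha\sin\beta$ the difference equals $2K_m\sin(\tfrac{2\pi}{m})\sin\alpha\,|\lambda|^{\rho}(1+o(1))$. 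For $m\ge4$ we have $\sin(\tfrac{2\pi}{m})>0$, and since $0<\delta<\tfrac{\pi}{m+2}$ the value $\alpha$ stays in a fixed closed subinterval of $(0,\pi)$ as $\arg(\lambda)$ runs over $[\delta,\pi-\delta]$, so $\sin\alpha$ is bounded below by a positive constant; this yields the claimed bound with a uniform $c>0$. The lower half-plane sector is identical, now with $\alpha\in(-\pi,0)$, so $\sin\alpha<0$ and it is the first exponential that is subdominant; factoring out the second exponential instead gives the second formula.

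The step I expect to require the most care is the application of Corollary~\ref{lemma_decay} and Lemma~\ref{asy_lemma} to the rotated arguments $\omega^{\pm2}\lambda$ near the endpoints $\arg(\lambda)=\pm(\pi-\delta)$, where $\arg(\lambda)\pm\frac{4\pi}{m+2}$ can leave the sector $|\arg|\le\pi-\delta$ on which those asymptotics are stated. One way around this is to use that, through the entire function $f(0,\cdot,\cdot)$ of Theorem~\ref{prop}, $L(b,\cdot)$ continues past $|\arg|=\pi-\delta$ and, under a $2\pi$-rotation of its second argument, changes only by an additive purely imaginary constant; hence $\Re L(b,\omega^{\pm2}\lambda)$ still obeys the displayed asymptotics after reducing the argument into $(-\pi,\pi]$, and the sign computation above is unaffected because it uses only real parts. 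Alternatively one may split $[\delta,\pi-\delta]$ into the central part, where both $\omega^{\pm2}\lambda$ remain in a good sector and the argument applies verbatim, and a neighborhood of $\arg(\lambda)=\pi-\delta$, on which one compares \eqref{tot_asy} with the single-exponential asymptotics of Theorem~\ref{thm_neg} (valid near the negative real axis, where $\omega^{-m}\lambda=\omega^{2}\lambda$) and with Lemma~\ref{lemma7} for $\mathcal{W}_{0,2}$ through the shift relation $\mathcal{W}_{-1,1}(a,\lambda)=\omega\,\mathcal{W}_{0,2}(G^{-1}(a),\omega^{-2}\lambda)$ coming from \eqref{kplus1}, and then glues the estimates on the overlap. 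Either route confirms that the surviving prefactor is exactly $2\omega^{\frac12+\mu(a)}$, resp.\ $2\omega^{\frac12+\mu(a)+2\nu(a)}$, which completes the proof.
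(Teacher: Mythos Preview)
Your proof is correct and follows essentially the same route as the paper: both start from the two-term expansion \eqref{tot_asy} of Theorem~\ref{zero_thm} and decide which exponential dominates by computing, via \eqref{re_part}, that
\[
\Re\bigl[L(G^{-1}(a),\omega^{-2}\lambda)-L(G(a),\omega^{2}\lambda)\bigr]
=2K_m\sin\!\Bigl(\tfrac{2\pi}{m}\Bigr)\sin\!\Bigl(\tfrac{m+2}{2m}\arg\lambda\Bigr)\,|\lambda|^{\rho}\bigl(1+o(1)\bigr),
\]
which is positive in the upper half-plane sector and negative in the lower one. Your extra paragraph on the validity of \eqref{re_part} for $\omega^{2}\lambda$ near $\arg\lambda=\pi-\delta$ is more careful than the paper, which applies \eqref{re_part} formally (justified since $L$ in \eqref{tot_asy} is given by the series \eqref{L_def}, whose leading term $K_m(\omega^{2}\lambda)^{\rho}$ makes sense under the branch conventions stated after Theorem~\ref{prop}); your alternative gluing argument via Theorem~\ref{thm_neg} and \eqref{kplus1} also works but is not needed.
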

\begin{proof}
We will determine which term in \eqref{tot_asy} dominates in the upper and lower half planes.

Since, by \eqref{re_part},
\begin{equation}\nonumber
\Re(L(a,\lambda))=K_m\cos\left(\frac{m+2}{2m}\arg(\lambda)\right)|\lambda|^{\frac{1}{2}+\frac{1}{m}}(1+o(1)),
\end{equation}
we have
\begin{align}
&\left[\Re(L(G^{-1}(a),\omega^{-2}\lambda))-\Re(L(a,\lambda))\right]-\left[\Re(L(G(a),\omega^{2}\lambda))-\Re(L(a,\lambda))\right]\nonumber\\
&=K_m\left[\cos\left(-\frac{2\pi}{m}+\frac{m+2}{2m}\arg(\lambda)\right)-\cos\left(\frac{2\pi}{m}+\frac{m+2}{2m}\arg(\lambda)\right)\right]|\lambda|^{\frac{1}{2}+\frac{1}{m}}(1+o(1))\nonumber\\
&=2K_m\sin\left(\frac{2\pi}{m}\right)\sin\left(\frac{m+2}{2m}\arg(\lambda)\right)|\lambda|^{\frac{1}{2}+\frac{1}{m}}(1+o(1)).\nonumber
\end{align}
Thus, the first term in \eqref{tot_asy} dominates as $\lambda\to\infty$ along the rays in the upper half plane, and the second term dominates in the lower half plane. This completes the proof.
\end{proof}
%%%%%%%%%%%%%%%%%%%%%%%%%%%%%%%%%%%%%%%%%%%%%%%%%%%%%%%%%%%%%%%%%%%%%%
%%%%%%%%%%%%%%%%%%%%%%%%%%%%%%%%%%%%%%%%%%%%%%%%%%%%%%%%%%%%%%%%%%%%%%
\begin{proof}[Proof of Theorem ~\ref{zero_thm}]
In \cite{Shin2}, $C(a,\lambda)$ is used for $\frac{\mathcal{W}_{-1,1}(a,\lambda)}{\mathcal{W}_{0,1}(a,\lambda)}$
and asymptotics of $C(a,\lambda)$ are provided.  Notice that
$\mathcal{W}_{-1,1}(a,\lambda)=2\omega^{\mu(a)}C(a,\lambda)$.

Theorem 13 in \cite{Shin2} implies
 \eqref{tot_asy}  for the sector
\begin{equation}\label{sector4}
\pi-\frac{4\lfloor\frac{m}{2}\rfloor\pi}{m+2}+\delta \leq \arg(\lambda)\leq \pi-\frac{4\pi}{m+2}-\delta.
\end{equation}
Theorem 14 in \cite{Shin2} implies that
\begin{align}
\mathcal{W}_{-1,1}(a,\lambda)=&[2\omega^{\frac{1}{2}+\mu(a)}+O\left(\lambda^{-\rho}\right)]\exp\left[L(G^{-1}(a),\omega^{-2}\lambda)-L(a,\lambda)\right]\nonumber\\
&+[2\omega^{1+2\mu(a)+4\nu(a)}+O\left(\lambda^{-\rho}\right)]\exp\left[-L(G^2(a),\omega^{2-m}\lambda)-L(a,\lambda)\right],\nonumber
\end{align}
as $\lambda\to\infty$ in the sector $\pi-\frac{8\pi}{m+2}+\delta\leq\arg(\lambda)\leq\pi-\delta$. One can check that the first term dominates in this sector, by using an argument  similar to that in the proof of Corollary \ref{lemma_up}.

Also, Theorem 15 in \cite{Shin2} implies that
\begin{align}
\mathcal{W}_{-1,1}(a,\lambda)=&[2\omega^{1+2\mu(a)}+O\left(\lambda^{-\rho}\right)]\exp\left[-L(a,\omega^{-m-2}\lambda)-L(^{-2}(a),\omega^{-4}\lambda)\right]\nonumber\\
&+[2\omega^{\frac{1}{2}+\mu(a)+2\nu(a)}+O\left(\lambda^{-\rho}\right)]\exp\left[L(G(a),\omega^{-m}\lambda)-L(a,\omega^{-m-2}\lambda)\right],\nonumber
\end{align}
as $\lambda\to\infty$ in the sector $\pi+\delta\leq\arg(\lambda)\leq\pi+\frac{8\pi}{m+2}-\delta$. One can check that the second term dominates in this sector. Then we replace $\lambda$ by $\omega^{m+2}\lambda$ to convert the sector here to $-\pi+\delta\leq\arg(\lambda)\leq-\pi+\frac{8\pi}{m+2}-\delta$. This completes the proof.
\end{proof}

\begin{theorem}\label{thm_sector2}
Let $m=3$ and let $a\in\C^{m}$ and $0<\delta<\frac{\pi}{m+2}$ be fixed. Then
\begin{align}
\mathcal{W}_{-1,1}(a,\lambda)=&[-2\omega^{-\frac{5}{4}}+O\left(\lambda^{-\rho}\right)]\exp\left[L(G^{4}(a),\omega^{-2}\lambda)-L(a,\lambda)\right]\nonumber\\
&-[2i\omega^{\frac{5}{2}}+O\left(\lambda^{-\rho}\right)]\exp\left[-L(G^2(a),\omega^{-1}\lambda)-L(a,\lambda)\right],\nonumber
\end{align}
as $\lambda\to\infty$ in the sector
$-\delta \leq \arg(\lambda)\leq \pi-\delta$.
Also,
\begin{align}
\mathcal{W}_{-1,1}(a,\lambda)=&[-2i\omega^{\frac{5}{2}}+O\left(\lambda^{-\rho}\right)]\exp\left[-L(a,\omega^{-5}\lambda)-L(G^{-2}(a),\omega^{-4}\lambda)\right]\nonumber\\
&+[2\omega^{\frac{15}{4}}+O\left(\lambda^{-\rho}\right)]\exp\left[L(G(a),\omega^{-3}\lambda)-L(a,\omega^{-5}\lambda)\right],\nonumber
\end{align}
as $\lambda\to\infty$ in the sector $-\pi+\delta \leq \arg(\lambda)\leq \delta$.
\end{theorem}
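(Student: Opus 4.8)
The plan is to follow the proof of Theorem~\ref{zero_thm} essentially verbatim; the only new feature is that when $m=3$ the sector \eqref{sector4} that ``Theorem~13 of \cite{Shin2}'' supplies for $m\ge 4$ collapses to the empty set, so the neighborhood of the positive real axis --- precisely the region where, by Corollary~\ref{lemma_decay}, $\Re L(a,\lambda)\to+\infty$ --- is left uncovered and must be handled by the $m=3$ analogues of the relevant theorems of \cite{Shin2}. Theorem~\ref{thm_sector2} records the outcome. Note at the outset that $m$ odd gives $\nu(a)=0$ and $\mu(a)=\frac34$, and that $m+2=5$ forces the Stokes identification $f_{k+5}=f_k$, so by Lemma~\ref{shift_lemma} and \eqref{kplus1} every consecutive Wronskian $\mathcal{W}_{k,k+1}$ is an explicit constant.

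First I would use \eqref{kplus} (equivalently the Pl\"ucker/Jacobi identity among the four solutions $f_{-1},f_0,f_1,f_2$ of \eqref{rotated}) to write, with all Wronskians evaluated at $(a,\lambda)$,
\begin{equation}\nonumber
\mathcal{W}_{-1,1}\,\mathcal{W}_{0,2}=\mathcal{W}_{1,2}\,\mathcal{W}_{-1,0}+\mathcal{W}_{0,1}\,\mathcal{W}_{-1,2},
\end{equation}
and then solve for $\mathcal{W}_{-1,1}$, which is legitimate for large $|\lambda|$ since $\mathcal{W}_{0,2}\ne 0$ there by Lemma~\ref{lemma7}. Into this I would substitute the asymptotics of $\mathcal{W}_{0,2}$ and $\mathcal{W}_{-1,2}$ coming from Theorem~\ref{prop} and Lemma~\ref{lemma7} (and their two-term refinements in \cite{Shin2}), each valid in its own sector, after first reducing $\mathcal{W}_{-1,2}$ via $f_{-1}=f_4$ and \eqref{kplus1} to a constant multiple of a Wronskian of the form $\mathcal{W}_{0,j}$ at transformed parameters $(G^k(a),\omega^{2k}\lambda)$. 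Dividing by $\mathcal{W}_{0,2}$, the ``constant$/\mathcal{W}_{0,2}$'' part produces the $\exp[-L(G^2(a),\omega^{-1}\lambda)-L(a,\lambda)]$-type term and the ``$\mathcal{W}_{-1,2}/\mathcal{W}_{0,2}$'' part produces, after the common $L$-factors cancel, the $\exp[L(G^{-1}(a),\omega^{-2}\lambda)-L(a,\lambda)]$-type term, the $O(\lambda^{-\rho})$ relative errors being carried along. As in the proof of Corollary~\ref{lemma_up}, comparing the real parts of the two $L$-exponents by means of \eqref{re_part} shows they swap dominance across a single ray interior to the target sector, so both terms must be retained and the resulting two-term formula is uniform on $-\delta\le\arg(\lambda)\le\pi-\delta$; this is the first assertion.

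For the second assertion I would run the same argument with the shifted quadruple $f_{-2},f_{-1},f_0,f_1$, chosen so that $\mathcal{W}_{-1,1}$ again enters the Pl\"ucker relation, obtaining a two-term formula valid in a sector near and beyond $\arg(\lambda)=\pi$; then, exactly as at the end of the proof of Theorem~\ref{zero_thm}, I would replace $\lambda$ by $\omega^{m+2}\lambda$ to shift the formal argument by $-2\pi$. This moves the sector of validity onto $-\pi+\delta\le\arg(\lambda)\le\delta$ and converts each $L(\,\cdot\,,\omega^{k}\lambda)$ into $L(\,\cdot\,,\omega^{k-5}\lambda)$, which is precisely how the $\omega^{-5}\lambda$, $\omega^{-4}\lambda$, $\omega^{-3}\lambda$ and the shifted $G$-indices in the stated formula arise.

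The hard part will be the bookkeeping rather than any new idea. For $m=3$ the number $m+2=5$ of Stokes sectors is minimal, so the sectors in which the various one- and two-term asymptotics hold are as narrow as possible, and one must choose the reductions of $\mathcal{W}_{-1,2}$ --- and of the corresponding three-apart Wronskian in the second case --- so that all validity sectors overlap on the target sector and, simultaneously, so that the multivalued fractional powers of $\lambda$ inside the various $L$'s are taken on mutually consistent branches; an unlucky choice of reduction inserts spurious $2\pi$ (even $4\pi$) shifts of $\arg(\lambda)$ that do not cancel. One then has to track the half-integer powers of $\omega$ and the $G^{j}$-shifts through the whole computation in order to land on exactly the constants $-2\omega^{-5/4}$, $-2i\omega^{5/2}$ (which appears twice), and $2\omega^{15/4}$, and to confirm that off the single crossover ray the subdominant term really is absorbed into the $O(\lambda^{-\rho})$ error --- the two $L$-exponents becoming comparable only on the excluded boundary rays, where the hypothesis $\delta>0$ is what preserves the uniformity.
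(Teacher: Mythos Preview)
The paper does not give a self-contained proof of Theorem~\ref{thm_sector2}: its entire argument is the sentence ``See Theorems 14 and 15 in \cite{Shin2} for a proof.'' Your proposal is therefore not so much an alternative as a reconstruction of what that citation contains, and it follows exactly the template the paper itself uses when it sketches the proof of Theorem~\ref{zero_thm} (which likewise leans on Theorems~13--15 of \cite{Shin2}): express $\mathcal{W}_{-1,1}$ via a Pl\"ucker relation among consecutive $f_k$'s, reduce the non-adjacent Wronskians to the form $\mathcal{W}_{0,j}(G^k(a),\omega^{2k}\lambda)$ using \eqref{kplus1} and the periodicity $f_{k+5}=f_k$, insert Lemma~\ref{lemma7}, and then shift $\lambda\mapsto\omega^{m+2}\lambda$ to move the second formula onto the lower half-plane sector. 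That is precisely the mechanism of \cite{Shin2}, so your approach and the paper's cited approach coincide.

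Your caveats about bookkeeping are well placed. For $m=3$ the only usable instance of Lemma~\ref{lemma7} is $j=2$, and the validity sector of $\mathcal{W}_{0,2}(G^k(a),\omega^{2k}\lambda)$ depends sharply on $k$; a careless choice of the reduction of $\mathcal{W}_{-1,2}$ (e.g.\ going through $\mathcal{W}_{2,4}$ rather than through $\mathcal{W}_{0,2}$ at a suitable shift) lands you in a sector disjoint from $[-\delta,\pi-\delta]$. You flag this, so there is no gap, but when you carry it out be explicit about which representative modulo $5$ you take for each index so that the sectors actually overlap and the branch of $\lambda^{\frac{1}{2}+\frac{1}{m}}$ inside each $L$ is consistent.
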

\begin{proof}
See Theorems 14 and 15 in \cite{Shin2} for a proof.
\end{proof}

%%%%%%%%%%%%%%%%%%%%%%%%%%%%%%%%%%%%%%%%%%%%%%%%
%%%%%%%%%%%%%%%%%%%%%%%%%%%%%%%%%%%%%%%%%%%%%%%%
\section{Asymptotics of $\mathcal{W}_{-1,n}(a,\lambda)$}\label{sec_5}
%%%%%%%%%%%%%%%%%%%%%%%%%%%%%%%%%%%%%%%%%%%%%%%%
In this section, we will provide asymptotic expansions of $\mathcal{W}_{-1,n}(a,\cdot)$, zeros of which will be
closely related with the eigenvalues of $H_{n}$.

First, we treat the cases when  $1\leq n <\lfloor\frac{m}{2}\rfloor$.
\begin{theorem}
Let $1\leq n <\lfloor\frac{m}{2}\rfloor$ be an integer. Then $\mathcal{W}_{-1,n}(a,\cdot)$  admits the following
asymptotic expansion
\begin{equation}\label{up_asy}
\mathcal{W}_{-1,n}(a,\lambda)=[2\omega^{\frac{2-n}{2}+\mu(G^{n-1}(a))}+O\left(\lambda^{-\rho}\right)]\exp\left[L(G^{-1}(a),\omega^{-2}\lambda)-L(G^{n-1}(a),\omega^{2(n-1)}\lambda)\right],
\end{equation}
as $\lambda\to\infty$ in the sector
\begin{equation}\label{zero3_sector}
-\frac{2(n-1)\pi}{m+2}+\delta\leq\arg(\lambda)\leq\pi -\frac{4n\pi}{m+2}+\delta.
\end{equation}

Also,
\begin{align}
\mathcal{W}_{-1,n}(a,\lambda)&=[2\omega^{\frac{2-n}{2}+\mu(G^{n-1}(a))}+O\left(\lambda^{-\rho}\right)]\exp\left[L(G^{-1}(a),\omega^{-2}\lambda)-L(G^{n-1}(a),\omega^{2(n-1)}\lambda)\right]\nonumber\\
&+[2\omega^{\frac{2-n}{2}+\mu(G^{-1}(a))}+O\left(\lambda^{-\rho}\right)]\exp\left[L(G^n(a),\omega^{2n}\lambda)-L(a,\lambda)\right],\label{zero_asy}
\end{align}
as $\lambda\to\infty$ in the sector
\begin{equation}\label{zero2_sector}
-\frac{2(n-1)\pi}{m+2}-\delta\leq\arg(\lambda)\leq-\frac{2(n-1)\pi}{m+2}+\delta.
\end{equation}
\end{theorem}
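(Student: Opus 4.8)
The cleanest entry point is the shift identity \eqref{kplus1}: iterating it once yields
$$\mathcal{W}_{-1,n}(a,\lambda)=\omega\,\mathcal{W}_{0,n+1}\left(G^{-1}(a),\omega^{-2}\lambda\right),$$
so the statement is equivalent to an asymptotic expansion of $\mathcal{W}_{0,n+1}$ (with $1\le n+1\le\lfloor m/2\rfloor+1\le\tfrac{m}{2}+1$) in the sectors obtained from \eqref{zero3_sector}, \eqref{zero2_sector} by rotating $\arg\lambda$ by $\tfrac{4\pi}{m+2}$. Those sectors lie on the ``positive--real'' side, essentially opposite to the sector \eqref{sector0} where Lemma~\ref{lemma7} controls $\mathcal{W}_{0,j}$, so the first task is to extend the Wronskian asymptotics of $\mathcal{W}_{0,j}$ to that side. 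I would do this in the same way Theorem~\ref{zero_thm} was obtained for $\mathcal{W}_{-1,1}$: use the bilinear (Plücker--type) identity \eqref{kplus} to express $\mathcal{W}_{0,j}$ through $\mathcal{W}_{-1,1}$, through $\mathcal{W}_{0,j'}$ with $j'<j$, and through the constants $\mathcal{W}_{0,1}=2\omega^{\mu(a)}$ and $\mathcal{W}_{-1,0}=2\omega^{1+\mu(G^{-1}(a))}$ (Lemma~\ref{shift_lemma}), and run an induction on $j$ (equivalently on $n$), re-applying \eqref{kplus1} so that every $\mathcal{W}_{0,j'}(G^{s}(a),\omega^{2s}\lambda)$ stays in the inductive pool. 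The base case is Corollary~\ref{lemma_up}/Theorem~\ref{zero_thm}; for $m=3$ the range $1\le n<\lfloor m/2\rfloor$ is empty, so Theorem~\ref{thm_sector2} is not needed here.

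Feeding the inductive expansions into the identity and simplifying the arguments with Lemma~\ref{lemma_25} ($G^{\ell_1}\!\circ G^{\ell_2}=G^{\ell_1+\ell_2}$, and $\nu(G^{\ell}(a))=(-1)^{\ell}\nu(a)$, hence $\mu(G^{\ell}(a))=\tfrac{m}{4}-(-1)^{\ell}\nu(a)$) together with the roots--of--unity relations $\omega^{m+2}=1$, $\omega^{(m+2)/2}=-1$, $\omega^{(m+2)/4}=i$, the $L(a,\lambda)$--contributions telescope and the quotient collapses to a combination of the two exponentials $\exp[L(G^{-1}(a),\omega^{-2}\lambda)-L(G^{n-1}(a),\omega^{2(n-1)}\lambda)]$ and $\exp[L(G^{n-2}(a),\omega^{2(n-2)}\lambda)-L(a,\lambda)]$ with explicit constant prefactors; a direct check makes the first prefactor equal to $2\omega^{\frac{2-n}{2}+\mu(G^{n-1}(a))}$, as in \eqref{up_asy}. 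To decide which exponential dominates I would invoke Corollary~\ref{lemma_decay}/\eqref{re_part}: with $\theta=\tfrac{m+2}{2m}\arg\lambda$, the difference of the real parts of the two exponents is, to leading order, a positive constant times $\sin\!\big(\theta+\tfrac{(n-2)\pi}{m}\big)$, and together with the mirror computation on the $\arg\lambda<0$ side this locates the anti--Stokes ray and shows that on the wide sector \eqref{zero3_sector} one exponential strictly wins (the single term of \eqref{up_asy}) while on the narrow sector \eqref{zero2_sector} the two are comparable, so both must be kept (\eqref{zero_asy}). This step is parallel to the proof of Corollary~\ref{lemma_up}.

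The part I expect to be the real obstacle is the Stokes bookkeeping. In the part of \eqref{zero3_sector} lying between its lower edge and the anti--Stokes ray the ``other'' exponential $\exp[L(G^{n-2}(a),\omega^{2(n-2)}\lambda)-L(a,\lambda)]$ is the larger one, so \eqref{up_asy} can survive there only if its leading coefficient cancels exactly; this forces one to feed in the \emph{two--term} expansion \eqref{zero_asy}, rather than the one--term \eqref{up_asy}, for whichever lower--index Wronskian $\mathcal{W}_{-1,n-1}$ or $\mathcal{W}_{-1,n-2}$ has its own anti--Stokes ray dragged into view by the shift $\lambda\mapsto\omega^{\pm2}\lambda$, and to verify that the designed cancellation is exact to all the relevant orders. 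Keeping track of these interlocking transitional sectors, and checking in each regime that the division (by a prefactor that never vanishes) and the subtraction do not degrade the stated $O(\lambda^{-\rho})$ error --- which holds because, after the designed cancellations, a genuine exponential gap remains except on the anti--Stokes rays themselves, where both terms are retained --- is where the bulk of the work lies. Throughout, the branch conventions of the Remark after Theorem~\ref{prop} must be applied scrupulously, since many powers $(\omega^{\alpha}\lambda)^{s}$ with $\arg\lambda$ near the edges of the sectors occur.
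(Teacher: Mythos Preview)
Your overall strategy --- induction on $n$ with base case $n=1$ supplied by Theorem~\ref{zero_thm}/Corollary~\ref{lemma_up}, the Pl\"ucker identity \eqref{kplus} (equivalently \eqref{kn_asy}) to reduce to lower indices and to $\mathcal{W}_{0,j}$'s controlled by Lemma~\ref{lemma7}, and a dominance comparison via \eqref{re_part} --- is exactly the paper's approach. The paper is more explicit on one point you leave vague: in the main part of the sector it takes $k=\lfloor m/2\rfloor$ in \eqref{kn_asy}, which is what guarantees that every $\mathcal{W}_{0,j}$ factor falls inside the range of Lemma~\ref{lemma7}; for the residual strip near $\arg\lambda=\pi-\tfrac{4n\pi}{m+2}$ it switches to $k=n+1$ and runs a second short induction.

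There is, however, a concrete slip in your computation. The second exponential in the two--term expansion is $\exp\bigl[L(G^{n}(a),\omega^{2n}\lambda)-L(a,\lambda)\bigr]$, not $\exp\bigl[L(G^{n-2}(a),\omega^{2(n-2)}\lambda)-L(a,\lambda)\bigr]$; compare \eqref{zero_asy}. This propagates to your dominance calculation: with $\theta=\tfrac{m+2}{2m}\arg\lambda$ the leading difference of real parts is a positive multiple of $\sin\!\bigl(\theta+\tfrac{(n-1)\pi}{m}\bigr)$, not $\sin\!\bigl(\theta+\tfrac{(n-2)\pi}{m}\bigr)$ (see \eqref{dom_set}). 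Since this sine is precisely what fixes the anti--Stokes ray and hence the endpoints of the sectors \eqref{zero3_sector} and \eqref{zero2_sector}, the slip would place the transition at the wrong angle and the stated sector boundaries would not come out. Once you correct the index $n-2\to n$, your outline and the paper's proof coincide.
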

\begin{proof}
First we will prove \eqref{up_asy} for the sector
 \begin{equation}\label{zero1_sector}
-\frac{2(n-1)\pi}{m+2}+\delta\leq\arg(\lambda)\leq\pi -\frac{4n\pi}{m+2}-\delta
\end{equation}
 and the second part of the theorem by induction on $n$.

The case when $n=1$ is trivially satisfied by  Theorem \ref{zero_thm} and Corollary \ref{lemma_up} since $\mu(a)+2\nu(a)=\mu(G^{-1}(a))$.

Suppose that \eqref{up_asy} holds in the sector \eqref{zero1_sector} for $n-1$. From this induction hypothesis we have
\begin{align}
\mathcal{W}_{0,n}(a,\lambda)&=\omega^{-1}\mathcal{W}_{-1,n-1}(G(a),\omega^2\lambda)\nonumber\\
&=[2\omega^{-\frac{n-1}{2}+\mu(G^{n-1}(a))}+O\left(\lambda^{-\rho}\right)]\exp\left[L(a,\lambda)-L(G^{n-1}(a),\omega^{2(n-1)}\lambda)\right],\label{1steq}
\end{align}
as $\lambda\to\infty$ in the sector
\begin{equation}\nonumber
-\frac{2(n-2)\pi}{m+2}+\delta\leq\arg(\omega^2\lambda)\leq\pi -\frac{4(n-1)\pi}{m+2}-\delta,
\end{equation}
that is,
\begin{equation}\label{dom_sector}
-\frac{2n\pi}{m+2}+\delta\leq\arg(\lambda)\leq\pi -\frac{4n\pi}{m+2}-\delta.
\end{equation}

Also, from Lemma \ref{lemma7} if $1\leq j\leq\frac{m}{2}+1$, then we have
\begin{equation}\label{2ndeq}
\mathcal{W}_{0,j}(a,\lambda)=[2i\omega^{-\frac{j}{2}}+O\left(\lambda^{-\rho}\right)]\exp\left[L(G^{j}(a),\omega^{2j-m-2}\lambda)+L(a,\lambda)\right],
\end{equation}
as $\lambda\to\infty$ in the sector
\begin{equation}\nonumber
 \pi-\frac{4j\pi}{m+2}+\delta \leq \arg(\lambda)\leq \pi-\delta.
\end{equation}
We solve \eqref{kplus} for $\mathcal{W}_{\ell,-1}(a,\lambda)$ and set $\ell=n$ to get
\begin{equation}\label{kn_asy}
\mathcal{W}_{-1,n}(a,\lambda)=\frac{\mathcal{W}_{-1,0}(a,\lambda)\mathcal{W}_{n,k}(a,\lambda)}{\mathcal{W}_{0,k}(a,\lambda)}+\frac{\mathcal{W}_{0,n}(a,\lambda)\mathcal{W}_{-1,k}(a,\lambda)}{\mathcal{W}_{0,k}(a,\lambda)}
\end{equation}
Set $k=\lfloor\frac{m}{2}\rfloor$. Then since $1\leq k-n<k=\lfloor\frac{m}{2}\rfloor$, using \eqref{kplus1},
\begin{align}
\mathcal{W}_{-1,n}(a,\lambda)&=\frac{2\omega^{\mu(G^{-1}(a))}\mathcal{W}_{0,k-n}(G^n(a),\omega^{2n}\lambda)}{\omega^{n-1}\mathcal{W}_{0,k}(a,\lambda)}+\frac{\mathcal{W}_{0,n}(a,\lambda)\mathcal{W}_{0,k+1}(G^{-1}(a),\omega^{-2}\lambda)}{\omega^{-1}\mathcal{W}_{0,k}(a,\lambda)}\nonumber\\
&=\frac{2\omega^{\mu(G^{-1}(a))}[2i\omega^{-\frac{k-n}{2}}+O\left(\lambda^{-\rho}\right)]\exp\left[L(G^{n}(a),\omega^{2k-m-2}\lambda)+L(G^n(a),\omega^{2n}\lambda)\right]}{\omega^{n-1}[2i\omega^{-\frac{k}{2}}+O\left(\lambda^{-\rho}\right)]\exp\left[L(G^{k}(a),\omega^{2k-m-2}\lambda)+L(a,\lambda)\right]}\nonumber\\
&+\frac{[2\omega^{-\frac{n-1}{2}+\mu(G^{n-1}(a))}+O\left(\lambda^{-\rho}\right)]\exp\left[L(a,\lambda)-L(G^{n-1}(a),\omega^{2(n-1)}\lambda)\right]}{\omega^{-1}[2i\omega^{-\frac{k}{2}}+O\left(\lambda^{-\rho}\right)]\exp\left[L(G^{k}(a),\omega^{2k-m-2}\lambda)+L(a,\lambda)\right]}\nonumber\\
&\times [2i\omega^{-\frac{k+1}{2}}+O\left(\lambda^{-\rho}\right)]\exp\left[L(G^{k}(a),\omega^{2k-m-2}\lambda)+L(G^{-1}(a),\omega^{-2}\lambda)\right]\nonumber\\
&=[2\omega^{\frac{2-n}{2}+\mu(G^{-1}(a))}+O\left(\lambda^{-\rho}\right)]\exp\left[L(G^n(a),\omega^{2n}\lambda)-L(a,\lambda)\right]\nonumber\\
&+[2\omega^{\frac{2-n}{2}+\mu(G^{n-1}(a))}+O\left(\lambda^{-\rho}\right)]\exp\left[L(G^{-1}(a),\omega^{-2}\lambda)-L(G^{n-1}(a),\omega^{2(n-1)}\lambda)\right],\label{asy_sector}
\end{align}
where we used \eqref{1steq} for $\mathcal{W}_{0,n}(a,\lambda)$ and \eqref{2ndeq} for everything else, provided
that $\lambda$ lies in \eqref{dom_sector} and that
\begin{align}
\pi-\frac{4\left(\lfloor\frac{m}{2}\rfloor-n\right)\pi}{m+2}+\delta \leq &\arg(\omega^{2n}\lambda)\leq \pi-\delta\nonumber\\
\pi-\frac{4\lfloor\frac{m}{2}\rfloor\pi}{m+2}+\delta \leq &\arg(\lambda)\leq \pi-\delta\nonumber\\
\pi-\frac{4\left(\lfloor\frac{m}{2}\rfloor+1\right)\pi}{m+2}+\delta \leq &\arg(\omega^{-2}\lambda)\leq \pi-\delta,\nonumber
\end{align}
that is,
\begin{equation}\nonumber
-\frac{2n\pi}{m+2}+\delta\leq\arg(\lambda)\leq\pi -\frac{4n\pi}{m+2}-\delta.
\end{equation}
Thus, the second part of the theorem is proved by induction.

Next in order to  prove the first part of the theorem for the sector \eqref{zero1_sector}, we will determine which term in \eqref{asy_sector} dominates as $\lambda\to\infty$. To do that, we look at
\begin{align}
&\Re\left(L(G^{-1}(a),\omega^{-2}\lambda)-L(G^{n-1}(a),\omega^{2(n-1)}\lambda)\right)-\Re\left(L(G^{n}(a),\omega^{2n}\lambda)-L(a,\lambda)\right)\nonumber\\
&=K_m\left[\cos\left(-\frac{2\pi}{m}+\frac{m+2}{2m}\arg(\lambda)\right)-\cos\left(\frac{2(n-1)\pi}{m}+\frac{m+2}{2m}\arg(\lambda)\right)\right.\nonumber\\
&-\left.\left(\cos\left(-\frac{2n\pi}{m}+\frac{m+2}{2m}\arg(\lambda)\right)-\cos\left(\frac{m+2}{2m}\arg(\lambda)\right)\right)\right]|\lambda|^{\frac{1}{2}+\frac{1}{m}}(1+o(1))\nonumber\\
&=2K_m\sin\left(\frac{n\pi}{m}\right)\left[\sin\left(\frac{(n-2)\pi}{m}+\frac{m+2}{2m}\arg(\lambda)\right)\right.\nonumber\\
&\qquad\qquad\qquad\qquad\qquad\qquad\left.+\sin\left(\frac{n\pi}{m}+\frac{m+2}{2m}\arg(\lambda)\right)\right]|\lambda|^{\frac{1}{2}+\frac{1}{m}}(1+o(1))\nonumber\\
&=4K_m\sin\left(\frac{n\pi}{m}\right)\cos\left(\frac{\pi}{m}\right)\sin\left(\frac{(n-1)\pi}{m}+\frac{m+2}{2m}\arg(\lambda)\right)|\lambda|^{\frac{1}{2}+\frac{1}{m}}(1+o(1)),\label{dom_set}
\end{align}
that tends to positive infinity as $\lambda\to\infty$  (and hence the second term in \eqref{asy_sector} dominates) if $-\frac{2(n-1)\pi}{m+2}+\delta\leq\arg(\lambda)\leq \pi-\frac{4n\pi}{m+2}-\delta$.

We still need to prove \eqref{up_asy} for the sector
\begin{equation}\label{asy_rem}
\pi-\frac{4n\pi}{m+2}-\delta\leq\arg(\lambda)\leq \pi-\frac{4n\pi}{m+2}+\delta,
\end{equation}
for which we use induction on $n$ again.

When $n=1$, \eqref{up_asy} holds by Lemma \ref{lemma_up}.

Suppose that \eqref{up_asy} in the sector \eqref{asy_rem} for $n-1$ with  $2\leq n<\lfloor\frac{m}{2}\rfloor$.
Then \eqref{kplus1} and \eqref{kn_asy} with $k=n+1$  yield
\begin{equation}\nonumber
\mathcal{W}_{-1,n}(a,\lambda)
=\frac{2\omega^{\mu(G^{-1}(a))}\mathcal{W}_{0,1}(G^n(a),\omega^{2n}\lambda)}{\omega^{n-1}\mathcal{W}_{0,n+1}(a,\lambda)}+\frac{\mathcal{W}_{-1,n-1}(G(a),\omega^2\lambda)\mathcal{W}_{0,n+2}(G^{-1}(a),\omega^{-2}\lambda)}{\mathcal{W}_{0,n+1}(a,\lambda)}.
\end{equation}
If $\pi-\frac{4n\pi}{m+2}-\delta\leq\arg(\lambda)\leq \pi-\frac{4n\pi}{m+2}+\delta$, then $\pi-\frac{4(n-1)\pi}{m+2}-\delta\leq\arg(\omega^2\lambda)\leq \pi-\frac{4(n-1)\pi}{m+2}+\delta$. So
\begin{align}
\mathcal{W}_{-1,n}(a,\lambda)
&=\frac{4\omega^{\mu(G^{-1}(a))+\mu(G^n(a))}}{\omega^{n-1}\mathcal{W}_{0,n+1}(a,\lambda)}+\frac{\mathcal{W}_{-1,n-1}(G(a),\omega^2\lambda)\mathcal{W}_{0,n+2}(G^{-1}(a),\omega^{-2}\lambda)}{\mathcal{W}_{0,n+1}(a,\lambda)}\label{last_eq}\\
&=\frac{4\omega^{\mu(G^{-1}(a))+\mu(G^n(a))}}{\omega^{n-1}[2i\omega^{-\frac{n+1}{2}}+O\left(\lambda^{-\rho}\right)]\exp\left[L(G^{n+1}(a),\omega^{2n-m}\lambda)+L(a,\lambda)\right]}\nonumber\\
&+\frac{[2\omega^{\frac{3-n}{2}+\mu(G^{n-1}(a))}+O\left(\lambda^{-\rho}\right)]\exp\left[L(a,\lambda)-L(G^{n-1}(a),\omega^{2(n-1)}\lambda)\right]}{[2i\omega^{-\frac{n+1}{2}}+O\left(\lambda^{-\rho}\right)]\exp\left[L(G^{n+1}(a),\omega^{2n-m}\lambda)+L(a,\lambda)\right]}\nonumber\\
&\times [2i\omega^{-\frac{n+2}{2}}+O\left(\lambda^{-\rho}\right)]\exp\left[L(G^{n+1}(a),\omega^{2n-m}\lambda)+L(G^{-1}(a),\omega^{-2}\lambda)\right]\nonumber\\
&=[-2i\omega^{\frac{3-n}{2}+\mu(G^{-1}(a))+\mu(G^n(a))}+O\left(\lambda^{-\rho}\right)]\exp\left[-L(G^{n+1}(a),\omega^{2n-m}\lambda)-L(a,\lambda)\right]\nonumber\\
&+[2\omega^{\frac{2-n}{2}+\mu(G^{n-1}(a))}+O\left(\lambda^{-\rho}\right)]\exp\left[L(G^{-1}(a),\omega^{-2}\lambda)-L(G^{n-1}(a),\omega^{2(n-1)}\lambda)\right],\nonumber
\end{align}
where we use the induction hypothesis for $\mathcal{W}_{-1,n-1}(G(a),\omega^2\lambda)$ and use \eqref{sec_eq1} for
$\mathcal{W}_{0,n+1}(a,\lambda)$ and $\mathcal{W}_{0,n+2}(G^{-1}(a),\omega^{-2}\lambda)$. Next, we use an argument
similar \eqref{dom_set} to complete the induction step. Thus, the theorem is proved.
\end{proof}
Next we investigate $\mathcal{W}_{0,\lfloor\frac{m}{2}\rfloor}(a,\lambda)$.
\begin{theorem}\label{bod_thm}
If $m\geq 4$ is an even integer, then
\begin{align}
&\mathcal{W}_{-1,\lfloor\frac{m}{2}\rfloor}(a,\lambda)\nonumber\\
&=-[2\omega^{2+\mu(G^{-1}(a))+\mu(G^{\frac{m}{2}}(a))}+O\left(\lambda^{-\rho}\right)]\exp\left[-L(G^{{\frac{m}{2}+1}}(a),\lambda)-L(a,\lambda)\right]\label{thm_eq1}\\
&-[2\omega^{2+\mu(a)+\mu(G^{\frac{m-2}{2}}(a))}+O\left(\lambda^{-\rho}\right)]\exp\left[-L(G^{\frac{m-2}{2}}(a),\omega^{m-2}\lambda)-L(G^{m}(a),\omega^{m-2}\lambda)\right].\nonumber
\end{align}
as $\lambda\to\infty$ in the sector
\begin{equation}\label{zero4_sector}
-\pi+\frac{4\pi}{m+2}-\delta\leq\arg(\lambda)\leq-\pi+\frac{4\pi}{m+2}+\delta.
\end{equation}

If $m\geq 4$ is an odd integer, then
\begin{align}
\mathcal{W}_{-1,\lfloor\frac{m}{2}\rfloor}(a,\lambda)
&=[2\omega^{\frac{5}{4}}+O\left(\lambda^{-\rho}\right)]\exp\left[-L(G^{\frac{m+1}{2}}(a),\omega^{-1}\lambda)-L(a,\lambda)\right]\nonumber\\
&+[2\omega^{\frac{5}{4}}+O\left(\lambda^{-\rho}\right)]\exp\left[L(G^{m+1}(a),\omega^{-2}\lambda)-L(G^{\frac{m-3}{2}}(a),\omega^{m-3}\lambda)\right].\label{thm_eq2}
\end{align}
as $\lambda\to\infty$ in the sector
\begin{equation}\nonumber
-\pi+\frac{4\pi}{m+2}+\delta\leq\arg(\lambda)\leq-\pi+\frac{6\pi}{m+2}+\delta.
\end{equation}
\end{theorem}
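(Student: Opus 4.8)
The plan is to run the inductive scheme of the preceding theorem one more step, at the boundary index $n=\lfloor\frac{m}{2}\rfloor$. Since $\mathcal{W}_{n,n}\equiv 0$, in \eqref{kn_asy} we are now forced to take $k=\lfloor\frac{m}{2}\rfloor+1$, which gives
\begin{equation}\nonumber
\mathcal{W}_{-1,\lfloor\frac{m}{2}\rfloor}(a,\lambda)=\frac{\mathcal{W}_{-1,0}(a,\lambda)\,\mathcal{W}_{\lfloor\frac{m}{2}\rfloor,\lfloor\frac{m}{2}\rfloor+1}(a,\lambda)}{\mathcal{W}_{0,\lfloor\frac{m}{2}\rfloor+1}(a,\lambda)}+\frac{\mathcal{W}_{0,\lfloor\frac{m}{2}\rfloor}(a,\lambda)\,\mathcal{W}_{-1,\lfloor\frac{m}{2}\rfloor+1}(a,\lambda)}{\mathcal{W}_{0,\lfloor\frac{m}{2}\rfloor+1}(a,\lambda)}.
\end{equation}
For the first summand, Lemma~\ref{shift_lemma} together with \eqref{kplus1} shows that $\mathcal{W}_{-1,0}$ and $\mathcal{W}_{\lfloor m/2\rfloor,\lfloor m/2\rfloor+1}$ are independent of $\lambda$ and equal $2$ times explicit powers of $\omega$, while Lemma~\ref{lemma7} with $j=\lfloor\frac{m}{2}\rfloor+1$ — whose sector \eqref{sector0} of validity contains \eqref{zero4_sector} as well as the odd-$m$ sector $-\pi+\frac{4\pi}{m+2}+\delta\le\arg\lambda\le-\pi+\frac{6\pi}{m+2}+\delta$ — evaluates $\mathcal{W}_{0,\lfloor m/2\rfloor+1}$. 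A short computation using $\omega^{(m+2)/4}=i$, $\omega^{(m+2)/2}=-1$ and $\mu(a)=\frac{m}{4}-\nu(a)$ then turns this summand into precisely the first term of \eqref{thm_eq1} when $m$ is even and of \eqref{thm_eq2} when $m$ is odd.

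The real work is the second summand, where $\mathcal{W}_{0,\lfloor m/2\rfloor}$ and $\mathcal{W}_{-1,\lfloor m/2\rfloor+1}$ have to be controlled on a sector lying just past the reach of Lemma~\ref{lemma7}. I would reduce these to $\mathcal{W}_{-1,\lfloor m/2\rfloor-1}$: by \eqref{kplus1},
\begin{equation}\nonumber
\mathcal{W}_{0,\lfloor\frac{m}{2}\rfloor}(a,\lambda)=\omega^{-1}\mathcal{W}_{-1,\lfloor\frac{m}{2}\rfloor-1}(G(a),\omega^{2}\lambda),
\end{equation}
and, using the periodicity $f_{k+m+2}=f_{k}$ to bring the index $\lfloor\frac{m}{2}\rfloor+1$ to a representative with a small first index and then applying \eqref{kplus1} once more, $\mathcal{W}_{-1,\lfloor m/2\rfloor+1}(a,\lambda)$ likewise becomes, up to an explicit power of $\omega$, a copy of $\mathcal{W}_{-1,\lfloor m/2\rfloor-1}$ at the same rotated parameter $\omega^{2}\lambda$ with a $G$-shifted coefficient vector when $m$ is even, respectively a Lemma~\ref{lemma7} Wronskian $\mathcal{W}_{0,\lfloor m/2\rfloor+1}$ when $m$ is odd. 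The decisive point is that $\lambda\mapsto\omega^{2}\lambda$ carries \eqref{zero4_sector} (resp.\ the odd-$m$ target sector) into the sector \eqref{zero3_sector} of the preceding theorem with $n$ replaced by $\lfloor\frac{m}{2}\rfloor-1$, so on the target sector the one-term asymptotic \eqref{up_asy} is available for every occurrence of $\mathcal{W}_{-1,\lfloor m/2\rfloor-1}$. Substituting these, together with the Lemma~\ref{lemma7} value of $\mathcal{W}_{0,\lfloor m/2\rfloor+1}$, into the second summand, the $L(a,\lambda)$'s telescope; tracking the $G$-shifts through Lemma~\ref{lemma_25} (with $G^{m}=G^{-2}$, and $G^{(m+2)/2}$ the reflection $a_{j}\mapsto(-1)^{j}a_{j}$ when $m$ is even) collapses the exponent to the one in the second term of \eqref{thm_eq1} (resp.\ \eqref{thm_eq2}), and the same $\omega$-arithmetic reduces the prefactor to the stated constant.

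It then remains to confirm that the two terms so obtained are of comparable size on the stated sector, so that neither can be absorbed into the $O(\lambda^{-\rho})$ error: writing the real part of each $L$ at its rotated argument in the form $K_{m}\cos\left(\frac{m+2}{2m}\arg(\cdot)\right)|\lambda|^{\rho}(1+o(1))$ as in \eqref{re_part}, the difference of the two exponents' real parts is a single product of two cosines in $\arg\lambda$ that changes sign at the interior point $\arg\lambda=-\pi+\frac{4\pi}{m+2}$ of \eqref{zero4_sector}, so that one term dominates on each side of it and the two-term formula is the correct asymptotic on the whole narrow sector with uniform error $O(\lambda^{-\rho})$. The main obstacle I anticipate is exactly this boundary phenomenon: \eqref{zero4_sector} and its odd-$m$ analogue are the loci where some of the rotated arguments $\omega^{c}\lambda$ cross $\arg=-\pi$, so that the expansions of Lemma~\ref{asy_lemma} and Corollary~\ref{lemma_decay}, valid only for $|\arg|\le\pi-\delta$, break down for the naive form of $\mathcal{W}_{0,\lfloor m/2\rfloor}$; the care needed is to route every Wronskian through identities — which is exactly what dictates the choice $k=\lfloor\frac{m}{2}\rfloor+1$ and the rotation by $\omega^{2}$ — keeping all the $L$-arguments inside the admissible sector, and then to carry out the bookkeeping of the $\omega$-powers and the $G$-shifts without slip, the $\nu(a)$ and $\ln\lambda$ corrections (present only for even $m$) being the most delicate part.
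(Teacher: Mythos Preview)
Your proposal is correct and follows essentially the same route as the paper. The paper begins from the identity \eqref{last_eq} with $n=\lfloor\frac{m}{2}\rfloor$, which is exactly your \eqref{kn_asy} with $k=\lfloor\frac{m}{2}\rfloor+1$ after substituting the constant Wronskians $\mathcal{W}_{-1,0}$, $\mathcal{W}_{n,n+1}$ and the shift $\mathcal{W}_{0,n}=\omega^{-1}\mathcal{W}_{-1,n-1}(G(a),\omega^{2}\lambda)$; it then reduces $\mathcal{W}_{0,\lfloor m/2\rfloor+2}(G^{-1}(a),\omega^{-2}\lambda)$ via periodicity $f_{m+2}=f_{0}$ and the shift \eqref{kplus1} to $\mathcal{W}_{-1,\lfloor m/2\rfloor-1}(G^{k+2}(a),\omega^{2}\lambda)$ when $m=2k$ and to $\mathcal{W}_{0,\lfloor m/2\rfloor+1}(G^{k+1}(a),\omega^{-1}\lambda)$ when $m=2k+1$, precisely as you describe, and then feeds in \eqref{up_asy} and \eqref{sec_eq1}. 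The only difference is that the paper does not include your final paragraph on the comparability of the two terms; it simply records the two-term expansion, since each asymptotic input carries its own $O(\lambda^{-\rho})$ relative error and the identity is exact, so the formula is valid regardless of which term dominates. Your sign-change observation at $\arg\lambda=-\pi+\frac{4\pi}{m+2}$ is correct and explains why a two-term statement is needed on this sector, but it is not required for the derivation itself.
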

\begin{proof}
We will use \eqref{last_eq} with $n=\lfloor\frac{m}{2}\rfloor$, that is,
\begin{equation}\nonumber
\mathcal{W}_{-1,\lfloor\frac{m}{2}\rfloor}(a,\lambda)=\frac{4\omega^{\mu(G^{-1}(a))+\mu(G^{\lfloor\frac{m}{2}\rfloor}(a))}}{\omega^{\lfloor\frac{m}{2}\rfloor-1}\mathcal{W}_{0,\lfloor\frac{m}{2}\rfloor+1}(a,\lambda)}+\frac{\mathcal{W}_{-1,\lfloor\frac{m}{2}\rfloor-1}(G(a),\omega^2\lambda)\mathcal{W}_{0,\lfloor\frac{m}{2}\rfloor+2}(G^{-1}(a),\omega^{-2}\lambda)}{\mathcal{W}_{0,\lfloor\frac{m}{2}\rfloor+1}(a,\lambda)}.
\end{equation}
When $m$ is even, say $m=2k$,
\begin{align}
\mathcal{W}_{0,k+2}(G^{-1}(a),\omega^{-2}\lambda)&=\mathcal{W}_{m+2,k+2}(G^{-1}(a),\omega^{-2}\lambda)\nonumber\\
&=-\omega^{-k-3}\mathcal{W}_{-1,k-1}(G^{k+2}(a),\omega^{2k+4}\lambda)\nonumber\\
&=\omega^{-2}\mathcal{W}_{-1,k-1}(G^{k+2}(a),\omega^{2}\lambda).\nonumber
\end{align}
So
\begin{align}
\mathcal{W}_{-1,k}(a,\lambda)&=\frac{4\omega^{\mu(G^{-1}(a))+\mu(G^{k}(a))}}{\omega^{k-1}\mathcal{W}_{0,k+1}(a,\lambda)}+\frac{\mathcal{W}_{-1,k-1}(G(a),\omega^2\lambda)\mathcal{W}_{0,k+2}(G^{-1}(a),\omega^{-2}\lambda)}{\mathcal{W}_{0,k+1}(a,\lambda)}\nonumber\\
&=\frac{4\omega^{\mu(G^{-1}(a))+\mu(G^{k}(a))}}{\omega^{k-1}\mathcal{W}_{0,k+1}(a,\lambda)}+\frac{\mathcal{W}_{-1,k-1}(G(a),\omega^2\lambda)\mathcal{W}_{-1,k-1}(G^{k+2}(a),\omega^{2}\lambda)}{\omega^2
\mathcal{W}_{0,k+1}(a,\lambda)}\nonumber
\end{align}
Since $\lambda$ lies in \eqref{zero4_sector},
$$
-\frac{2\left(\lfloor\frac{m}{2}\rfloor-2\right)\pi}{m+2}+\delta\leq-\pi+\frac{8\pi}{m+2}-\delta\leq\arg(\omega^2\lambda)\leq\pi-\frac{4\left(\lfloor\frac{m}{2}\rfloor-1\right)\pi}{m+2}+\delta.
$$
\begin{align}
&\mathcal{W}_{-1,k}(a,\lambda)\nonumber\\
&=\frac{4\omega^{\mu(G^{-1}(a))+\mu(G^{k}(a))}}{\omega^{k-1}\mathcal{W}_{0,k+1}(a,\lambda)}+\frac{\mathcal{W}_{-1,k-1}(G(a),\omega^2\lambda)\mathcal{W}_{0,k+2}(G^{-1}(a),\omega^{-2}\lambda)}{\mathcal{W}_{0,k+1}(a,\lambda)}\nonumber\\
&=\frac{4\omega^{\mu(G^{-1}(a))+\mu(G^{k}(a))}}{\omega^{k-1}\mathcal{W}_{0,k+1}(a,\lambda)}+\frac{\mathcal{W}_{-1,k-1}(G(a),\omega^2\lambda)\mathcal{W}_{-1,k-1}(G^{k+2}(a),\omega^{2}\lambda)}{\omega^2 \mathcal{W}_{0,k+1}(a,\lambda)}\nonumber\\
&=\frac{4\omega^{\mu(G^{-1}(a))+\mu(G^{k}(a))}}{\omega^{k-1}[2i\omega^{-\frac{k+1}{2}}+O\left(\lambda^{-\rho}\right)]\exp\left[L(G^{k+1}(a),\omega^{2k-m}\lambda)+L(a,\lambda)\right]}\nonumber\\
&+\frac{[2\omega^{\frac{3-k}{2}+\mu(G^{k-1}(a))}+O\left(\lambda^{-\rho}\right)]\exp\left[L(a,\lambda)-L(G^{k-1}(a),\omega^{2(k-1)}\lambda)\right]}{\omega^2 [2i\omega^{-\frac{k+1}{2}}+O\left(\lambda^{-\rho}\right)]\exp\left[L(G^{k+1}(a),\omega^{2k-m}\lambda)+L(a,\lambda)\right]}\nonumber\\
&\times[2\omega^{\frac{3-k}{2}+\mu(G^{2k}(a))}+O\left(\lambda^{-\rho}\right)]\exp\left[L(G^{k+1}(a),\lambda)-L(G^{2k}(a),\omega^{2(k-1)}\lambda)\right]\nonumber\\
&=[-2i\omega^{\frac{3-k}{2}+\mu(G^{-1}(a))+\mu(G^{k}(a))}+O\left(\lambda^{-\rho}\right)]\exp\left[-L(G^{k+1}(a),\lambda)-L(a,\lambda)\right]\nonumber\\
&-[2i\omega^{\frac{3-k}{2}+\mu(G^{k-1}(a))+\mu(G^{2k}(a))}+O\left(\lambda^{-\rho}\right)]\exp\left[-L(G^{k-1}(a),\omega^{2(k-1)}\lambda)-L(G^{2k}(a),\omega^{2(k-1)}\lambda)\right]\nonumber\\
&=-[2i\omega^{\frac{3-k}{2}+\mu(G^{-1}(a))+\mu(G^{k}(a))}+O\left(\lambda^{-\rho}\right)]\exp\left[-L(G^{k+1}(a),\lambda)-L(a,\lambda)\right]\nonumber\\
&-[2i\omega^{\frac{3-k}{2}+\mu(a)+\mu(G^{k-1}(a))}+O\left(\lambda^{-\rho}\right)]\exp\left[-L(G^{k-1}(a),\omega^{2(k-1)}\lambda)-L(G^{2k}(a),\omega^{2(k-1)}\lambda)\right],\nonumber
\end{align}
where we used \eqref{up_asy} for $\mathcal{W}_{-1,k-1}(G(a),\cdot)$ and $\mathcal{W}_{-1,k-1}(G^{k+2}(a),\cdot)$,
and \eqref{sec_eq1} for $\mathcal{W}_{0,k+1}(a,\cdot)$. Finally, we use $\omega^{-\frac{m+2}{4}}=-i$, to get the
desired asymptotic expansion of $\mathcal{W}_{-1,\lfloor\frac{m}{2}\rfloor}(a,\lambda)$.

Next we investigate the case when $m$ is odd, say $m=2k+1$ (so $\lfloor\frac{m}{2}\rfloor=k$).
\begin{equation}\nonumber
\mathcal{W}_{0,\lfloor\frac{m}{2}\rfloor+1}(a,\lambda)=\omega^{-1}\mathcal{W}_{-1,\lfloor\frac{m}{2}\rfloor}(G(a),\omega^{2}\lambda)
\end{equation}
and
\begin{align}
\mathcal{W}_{0,k+2}(G^{-1}(a),\omega^{-2}\lambda)&=\mathcal{W}_{m+2,k+2}(G^{-1}(a),\omega^{-2}\lambda)\nonumber\\
&=-\omega^{-k-2}\mathcal{W}_{0,k+1}(G^{k+1}(a),\omega^{2k+2}\lambda)\nonumber\\
&=\omega^{-\frac{1}{2}}\mathcal{W}_{0,k+1}(G^{k+1}(a),\omega^{-1}\lambda).\nonumber
\end{align}
Similarly to the proof of the theorem for $m$ even,
\begin{align}
\mathcal{W}_{-1,k}(a,\lambda)
&=\frac{4\omega^{\mu(G^{-1}(a))+\mu(G^{k}(a))}}{\omega^{k-1}\mathcal{W}_{0,k+1}(a,\lambda)}+\frac{\mathcal{W}_{-1,k-1}(G(a),\omega^2\lambda)\mathcal{W}_{0,k+2}(G^{-1}(a),\omega^{-2}\lambda)}{\mathcal{W}_{0,k+1}(a,\lambda)}\nonumber\\
&=\frac{4\omega^{\mu(G^{-1}(a))+\mu(G^{k}(a))}}{\omega^{k-1}\mathcal{W}_{0,k+1}(a,\lambda)}+\frac{\mathcal{W}_{-1,k-1}(G(a),\omega^2\lambda)\mathcal{W}_{0,k+1}(G^{k+1}(a),\omega^{-1}\lambda)}{\omega^{\frac{1}{2}} \mathcal{W}_{0,k+1}(a,\lambda)}\nonumber
\end{align}
\begin{align}
\hskip 70 pt&=\frac{4\omega^{\mu(G^{-1}(a))+\mu(G^{k}(a))}}{\omega^{k-1}[2i\omega^{-\frac{k+1}{2}}+O\left(\lambda^{-\rho}\right)]\exp\left[L(G^{k+1}(a),\omega^{2k-m}\lambda)+L(a,\lambda)\right]}\nonumber\\
&+\frac{[2\omega^{\frac{3-k}{2}+\mu(G^{k-1}(a))}+O\left(\lambda^{-\rho}\right)]\exp\left[L(a,\lambda)-L(G^{k-1}(a),\omega^{2(k-1)}\lambda)\right]}{\omega^{\frac{1}{2}} [2i\omega^{-\frac{k+1}{2}}+O\left(\lambda^{-\rho}\right)]\exp\left[L(G^{k+1}(a),\omega^{2k-m}\lambda)+L(a,\lambda)\right]}\nonumber\\
&\times[2i\omega^{-\frac{k+1}{2}}+O\left(\lambda^{-\rho}\right)]\exp\left[L(G^{2k+2}(a),\omega^{2k-m-1}\lambda)+L(G^{k+1}(a),\omega^{-1}\lambda)\right]\nonumber\\
&=[-2i\omega^{\frac{3-k}{2}+\mu(G^{-1}(a))+\mu(G^{k}(a))}+O\left(\lambda^{-\rho}\right)]\exp\left[-L(G^{k+1}(a),\omega^{-1}\lambda)-L(a,\lambda)\right]\nonumber\\
&+[2\omega^{\frac{2-k}{2}+\mu(G^{k-1}(a))}+O\left(\lambda^{-\rho}\right)]\exp\left[L(G^{2k+2}(a),\omega^{-2}\lambda)-L(G^{k-1}(a),\omega^{2(k-1)}\lambda)\right]\nonumber\\
&=-[2i\omega^{\frac{3-k}{2}+\frac{m}{2}}+O\left(\lambda^{-\rho}\right)]\exp\left[-L(G^{k+1}(a),\omega^{-1}\lambda)-L(a,\lambda)\right]\nonumber\\
&+[2\omega^{\frac{2-k}{2}+\frac{m}{4}}+O\left(\lambda^{-\rho}\right)]\exp\left[L(G^{2k+2}(a),\omega^{-2}\lambda)-L(G^{k-1}(a),\omega^{2(k-1)}\lambda)\right],\nonumber
\end{align}
where we use \eqref{up_asy} for $\mathcal{W}_{-1,k-1}(G(a),\cdot)$, and use \eqref{sec_eq1} for
$\mathcal{W}_{0,k+1}(a,\cdot)$ and  $\mathcal{W}_{0,k+1}(G^{k+1}(a),\cdot)$. Finally, we use
$\omega^{\frac{m+2}{4}}=i$, to get the asymptotic expansion of
$\mathcal{W}_{-1,\lfloor\frac{m}{2}\rfloor}(a,\lambda)$. This completes the proof.
\end{proof}

 The {\it order of an entire function} $g$ is defined by
$$\limsup_{r\rightarrow \infty}\frac{\log \log M(r,g)}{\log r},$$
where $M(r, g)=\max \{|g(re^{i\theta})|: 0\leq \theta\leq 2\pi\}$ for $r>0$.
If for some positive real numbers $\sigma,\, c_1,\, c_2$, we have $\exp[c_1 r^{\sigma}]\leq M(r,g)\leq  \exp[c_2 r^{\sigma}]$ for all large $r$, then the order of $g$ is $\sigma$.
\begin{corollary}\label{zero_free}
Let $1\leq n\leq \frac{m}{2}$. Then
 the entire functions $\mathcal{W}_{-1,n}(a,\cdot)$  are of order $\frac{1}{2}+\frac{1}{m}$, and hence they have infinitely many zeros in the complex plane. Moreover,  $\mathcal{W}_{-1,n}(a,\cdot)$ have at most finitely many zeros in the sector
$$
-\frac{2(n-1)\pi}{m+2}+\delta\leq\arg(\lambda)\leq\pi+\frac{4\pi}{m+2}-\delta.
$$
\end{corollary}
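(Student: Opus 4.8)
The plan is to read everything off the asymptotic expansions established in Sections~\ref{sec_4} and \ref{sec_5}, together with Corollary~\ref{lemma_decay}; throughout write $\rho=\tfrac12+\tfrac1m$. The one structural identity I will use repeatedly comes from solving \eqref{kplus1} for the lower index and specializing:
\begin{equation}\nonumber
\mathcal{W}_{-1,n}(a,\lambda)=\omega\,\mathcal{W}_{0,n+1}\bigl(G^{-1}(a),\omega^{-2}\lambda\bigr),
\end{equation}
and the hypothesis $1\le n\le\frac m2$ makes $1\le n+1\le\frac m2+1$, so Lemma~\ref{lemma7} applies to the right-hand side. First, that the order is $\le\rho$: the $f_j$ are entire in $\lambda$ by Theorem~\ref{prop}(i), and the Wronskian $\mathcal{W}_{-1,n}=f_{-1}f_n^\d-f_{-1}^\d f_n$ is independent of $z$, so I evaluate it at $z=0$. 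Using $f_k(0,a,\lambda)=f(0,G^k(a),\omega^{2k}\lambda)$ and $f_k^\d(0,a,\lambda)=\omega^{-k}f^\d(0,G^k(a),\omega^{2k}\lambda)$, the function $\lambda\mapsto\mathcal{W}_{-1,n}(a,\lambda)$ becomes a fixed finite combination of products of two factors, each of the form $f(0,\cdot,\omega^{2k}\lambda)$ or $f^\d(0,\cdot,\omega^{2k}\lambda)$. By Theorem~\ref{prop}(v) every such factor has order $\rho$ in $\lambda$; replacing $\lambda$ by $\omega^{2k}\lambda$ only rotates the plane and does not change the maximum modulus; and sums and products of finitely many entire functions of order $\le\rho$ again have order $\le\rho$. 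Hence $M(r,\mathcal{W}_{-1,n})\le\exp[c_2 r^{\rho}]$ for all large $r$.

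For the lower bound, apply Lemma~\ref{lemma7} with $j=n+1$ to the identity above and convert the sector by $\mu=\omega^{-2}\lambda$; using $G^{n+1}(G^{-1}(a))=G^n(a)$ and simplifying the powers of $\omega$ yields
\begin{equation}\nonumber
\mathcal{W}_{-1,n}(a,\lambda)=\bigl[2i\,\omega^{\frac{1-n}{2}}+O\bigl(\lambda^{-\rho}\bigr)\bigr]\exp\Bigl[L\bigl(G^{n}(a),\omega^{2n-m-2}\lambda\bigr)+L\bigl(G^{-1}(a),\omega^{-2}\lambda\bigr)\Bigr]
\end{equation}
as $\lambda\to\infty$ in $\pi-\tfrac{4n\pi}{m+2}+\delta\le\arg(\lambda)\le\pi+\tfrac{4\pi}{m+2}-\delta$. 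By \eqref{re_part} and a product-to-sum identity, the real part of the exponent equals
\begin{equation}\nonumber
2K_m\sin\!\Bigl(\tfrac{n\pi}{m}\Bigr)\sin\!\Bigl(\tfrac{m+2}{2m}\arg(\lambda)+\tfrac{(n-2)\pi}{m}\Bigr)\,|\lambda|^{\rho}\,(1+o(1)),
\end{equation}
where $K_m>0$ and $\sin(n\pi/m)>0$ since $1\le n\le\frac m2$. Choosing the fixed ray $\arg(\lambda)=\tfrac{(m-2n+4)\pi}{m+2}$, which for $n\ge1$ lies strictly inside this sector and on which the second sine equals $1$, we get $|\mathcal{W}_{-1,n}(a,\lambda)|\ge\exp[c_1|\lambda|^{\rho}]$ for $|\lambda|$ large, hence $M(r,\mathcal{W}_{-1,n})\ge\exp[c_1 r^{\rho}]$. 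Combined with the previous paragraph, the order is exactly $\rho$. Since $\rho\in(0,1)$ is not an integer, the Hadamard factorization theorem (exactly as for the Stokes multiplier in the Introduction) forces $\mathcal{W}_{-1,n}(a,\cdot)$ to have infinitely many zeros.

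For the last assertion: on $\pi-\tfrac{4n\pi}{m+2}+\delta\le\arg(\lambda)\le\pi+\tfrac{4\pi}{m+2}-\delta$ the displayed one-term asymptotic reads $\mathcal{W}_{-1,n}(a,\lambda)=[c+O(\lambda^{-\rho})]\exp[\cdots]$ with $c=2i\,\omega^{\frac{1-n}{2}}\ne0$ and the exponential nowhere zero, so $\mathcal{W}_{-1,n}(a,\lambda)\ne0$ once $|\lambda|$ is large. The remaining wedge $-\tfrac{2(n-1)\pi}{m+2}+\delta\le\arg(\lambda)\le\pi-\tfrac{4n\pi}{m+2}+\delta$ is covered by \eqref{up_asy} when $1\le n<\lfloor\frac m2\rfloor$, by Theorem~\ref{bod_thm} (and by Theorem~\ref{thm_sector2} in the case $m=3$) when $n=\lfloor\frac m2\rfloor$ and $m$ is odd, and is degenerate --- its two endpoints coincide and the single ray already lies on the boundary of the first sector --- when $m$ is even and $n=\frac m2$. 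In each non-degenerate piece one identifies the dominant exponential term exactly as in the proof of Corollary~\ref{lemma_up} (comparing real parts of the relevant $L$'s via \eqref{re_part}), and its nonzero coefficient again gives $\mathcal{W}_{-1,n}(a,\lambda)\ne0$ for $|\lambda|$ large. Thus in the full sector of the statement $\mathcal{W}_{-1,n}(a,\cdot)$ vanishes only inside a bounded disc, hence at finitely many points; and since these sectors shrink as $\delta$ grows, it suffices to argue for one small $\delta$.

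The step I expect to be the main obstacle is the bookkeeping in the last paragraph: verifying that the sectors of \eqref{up_asy}, Lemma~\ref{lemma7}, Theorem~\ref{bod_thm}, and Theorem~\ref{thm_sector2} patch together to cover precisely $-\tfrac{2(n-1)\pi}{m+2}+\delta\le\arg(\lambda)\le\pi+\tfrac{4\pi}{m+2}-\delta$ across the separate cases ($m$ even versus odd, and $n<\lfloor\frac m2\rfloor$ versus $n=\lfloor\frac m2\rfloor$), and that within each piece a single exponential term genuinely dominates. Everything else reduces to Corollary~\ref{lemma_decay} and elementary trigonometry.
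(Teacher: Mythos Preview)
Your proof is correct and supplies the details that the paper leaves implicit (Corollary~\ref{zero_free} is stated without proof). The intended argument is precisely the one you give: read the order upper bound off Theorem~\ref{prop}(v) via the Wronskian at $z=0$, read the lower bound and the zero-free regions off the one-term asymptotics of Sections~\ref{sec_4}--\ref{sec_5}, and invoke Hadamard for the infinitude of zeros.

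A few confirmations on the patching you flagged as the main obstacle. The shift identity $\mathcal{W}_{-1,n}(a,\lambda)=\omega\,\mathcal{W}_{0,n+1}(G^{-1}(a),\omega^{-2}\lambda)$ together with Lemma~\ref{lemma7} indeed covers $\pi-\tfrac{4n\pi}{m+2}+\delta\le\arg(\lambda)\le\pi+\tfrac{4\pi}{m+2}-\delta$ for all $1\le n\le\tfrac m2$. For $1\le n<\lfloor\tfrac m2\rfloor$, \eqref{up_asy} (valid in the sector \eqref{zero3_sector}) picks up the remaining wedge and is already a one-term expansion. For $m$ even and $n=\tfrac m2$ the two left endpoints coincide, as you observed, so nothing further is needed. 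For $m\ge5$ odd and $n=\lfloor\tfrac m2\rfloor$, the missing wedge $-\tfrac{(m-3)\pi}{m+2}+\delta\le\arg(\lambda)\le-\tfrac{(m-4)\pi}{m+2}+\delta$ lies inside the sector of \eqref{thm_eq2}; on that wedge the difference of real parts of the two exponents reduces (via \eqref{re_part} and product-to-sum) to a positive multiple of $|\lambda|^{\rho}$, so the second term in \eqref{thm_eq2} dominates. For $m=3$, $n=1$, Theorem~\ref{thm_sector2} in the upper-half-plane sector yields that the first exponential dominates for $\arg(\lambda)>0$, since the difference of real parts equals $2K_3\cos(\tfrac\pi6)\sin(\tfrac56\arg(\lambda))|\lambda|^{\rho}(1+o(1))$. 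So the covering and the dominance hold in every case; your remark that one may shrink $\delta$ to make the sectors overlap is the standard way to turn these closed sectors into an open cover.
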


%%%%%%%%%%%%%%%%%%%%%%%%%%%%%%%%%%%%%%%%%%%%%%%%%%%
%%%%%%%%%%%%%%%%%%%%%%%%%%%%%%%%%%%%%%%%%%%%%%%%%%
\section{Relation between eigenvalues of $H_{\ell}$ and zeros of $\mathcal{W}_{-1,n}(a,\cdot)$}\label{asymp_eigen}
%%%%%%%%%%%%%%%%%%%%%%%%%%%%%%%%%%%%%%%%%%%%%%%%%%%%
%%%%%%%%%%%%%%%%%%%%%%%%%%%%%%%%%%%%%%%%%%%%%%%%%%%%%%%
In this section, we will relate the eigenvalues of $H_{\ell}$ with zeros of some entire function
$\mathcal{W}_{-1,n}(a,\cdot)$.

Suppose that $\ell=2s-1$ is odd with $1\leq \ell\leq m-1$. Then \eqref{ptsym} becomes \eqref{rotated} by the
scaling $v(z)=u(-iz)$, and $v$ decays in the Stokes sectors $S_{-s}$ and $S_{s}$. Since $f_{s-1}$ and $f_{s}$ are
linearly independent, for some $D_s$ and $\widetilde{D}_s$ one can write
$$f_{-s}(z,a,\lambda)=D_s(a,\lambda)f_{s-1}(z,a,\lambda)+\widetilde{D}_s(a,\lambda)f_{s}(z,a,\lambda).$$
Then one finds
$$
D_s(a,\lambda)=\frac{\mathcal{W}_{-s,s}(a,\lambda)}{\mathcal{W}_{s-1,s}(a,\lambda)}\quad\text{and}\quad
\widetilde{D}_s(a,\lambda)=\frac{\mathcal{W}_{-s,s-1}(a,\lambda)}{\mathcal{W}_{s,s-1}(a,\lambda)}.
$$
Also it is easy to see that $\lambda$ is an eigenvalue of $H_{\ell}$ if and only if $D_s(a,\lambda)=0$ if and only
if $\mathcal{W}_{-s,s}(a,\lambda)=0$. Since
$\mathcal{W}_{-s,s}(a,\lambda)=\omega^{s-1}\mathcal{W}_{-1,2s-1}(G^{-s+1}(a),\omega^{-2s+2}\lambda)$, by Corollary
\ref{zero_free} $\mathcal{W}_{-s,s}(a,\lambda)$ has at most finitely many zeros in the sector $
-\frac{2(2s-2)\pi}{m+2}+\delta\leq\arg(\omega^{-2s+2}\lambda)\leq\pi+\frac{4\pi}{m+2}-\delta, $ that is,
$$
\delta\leq\arg(\lambda)\leq\pi+\frac{2(2s+1)\pi}{m+2}-\delta.
$$

Next, by symmetry one can show that $\mathcal{W}_{-s,s}(a,\lambda)$ has at most finitely many zeros in the sector
$\pi\leq\arg(\lambda)\leq 2\pi-\delta$. For that, we examine $H_{\ell}$ with $P(z)$ replaced by
$\overline{P(\overline{z})}$ whose coefficient vector is
$\overline{a}:=(\overline{a}_1,\,\overline{a}_2,\dots,\overline{a}_{m})$. Then one sees that
$\mathcal{W}_{-s,s}(a,\lambda)=0$ if and only if $\mathcal{W}_{-s,s}(\overline{a},\overline{\lambda})=0$.
$\mathcal{W}_{-s,s}(\overline{a},\overline{\lambda})$ has at most finitely many zeros in the sector
$\delta\leq\arg(\overline{\lambda})\leq\pi$ by the arguments above.  Thus, $\mathcal{W}_{-s,s}(a,\lambda)$ has at
most finitely many zeros in the sector $\pi\leq\arg(\lambda)\leq 2\pi-\delta$, and has infinitely many zeros in
the sector $|\arg(\lambda)|\leq\delta$ since it is an entire function of order $\frac{1}{2}+\frac{1}{m}\in(0,1)$.

Suppose that $\ell=2s$ is even with $1\leq \ell\leq m-1$. Then \eqref{ptsym} becomes \eqref{rotated2} by the
scaling $y(z)=u(-i\omega^{-\frac{1}{2}}z)$, and $y$ decays in the Stokes sectors $S_{-s}$ and $S_{s+1}$. We then
see that the coefficient vector $\widetilde{a}$ of the polynomial $\omega^{-1}P(\omega^{-\frac{1}{2}}z)$ becomes
$$\widetilde{a}=G^{-\frac{1}{2}}(a).$$

Now one can express $f_{-s}$ as a linear combination of
 $f_{s}$ and $f_{s+1}$ as follows.
$$f_{-s}(z,\widetilde{a} ,\omega^{-1}\lambda)=\frac{\mathcal{W}_{-s,s+1}(\widetilde{a},\omega^{-1}\lambda)}{\mathcal{W}_{s,s+1}(\widetilde{a},\omega^{-1}\lambda)}f_{s}(z,\widetilde{a},\omega^{-1}\lambda)+\frac{\mathcal{W}_{-s,s}(\widetilde{a},\omega^{-1}\lambda)}{\mathcal{W}_{s+1,s}(\widetilde{a},\omega^{-1}\lambda)}f_{s+1}(z,\widetilde{a},\omega^{-1}\lambda).$$
Thus,  $\lambda$ is an eigenvalue of $H_{\ell}$ if and only if
$\mathcal{W}_{-s,s+1}(\widetilde{a},\omega^{-1}\lambda)=0$. Since
$$\mathcal{W}_{-s,s+1}(\widetilde{a},\omega^{-1}\lambda)=\omega^{s-1}\mathcal{W}_{-1,2s}(G^{-s+1}(\widetilde{a}),\omega^{-2s+1}\lambda),$$
  by Corollary \ref{zero_free},
 $\mathcal{W}_{-s,s+1}(\widetilde{a},\omega^{-1}\lambda)$ has at most finitely many zeros in the sector
$ -\frac{2(2s-1)\pi}{m+2}+\delta\leq\arg(\omega^{-2s+1}\lambda)\leq\pi+\frac{4\pi}{m+2}-\delta, $ that is,
$$
\delta\leq\arg(\lambda)\leq\pi+\frac{4s\pi}{m+2}-\delta.
$$
This is true for each $a\in\C^{m}$. So one can show that $\mathcal{W}_{-s,s+1}(\widetilde{a},\omega^{-1}\lambda)$
has at most finitely many zeros in the sector $\pi\leq\arg(\lambda)\leq 2\pi-\delta$ by symmetry, similar to the
case when $\ell$ is odd.   Thus, $\mathcal{W}_{-s,s+1}(a,\lambda)$  has infinitely many zeros in the sector
$|\arg(\lambda)|\leq\delta$.

%%%%%%%%%%%%%%%%%%%%

\section{Proof of Theorem \ref{main_thm1} when $1\leq\ell<\lfloor\frac{m}{2}\rfloor$}\label{sec_7}

In this section, we prove Theorem \ref{main_thm1} for
$1\leq\ell<\lfloor\frac{m}{2}\rfloor$ and in doing so, we will use
the following proposition on univalent functions.
\begin{proposition}[\protect{\cite[p.\ 216]{TK}}]\label{gen_pro1}
${} $ Let $A(\mu)$ be  analytic in the region $S=\{\mu\in\C:
\alpha\leq \arg(\mu)\leq\beta, |\mu|\geq M\}$ for some
$\alpha,\beta\in\R$ with $\beta-\alpha<\pi$ and for some $M>0$.
Suppose that $A(\mu)\to0$ as $\mu\to\infty$ in $S$. Then there
exist $\widetilde{\alpha}, \widetilde{\beta}, \widetilde{M}$ such
that $\mu(1+A(\mu))$ is univalent in $\{\mu\in\C:
\widetilde{\alpha}\leq \arg(\mu)\leq\widetilde{\beta}, |\mu|\geq
\widetilde{M}\}\subset S$.
\end{proposition}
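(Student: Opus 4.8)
The plan is to show that on a suitable subsector the map $g(\mu):=\mu(1+A(\mu))$ is a small perturbation of the identity in the strong sense that $|g'(\mu)-1|$ is small there, and then to deduce injectivity directly. First I would fix $\widetilde\alpha,\widetilde\beta$ with $\alpha<\widetilde\alpha<\widetilde\beta<\beta$. For $\mu$ with $\widetilde\alpha\le\arg\mu\le\widetilde\beta$ and $|\mu|$ large, the closed disk $\overline{D}(\mu,\rho|\mu|)$, with a fixed small $\rho>0$ depending only on $\widetilde\alpha-\alpha$ and $\beta-\widetilde\beta$, lies in the interior of $S$. Since $A(\zeta)\to0$ as $\zeta\to\infty$ in $S$, $A$ is bounded for $|\zeta|$ large, so Cauchy's estimate gives $|A'(\mu)|\le(\rho|\mu|)^{-1}\sup_{|\zeta-\mu|=\rho|\mu|}|A(\zeta)|=o(|\mu|^{-1})$ as $\mu\to\infty$ in $\{\widetilde\alpha\le\arg\mu\le\widetilde\beta\}$. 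Hence $g'(\mu)=1+A(\mu)+\mu A'(\mu)\to1$ there, and one may fix $\widetilde M\ge M$ so that $|g'(\mu)-1|<\tfrac{1}{\sqrt2\,\pi}$ whenever $\widetilde\alpha\le\arg\mu\le\widetilde\beta$ and $|\mu|\ge\widetilde M$. Set $\widetilde S:=\{\mu:\widetilde\alpha\le\arg\mu\le\widetilde\beta,\ |\mu|\ge\widetilde M\}$, so that $\widetilde S\subset S$.

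Next, to compare values of $g$ at two points of $\widetilde S$ I cannot use the segment joining them, since $\widetilde S$ is not convex and the segment may leave $S$. Instead, given $\mu_1=r_1e^{i\theta_1},\mu_2=r_2e^{i\theta_2}\in\widetilde S$ with $r_1\le r_2$, I use the path $\gamma$ that runs from $\mu_1$ along the arc $|\zeta|=r_1$ to $r_1e^{i\theta_2}$ and then radially out to $\mu_2$; it stays inside $\widetilde S$, and $\operatorname{length}\gamma=r_1|\theta_1-\theta_2|+(r_2-r_1)$. Because $\beta-\alpha<\pi$ we have $|\theta_1-\theta_2|<\pi$, whence $1-\cos|\theta_1-\theta_2|\ge 2|\theta_1-\theta_2|^2/\pi^2$, and combining this with $a^2+b^2\ge\tfrac12(a+b)^2$ one checks
\begin{equation}\nonumber
|\mu_1-\mu_2|^2=(r_2-r_1)^2+2r_1r_2\bigl(1-\cos|\theta_1-\theta_2|\bigr)\ge\frac{2}{\pi^2}(\operatorname{length}\gamma)^2 ,
\end{equation}
i.e. $\operatorname{length}\gamma\le\tfrac{\pi}{\sqrt2}\,|\mu_1-\mu_2|$.

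With these two ingredients the conclusion is immediate. For $\mu_1\ne\mu_2$ in $\widetilde S$,
\begin{equation}\nonumber
g(\mu_2)-g(\mu_1)=\int_\gamma g'(\zeta)\,d\zeta=(\mu_2-\mu_1)+\int_\gamma\bigl(g'(\zeta)-1\bigr)\,d\zeta ,
\end{equation}
and by the first paragraph the last integral has modulus at most $\tfrac{1}{\sqrt2\,\pi}\operatorname{length}\gamma$, which by the second paragraph is at most $\tfrac12|\mu_2-\mu_1|$. Hence $|g(\mu_2)-g(\mu_1)|\ge\tfrac12|\mu_2-\mu_1|>0$, so $g$ is univalent on $\widetilde S$, and $\widetilde\alpha,\widetilde\beta,\widetilde M$ are the required data.

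The main obstacle is precisely this non-convexity: the segment $[\mu_1,\mu_2]$ may leave $S$, where no bound on $g'$ is available, so it must be replaced by the arc-plus-radius path $\gamma$, at the price of the factor $\tfrac{\pi}{\sqrt2}$; this is why the threshold in the first step is taken to be $\tfrac{1}{\sqrt2\,\pi}$ rather than $\tfrac12$, and everything else — the Cauchy estimate and the elementary trigonometric inequality — is routine. (An alternative to the path $\gamma$ is to note first that $g(\mu_1)=g(\mu_2)$ forces $\mu_1/\mu_2=(1+A(\mu_2))/(1+A(\mu_1))\to1$, reducing matters to the case where $|\mu_1-\mu_2|$ is small compared with $|\mu_1|$, so that the straight segment does stay in $S$, and then arguing along it.)
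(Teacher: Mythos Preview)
Your argument is correct: the Cauchy estimate on disks $\overline{D}(\mu,\rho|\mu|)\subset S$ gives $\mu A'(\mu)\to 0$, hence $g'(\mu)\to 1$ on the subsector, and the arc-plus-radius path together with Jordan's inequality $1-\cos t\ge 2t^2/\pi^2$ yields the length bound $\operatorname{length}\gamma\le\tfrac{\pi}{\sqrt{2}}|\mu_1-\mu_2|$, from which injectivity follows. (The step $a^2+\tfrac{4}{\pi^2}b^2\ge \tfrac{4}{\pi^2}(a^2+b^2)\ge \tfrac{2}{\pi^2}(a+b)^2$ is indeed how the two displayed inequalities combine.)

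There is nothing to compare with here: the paper does not supply its own proof of this proposition but simply refers the reader to Kimura \cite[p.~216]{TK} and Sibuya \cite[Thm.~3.4]{Sibuya}. Your self-contained argument therefore goes beyond what the paper provides.
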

\begin{proof}
See \cite[p.\ 216]{TK} or \cite[Thm.\ 3.4]{Sibuya} for a proof.
\end{proof}

We will consider two cases; when $\ell$ is odd and when $\ell$ is
even.
\begin{proof}[Proof of Theorem ~\ref{main_thm1} when $1\leq\ell<\lfloor\frac{m}{2}\rfloor$ is odd]
We will closely follow his proof of Theorem 29.1 in \cite{Sibuya}
where Sibuya  computed the leading term in the asymptotics
\eqref{main_result} for  $\ell=1$.

Suppose that $1\leq \ell=2s-1<\lfloor\frac{m}{2}\rfloor$ is odd.
Recall that when $\ell$ is odd, $\lambda$ is an eigenvalue of
$H_{\ell}$ if and only if
 $\mathcal{W}_{-s,s}(a,\lambda)=0$. Also, in Section \ref{asymp_eigen} we showed that $\mathcal{W}_{-s,s}(a,\lambda)$ has all  zeros except finitely many in the sector
 $|\arg(\lambda)|\leq\delta$ and hence,
all the eigenvalues $\lambda$ of  $H_{\ell}$ lie in the sector $|\arg(\lambda)|\leq\delta$ if $|\lambda|$ is large
enough.

Since
\begin{equation}\nonumber
\mathcal{W}_{-s,s}(a,\lambda)=\omega^{s-1}\mathcal{W}_{-1,2s-1}(G^{-s+1}(a),\omega^{-2s+2}\lambda),
\end{equation}
 we will use \eqref{zero_asy} to investigate asymptotics of large eigenvalues. Suppose that $\mathcal{W}_{-s,s}(a,\lambda)=0$ and $|\lambda|$ is large enough.
 Then from \eqref{zero_asy} with $n=2s-1$, and with $a$ and $\lambda$ replaced by $G^{-s+1}(a)$ and $\omega^{-2s+2}\lambda$, respectively, we have
\begin{align}
\left[1+O\left(\lambda^{-\rho}\right)\right]&\exp\left[L(G^{s}(a),\omega^{2s}\lambda)-L(G^{-s}(a),\omega^{-2s}\lambda)\right]\nonumber\\
&\times\exp\left[L(G^{s-1}(a),\omega^{2s-2}\lambda)-L(G^{-s+1}(a),\omega^{-2s+2}\lambda)\right]=-\omega^{2\nu(G^s(a))}.\nonumber
\end{align}
Also, since $\left[1+O\left(\lambda^{-\rho}\right)\right]=\exp\left[O\left(\lambda^{-\rho}\right)\right]$ and
$\omega^{2\nu(G^s(a))}=\exp\left[\frac{4\pi \nu(G^s(a))}{m+2}\right]$,
\begin{align}
&\exp\left[L(G^{s}(a),\omega^{2s}\lambda)-L(G^{-s}(a),\omega^{-2s}\lambda)\right]\nonumber\\
&\times\exp\left[L(G^{s-1}(a),\omega^{2s-2}\lambda)-L(G^{-s+1}(a),\omega^{-2s+2}\lambda)-\frac{4\pi
\nu(G^s(a))}{m+2}+O\left(\lambda^{-\rho}\right)\right]=-1.\label{sim_eq1}
\end{align}

For each odd integer $\ell=2s-1$ in $1\leq\ell<\lfloor\frac{m}{2}\rfloor$, we define
\begin{align}
h_{m,\ell}(\lambda)=& L(G^{s}(a),\omega^{2s}\lambda)-L(G^{-s}(a),\omega^{-2s}\lambda)\nonumber\\
&+L(G^{s-1}(a),\omega^{2s-2}\lambda)-L(G^{-s+1}(a),\omega^{-2s+2}\lambda)-\frac{4\pi
\nu(G^s(a))}{m+2}+O\left(\lambda^{-\rho}\right).\nonumber
\end{align}
Then by Corollary \ref{lemma_decay},
\begin{align}
h_{m,\ell}(\lambda)&=K_m\left(e^{\frac{2s\pi}{m}i}-e^{-\frac{2s\pi}{m}i}+e^{\frac{2(s-1)\pi}{m}i}-e^{-\frac{2(s-1)\pi}{m}i}\right)\lambda^{\frac{m+2}{2m}}(1+o(1))\nonumber\\
&=2iK_m\left(\sin\left(\frac{2s\pi}{m}\right)+\sin\left(\frac{2s\pi}{m}-\frac{2\pi}{m}\right)\right)\lambda^{\frac{m+2}{2m}}(1+o(1))\nonumber\\
&=4iK_m\cos\left(\frac{\pi}{m}\right)\sin\left(\frac{(2s-1)\pi}{m}\right)\lambda^{\frac{m+2}{2m}}(1+o(1))\quad\text{as}\quad\lambda\to\infty,\label{assco_eq}
\end{align}
in the sector $|\arg(\lambda)|\leq\delta$. Since $K_m>0$ and $0<\frac{(2s-1)\pi}{m}<\pi$, the function
$h_{m,\ell}(\cdot)$ maps the region $|\lambda|\geq M_1$ for some large $M_1$ and $|\arg(\lambda)|\leq\delta$ into
a region containing $|\lambda|\geq M_2$ for some large $M_2$ and $|\arg(\lambda)-\frac{\pi}{2}|\leq\varepsilon_1$
for some $\varepsilon_1>0$. Also, $h_{m,\ell}(\cdot)$ is analytic in this region due to the analyticity of
$f(0,a,\lambda)$ in Theorem \ref{prop}. Following Sibuya, we will show that for every large positive integer $n$
there exists $\lambda_n$ such that
\begin{equation}\label{h_eq}
h_{m,\ell}(\lambda_{n}) =\left(2n+1\right)\pi i.
\end{equation}

Next, Proposition \ref{gen_pro1} with $\mu=\lambda^{\frac{m+2}{2m}}$ implies that there exist $M_1^\d>0$ and
$0<\delta^\d<\frac{\pi}{6}$ such that $\mu(1+A(\mu))$ is univalent in the sector $|\arg(\mu)|\leq\delta^\d$ and
$|\mu|\geq M_1^\d.$ Thus, for each $n\in\Z$, there is at most one $\mu_n$ in the sector $|\arg(\mu)|\leq\delta^\d$
and $|\mu|\geq M_1^\d$ such that
\begin{equation}\label{B_eq}
\mu_n(1+A(\mu_n))=B_n,
\end{equation}
where $A$ is the error term in \eqref{assco_eq} and
\begin{equation}\nonumber
B_n=\frac{\left(2n+1\right)\pi}{4K_m\cos\left(\frac{\pi}{m}\right)\sin\left(\frac{(2s-1)\pi}{m}\right)}.
\end{equation}
Then there exists $M_1^\dd\geq M_1^\d$ such that in  the sector $|\arg(\mu)|\leq\delta^\d$ and $|\mu|\geq
M_1^\dd$,
\begin{equation}\nonumber
|A(\mu)|<\frac{1}{5}.
\end{equation}
Also we can take $n$ large enough so that $\frac{1}{2}B_n\geq M_1^\dd$. Next, we define
\begin{equation}\nonumber
A_n:=\{|A(\mu)|: |\arg(\mu)|\leq\delta^\d, |\mu|\geq B_n\}.
\end{equation}
Then $\lim_{n\to\infty}A_n=0$

Suppose that $n$ is so large that $A_n\leq \frac{1}{2}\sin(\delta^\d)<\frac{1}{4}$. Then the disk defined by
$|\mu-B_n|\leq \frac{A_n}{1-2A_n}B_n$ is contained in the sector $|\arg(\mu)|\leq\delta^\d$ and
$|\mu|\geq\frac{1}{2}B_n$. Moreover, on the circle $|\mu-B_n|=\frac{A_n}{1-2A_n}B_n$,
\begin{align}
|\mu-B_n|-|\mu||A(\mu)|&\geq|\mu-B_n|-|\mu|A_n\nonumber\\
&\geq |\mu-B_n|(1-A_n)-A_nB_n\nonumber\\
&\geq \frac{1-A_n}{1-2A_n}A_nB_n-A_nB_n>0.\nonumber
\end{align}
Thus, by the Rouch\'e's theorem in complex analysis (see, e.\ g., \cite[p. 125]{Conway}), $\mu-B_n$ and
$\mu(1+A(\mu))-B_n$ have the same number of zeros in the disk and hence, \eqref{B_eq} has exactly one $\mu_n$ in
the sector. Therefore, there exists $N=N(m,a)>0$ such that \eqref{h_eq} has exactly one solution $\lambda_n$ for
all integers $n\geq N$.

Next, since
\begin{align}
&-\frac{\nu(G^{s}(a))}{m}\ln(\omega^{2s}\lambda)+\frac{\nu(G^{-s}(a))}{m}\ln(\omega^{-2s}\lambda)\nonumber\\
&-\frac{\nu(G^{s-1}(a))}{m}\ln(\omega^{2s-2}\lambda)+\frac{\nu(G^{-s+1}(a))}{m}\ln(\omega^{-2s+2}\lambda)-\frac{4\nu(G^s(a))}{m+2}\pi i\nonumber\\
=& \frac{\nu(G^{s}(a))}{m}\left(-\ln(\omega^{2s}\lambda)+\ln(\omega^{-2s}\lambda)
+\ln(\omega^{2s-2}\lambda)-\ln(\omega^{-2s+2}\lambda)\right)-\frac{4\nu(G^s(a))}{m+2}\pi i\nonumber\\
=& -\frac{4}{m}\nu(G^{s}(a))\pi i,\nonumber
\end{align}
using Corollary \ref{asy_lemma}, \eqref{h_eq} becomes
\begin{equation}\label{const_term}
\left(2n+1\right)\pi
i=\sum_{j=0}^{m+1}e_{\ell,j}(a)\lambda_{n}^{\frac{1}{2}+\frac{1-j}{m}}-\frac{4}{m}\nu(G^{s}(a))\pi
i+O\left(\lambda^{-\rho}\right),
\end{equation}
where
\begin{align}
e_{\ell,j}(a)=&K_{m,j}(G^{s}(a))\omega^{2s\left(\frac{1}{2}+\frac{1-j}{m}\right)}-K_{m,j}(G^{-s}(a))\omega^{-2s\left(\frac{1}{2}+\frac{1-j}{m}\right)}\nonumber\\
&+K_{m,j}(G^{s-1}(a))\omega^{2(s-1)\left(\frac{1}{2}+\frac{1-j}{m}\right)}-K_{m,j}(G^{-s+1}(a))\omega^{-2(s-1)\left(\frac{1}{2}+\frac{1-j}{m}\right)}\nonumber\\
=&K_{m,j}(a)\left(\omega^{-sj+2s\left(\frac{1}{2}+\frac{1-j}{m}\right)}-\omega^{sj-2s\left(\frac{1}{2}+\frac{1-j}{m}\right)}\right)\nonumber\\
&+K_{m,j}(a)\left(\omega^{-(s-1)j+2(s-1)\left(\frac{1}{2}+\frac{1-j}{m}\right)}-\omega^{(s-1)j-2(s-1)\left(\frac{1}{2}+\frac{1-j}{m}\right)}\right)\nonumber\\
=&4iK_{m,j}(a)\sin\left(\frac{(1-j)\ell\pi}{m}\right)\cos\left(\frac{(1-j)\pi}{m}\right),\nonumber
\end{align}
where we used Lemma \ref{lemma_25}.

In summary, we have showed that for each $a\in\C^m$, there exists
$N_0=N_0(m,a)\leq N$ such that $\{\lambda_n\}_{n\geq N_0}$ is the
set of all eigenvalues that satisfy \eqref{const_term},  since
there are only at most finitely many eigenvalues outside
$|\arg(\lambda)|\leq \varepsilon$ or inside $|\lambda|\leq M$ for
any $\varepsilon>0,\,M>0$. In order to complete proof, we need to
show that $N_0$ does not depend on $a\in\C^m$.

We will prove that for each $R>0$, if $|a|\leq R$, then $N_0(m,a)=N_0(m,0)$.
 The error terms in \eqref{eq1} and \eqref{eq2} are uniform on the closed ball $|a|\leq R$ and hence, so is $A(\mu)$ as $\mu\to\infty$
 in the sector $|\arg(\mu)|\leq \delta^\d$. Thus, we can choose $N$ independent of $|a|\leq R$. Since $\mu(1+A(\mu))$ is univalent
 in the sector $|\arg(\mu)|\leq\delta^\d$ and $|\mu|\geq M_1^\dd\geq M_1^\d$, by the implicit function theorem and \eqref{B_eq} the function $a\mapsto\mu_n(a)$ is
 continuous on $|a|\leq R$ for each $n\geq N$ and hence, so is $a\mapsto\lambda_n(a)$ for $n\geq N$.

 Next, we claim that for all $|a|\leq R$, there are exactly the same number of
  eigenvalues that are not in $\{\lambda_n(a)\}_{n\geq N}$. To prove this, we will use the Hurwitz's theorem  (see, e. g., \cite[p. 152]{Conway}) in complex analysis. That is,
   if a sequence of analytic functions converges uniformly to an
analytic function on any compact sets, then  eventually functions in the sequence and the limit function have the
same number of zeros in any open set whose boundary does not contain any zeros of the limit function. The
Hurwitz's theorem implies that since the eigenvalues are the zeros of the entire function
$\mathcal{W}_{-s,s}(a,\lambda)$, they vary continuously as $a$ and hence, there is no sudden appearance or
disappearance of eigenvalues. Also, none of the eigenvalues that are not in $\{\lambda_n(a)\}_{n\geq N}$ can be
continuously mapped to $\lambda_n(a)$ for some $n\geq N$ as $a$ varies and hence, the claim is proved. This
completes the proof.
\end{proof}

Next we prove Theorem ~\ref{main_thm1} for $1\leq\ell<\lfloor\frac{m}{2}\rfloor$ is even.
\begin{proof}[Proof of Theorem ~\ref{main_thm1} when $1\leq\ell<\lfloor\frac{m}{2}\rfloor$ is even]

Proof is very similar to the case when $1\leq\ell<\lfloor\frac{m}{2}\rfloor$ is odd. Let $\ell=2s$ for some
$s\in\N$.

Recall that  $\lambda$ is an eigenvalue of $H_{\ell}$ if and only if
$\mathcal{W}_{-1,2s}(G^{-s+1}(\widetilde{a}),\omega^{-2s+1}\lambda)=0$. Then from \eqref{zero_asy},  we have
\begin{align}
&\exp\left[L(G^{s+1}(\widetilde{a}),\omega^{2s+1}\lambda)-L(G^{-s}(\widetilde{a}),\omega^{-2s-1}\lambda)\right]\nonumber\\
&\times\exp\left[L(G^{s}(\widetilde{a}),\omega^{2s-1}\lambda)-L(G^{-s+1}(\widetilde{a}),\omega^{-2s+1}\lambda)+O\left(\lambda^{-\rho}\right)\right]=-1,\label{sim_eq2}
\end{align}
where $\widetilde{a}=G^{-\frac{1}{2}}(a)$ and where we used
$\left[1+O\left(\lambda^{-\rho}\right)\right]=\exp[O\left(\lambda^{-\rho}\right)]$
again.

Like in the case when $1\leq\ell<\lfloor\frac{m}{2}\rfloor$ is
odd, from Lemma \ref{asy_lemma}, we have
\begin{equation}\label{asy_eq_even}
\left(2n+1\right)\pi
i=\sum_{j=0}^{m+1}e_{\ell,j}(a)\lambda_{n}^{\frac{1}{2}+\frac{1-j}{m}}+O\left(\lambda^{-\rho}\right),
\end{equation}
where for $0\leq j\leq m+1$,
\begin{align}
e_{\ell,j}(a)=&K_{m,j}(G^{s+\frac{1}{2}}(a))\omega^{(2s+1)\left(\frac{1}{2}+\frac{1-j}{m}\right)}-K_{m,j}(G^{-s-\frac{1}{2}}(a))\omega^{-(2s+1)\left(\frac{1}{2}+\frac{1-j}{m}\right)}\nonumber\\
&+K_{m,j}(G^{s-\frac{1}{2}}(a))\omega^{(2s-1)\left(\frac{1}{2}+\frac{1-j}{m}\right)}-K_{m,j}(G^{-s+\frac{1}{2}}(a))\omega^{-(2s-1)\left(\frac{1}{2}+\frac{1-j}{m}\right)}\nonumber\\
=&\sum_{k=0}^j(-1)^kK_{m,j,k}b_{j,k}(a)\left(\omega^{-j\frac{\ell+1}{2}+(\ell+1)\left(\frac{1}{2}+\frac{1-j}{m}\right)}-\omega^{j\frac{\ell+1}{2}-(\ell+1)\left(\frac{1}{2}+\frac{1-j}{m}\right)}\right.\nonumber\\
&\qquad\qquad\qquad\qquad\qquad\left.+\omega^{-j\frac{\ell-1}{2}+(\ell-1)\left(\frac{1}{2}+\frac{1-j}{m}\right)}-\omega^{j\frac{\ell-1}{2}-(\ell-1)\left(\frac{1}{2}+\frac{1-j}{m}\right)}\right)\nonumber\\
=&4i\sum_{k=0}^j(-1)^kK_{m,j,k}b_{j,k}(a)\sin\left(\frac{(1-j)\ell\pi}{m}\right)\cos\left(\frac{(1-j)\pi}{m}\right),\nonumber
\end{align}
where we used Lemma \ref{lemma_25} and $\ell=2s$. Then we use the
arguments in the case when $\ell$ is odd to show that there exists
$N_0=N_0(m)$ such that the set $\{\lambda_n\}_{n\geq N_0}$ of all
eigenvalues satisfy \eqref{asy_eq_even} and hence, the proof is
completed.
\end{proof}

%%%%%%%%%%%%%%%%%%%%%
%%%%%%%%%%%%%%%%%%%%%
\section{Proof of Theorem \ref{main_thm1} when $\ell=\lfloor\frac{m}{2}\rfloor$ and when $\frac{m}{2}<\ell\leq m-1$.}\label{sec_8}

In this section, we prove Theorem \ref{main_thm1} for
$\ell=\lfloor\frac{m}{2}\rfloor$. We first prove the theorem when
$m$ is even, and later, we will treat the cases when $m$ is odd.
Then at the end of the section, we will prove the theorem when
$\frac{m}{2}<\ell\leq m-1$, by scaling. Proof of existence of
$N_0=N_0(m)$ is the same as in Section \ref{sec_7}, so below we
will omit this part of proof.
\subsection{When $m$ is even} We further divide the case into when $\ell$ is odd and when $\ell$ is even.
\begin{proof}[Proof of Theorem ~\ref{main_thm1} when $m$ is even and $\ell=\lfloor\frac{m}{2}\rfloor$ is odd]
Suppose that $m$ is even and $\ell=\frac{m}{2}=2s-1$ for some $s\in\N$. Recall that $\lambda$ is an eigenvalue of
$H_{\ell}$ if and only if $\mathcal{W}_{-s,s}(a,\lambda)=0$ if and only if
$\mathcal{W}_{-1,2s-1}(G^{-s+1}(a),\omega^{-2s+2}\lambda)=0$.

So if $\mathcal{W}_{-1,2s-1}(G^{-s+1}(a),\omega^{-2s+2}\lambda)=0$ and if $|\lambda|$ is large enough, then from
\eqref{thm_eq1},
\begin{align}
&\exp\left[L(G^{3s-1}(a),\omega^{2s-2}\lambda)-L(G^{s+1}(a),\omega^{-2s+2}\lambda)\right]\nonumber\\
&\times\exp\left[L(G^{s-1}(a),\omega^{2s-2}\lambda)-L(G^{-s+1}(a),\omega^{-2s+2}\lambda)+O\left(\lambda^{-\rho}\right)\right]=-\omega^{4\nu(G^{s}(a))}.\nonumber
\end{align}
Then by Lemma \ref{asy_lemma},
\begin{equation}\nonumber
\left(\frac{8\nu(G^{s}(a))}{m+2}+2n+1\right)\pi
i=\sum_{j=0}^{m+1}e_{\ell,j}(a)\lambda_{n}^{\frac{1}{2}+\frac{1-j}{m}}+\frac{16(s-1)\nu(G^{s}(a))}{m(m+2)}\pi
i+O\left(\lambda^{-\rho}\right),
\end{equation}
where for $0\leq j\leq m+1$, the coefficients $e_{\ell,j}(a)$ are given by
\begin{align}
e_{\ell,j}(a)=&K_{m,j}(G^{3s-1}(a))\omega^{2(s-1)\left(\frac{1}{2}+\frac{1-j}{m}\right)}-K_{m,j}(G^{s+1}(a))\omega^{-2(s-1)\left(\frac{1}{2}+\frac{1-j}{m}\right)}\nonumber\\
&+K_{m,j}(G^{s-1}(a))\omega^{2(s-1)\left(\frac{1}{2}+\frac{1-j}{m}\right)}-K_{m,j}(G^{-s+1}(a))\omega^{-2(s-1)\left(\frac{1}{2}+\frac{1-j}{m}\right)}\nonumber\\
=&K_{m,j}(a)\left(\omega^{(s+1)j+2(s-1)\left(\frac{1}{2}+\frac{1-j}{m}\right)}-\omega^{-(s+1)j-2(s-1)\left(\frac{1}{2}+\frac{1-j}{m}\right)}\right.\nonumber\\
&\qquad\qquad\qquad\qquad\left.+\omega^{-(s-1)j+2(s-1)\left(\frac{1}{2}+\frac{1-j}{m}\right)}-\omega^{(s-1)j-2(s-1)\left(\frac{1}{2}+\frac{1-j}{m}\right)}\right)\nonumber\\
=&2iK_{m,j}(a)\left(1+(-1)^j\right)\sin\left(\frac{(1-j)(2s-2)\pi}{m}\right)\nonumber\\
=&4iK_{m,j}(a)\sin\left(\frac{(1-j)\pi}{2}\right)\cos\left(\frac{(1-j)\pi}{m}\right),\nonumber
\end{align}
where we used Lemma \ref{lemma_25}.
\end{proof}

\begin{proof}[Proof of Theorem ~\ref{main_thm1} when $m$ is even and $\ell=\lfloor\frac{m}{2}\rfloor$ is even]
Let $\ell=\frac{m}{2}=2s$ for some $s\in\N$. Recall that $\lambda$ is an eigenvalue of $H_{\ell}$ if and only if
$\mathcal{W}_{-1,2s}(G^{-s+1}(\widetilde{a}),\omega^{-2s+1}\lambda)=0$.

Suppose that
$\mathcal{W}_{-1,2s}(G^{-s+1}(\widetilde{a}),\omega^{-2s+1}\lambda)=\mathcal{W}_{-1,2s}(G^{-s+\frac{1}{2}}(a),\omega^{-2s+1}\lambda)=0$.
Then by  \eqref{thm_eq1},
\begin{align}
&\exp\left[L(G^{3s+\frac{1}{2}}(a),\omega^{2s-1}\lambda)-L((G^{s+\frac{3}{2}}(a),\omega^{-2s+1}\lambda)\right]\nonumber\\
&\times\exp\left[L(G^{s-\frac{1}{2}}(a),\omega^{2s-1}\lambda)-L(G^{-s+\frac{1}{2}}(a),\omega^{-2s+1}\lambda)+O\left(\lambda^{-\rho}\right)\right]=-1.\nonumber
\end{align}
Then like before,
\begin{equation}\nonumber
\left(2n+1\right)\pi
i=\sum_{j=0}^{m+1}e_{\ell,j}(a)\lambda_{n}^{\frac{1}{2}+\frac{1-j}{m}}+O\left(\lambda^{-\rho}\right),
\end{equation}
where for $0\leq j\leq m+1$,
\begin{align}
e_{\ell,j}(a)=&K_{m,j}(G^{3s+\frac{1}{2}}(a))\omega^{(2s-1)\left(\frac{1}{2}+\frac{1-j}{m}\right)}-K_{m,j}((G^{s+\frac{3}{2}}(a))\omega^{-(2s-1)\left(\frac{1}{2}+\frac{1-j}{m}\right)}\nonumber\\
&+K_{m,j}(G^{s-\frac{1}{2}}(a))\omega^{(2s-1)\left(\frac{1}{2}+\frac{1-j}{m}\right)}-K_{m,j}(G^{-s+\frac{1}{2}}(a))\omega^{-(2s-1)\left(\frac{1}{2}+\frac{1-j}{m}\right)}\nonumber\\
=&2i\sum_{k=0}^j(-1)^{k}K_{m,j,k}b_{j,k}(a)\left(1+(-1)^j\right)\sin\left(\frac{(1-j)(2s-1)\pi}{m}\right)\nonumber\\
=&4i\sum_{k=0}^j(-1)^kK_{m,j,k}b_{j,k}(a)\sin\left(\frac{(1-j)\pi}{2}\right)\cos\left(\frac{(1-j)\pi}{m}\right).\nonumber
\end{align}
\end{proof}

\subsection{When $m$ is odd} We divide the case into when $\ell$ is odd and when $\ell$ is even.
\begin{proof}[Proof of Theorem ~\ref{main_thm1} when $m$ and $\ell=\lfloor\frac{m}{2}\rfloor$ are odd]
Let $m$ and $\ell=\frac{m-1}{2}=2s-1$ be odd. Suppose that $\lambda$ is an eigenvalue of $H_{\ell}$. Since
$\lambda$ is an eigenvalue of $H_{\ell}$ if and only if
$\mathcal{W}_{-1,2s-1}(G^{-s+1}(a),\omega^{-2s+2}\lambda)=0$,
 if  $|\lambda|$ is large enough, then from  \eqref{thm_eq2},
\begin{align}
&\exp\left[L(G^{s-1}(a),\omega^{2s-2}\lambda)-L(G^{-s+1}(a),\omega^{-2s+2}\lambda)\right]\nonumber\\
&\times\exp\left[-L(G^{3s+1}(a),\omega^{-2s}\lambda)-L(G^{s+1}(a),\omega^{-2s+1}\lambda)+O\left(\lambda^{-\rho}\right)\right]=-1.\nonumber
\end{align}
Then, from Lemma \ref{asy_lemma},
\begin{equation}\nonumber
\left(2n+1\right)\pi
i=\sum_{j=0}^{m+1}e_{\ell,j}(a)\lambda_{n}^{\frac{1}{2}+\frac{1-j}{m}}+O\left(\lambda^{-\rho}\right),
\end{equation}
where for $0\leq j\leq m+1$,
\begin{align}
e_{\ell,j}(a)=&K_{m,j}(G^{s-1}(a))\omega^{(2s-2)\left(\frac{1}{2}+\frac{1-j}{m}\right)}-K_{m,j}(G^{-s+1}(a))\omega^{(-2s+2)\left(\frac{1}{2}+\frac{1-j}{m}\right)}\nonumber\\
&-K_{m,j}(G^{3s+1}(a))\omega^{-2s\left(\frac{1}{2}+\frac{1-j}{m}\right)}-K_{m,j}(G^{s+1}(a))\omega^{(-2s+1)\left(\frac{1}{2}+\frac{1-j}{m}\right)}\nonumber\\
=&K_{m,j}(a)\left(\omega^{-j(s-1)+(2s-2)\left(\frac{1}{2}+\frac{1-j}{m}\right)}-\omega^{j(s-1)-(2s-2)\left(\frac{1}{2}+\frac{1-j}{m}\right)}\right.\nonumber\\
&\qquad\qquad\qquad\left.-\omega^{-j(3s+1)-2s\left(\frac{1}{2}+\frac{1-j}{m}\right)}-\omega^{-j(s+1)-(2s-1)\left(\frac{1}{2}+\frac{1-j}{m}\right)}\right)\nonumber\\
=&K_{m,j}(a)\left(\omega^{(1-j)(2s-2)\left(\frac{1}{2}+\frac{1}{m}\right)}-\omega^{-(1-j)(2s-2)\left(\frac{1}{2}+\frac{1}{m}\right)}\right.\nonumber\\
&\qquad\qquad\qquad\left.-\omega^{js-2s\left(\frac{1}{2}+\frac{1-j}{m}\right)}+\omega^{-js+2s\left(\frac{1}{2}+\frac{1-j}{m}\right)}\right)\nonumber\\
=&2iK_{m,j}(a)\left(\sin\left(\frac{(m-3)(1-j)\pi}{2m}\right)+\sin\left(\frac{(m+1)(1-j)\pi}{2m}\right)\right)\nonumber\\
=&4iK_{m,j}(a)\sin\left(\frac{(1-j)\ell\pi}{m}\right)\cos\left(\frac{(1-j)\pi}{m}\right).\nonumber
\end{align}
\end{proof}

\begin{proof}[Proof of Theorem ~\ref{main_thm1} when $m$ is odd and $\ell=\lfloor\frac{m}{2}\rfloor$ is even]
Let $\ell=\frac{m-1}{2}=2s$ for some $s\in\N$. Then $\lambda$ is an eigenvalue of $H_{\ell}$ if and only if
$\mathcal{W}_{-1,2s}(G^{-s+1}(\widetilde{a}),\omega^{-2s+1}\lambda)=0$.

If $\mathcal{W}_{-1,2s}(G^{-s+1}(\widetilde{a}),\omega^{-2s+1}\lambda)=0$ and if $|\lambda|$ is large enough, then
from \eqref{thm_eq2},
\begin{align}
&\exp\left[L(G^{s}(\widetilde{a}),\omega^{2s-1}\lambda)-L(G^{-s+1}(\widetilde{a}),\omega^{-2s+1}\lambda)\right]\nonumber\\
&\times\exp\left[-L(G^{3s+3}(\widetilde{a}),\omega^{-2s-1}\lambda)-L(G^{s+2}(\widetilde{a}),\omega^{-2s}\lambda)+O\left(\lambda^{-\rho}\right)\right]=-1.\nonumber
\end{align}

Then,
from Lemma \ref{asy_lemma},
\begin{equation}\nonumber
\left(2n+1\right)\pi
i=\sum_{j=0}^{m+1}e_{\ell,j}(a)\lambda_{n}^{\frac{1}{2}+\frac{1-j}{m}}+O\left(\lambda^{-\rho}\right),
\end{equation}
where for $0\leq j\leq m+1$,
\begin{align}
e_{\ell,j}(a)=&K_{m,j}(G^{s-\frac{1}{2}}(a))\omega^{(2s-1)\left(\frac{1}{2}+\frac{1-j}{m}\right)}-K_{m,j}((G^{-s+\frac{1}{2}}(a))\omega^{-(2s-1)\left(\frac{1}{2}+\frac{1-j}{m}\right)}\nonumber\\
&-K_{m,j}(G^{3s+\frac{5}{2}}(a))\omega^{-(2s+1)\left(\frac{1}{2}+\frac{1-j}{m}\right)}-K_{m,j}(G^{s+\frac{3}{2}}(a))\omega^{-2s\left(\frac{1}{2}+\frac{1-j}{m}\right)}\nonumber\\
=&2i\sum_{k=0}^j(-1)^kK_{m,j,k}b_{j,k}(a)\left[\sin\left(\frac{(1-j)(2s-1)\pi}{m}\right)+\sin\left(\frac{(1-j)(2s+1)\pi}{m}\right)\right]\nonumber\\
=&4i\sum_{k=0}^j(-1)^kK_{m,j,k}b_{j,k}(a)\sin\left(\frac{(1-j)\ell\pi}{m}\right)\cos\left(\frac{(1-j)\pi}{m}\right).\nonumber
\end{align}
\end{proof}
Theorem \ref{main_thm1} for $1\leq\ell\leq\frac{m}{2}$ has been proved. Next we prove Theorem \ref{main_thm1} for
$\frac{m}{2}<\ell\leq m-1$, by the change of the variables $z\mapsto -z$.
\subsection{When $\frac{m}{2}<\ell\leq m-1$}
\begin{proof}[Proof of Theorem ~\ref{main_thm1} when $\frac{m}{2}<\ell\leq m-1$]
Suppose that  $\ell>\frac{m}{2}$. If $u$ is an eigenfunction of $H_{\ell}$, then $v(z)=u(-z,\lambda)$ solves
\begin{equation}\nonumber
-v^\dd(z)+\left[(-1)^{-\ell}(-iz)^m-P(-iz)\right]v(z) =\lambda v(z),
\end{equation}
and
\begin{equation}\nonumber
\text{$v(z)\rightarrow 0$ exponentially, as $z\rightarrow \infty$ along the two rays}\quad \arg z=-\frac{\pi}{2}\pm \frac{((m-\ell)+1)\pi}{m+2}.
\end{equation}
The coefficient vector of $P(-z)$ is $((-1)^{m-1}a_1,(-1)^{m-2}a_2,\dots,(-1)^{1}a_{m-1}, a_m).$ Certainly,
$$\sin\left(\frac{(1-j)(m-\ell)\pi}{m}\right)=(-1)^j\sin\left(\frac{(1-j)\ell\pi}{m}\right).$$
Also, one can find from Lemma \ref{lemma_25} that for $0\leq k\leq j$,
$$b_{j,k}((-1)^{m-1}a_1,(-1)^{m-2}a_2,\dots,-a_{m-1}, a_m)=(-1)^{mk-j}b_{j,k}(a_1,a_2,\dots,a_{m-1},a_m).$$
Moreover, $c_{m-\ell,j}((-1)^{m-1}a_1,(-1)^{m-2}a_2,\dots,-a_{m-1}, a_m)=c_{\ell,j}(a_1,a_2,\dots,a_{m-1}, a_m).$
This completes proof of Theorem \ref{main_thm1}.
\end{proof}
%%%%%%%%%%%%%%%%%%%%%%%%%%%%%%%%%%%%%%%%%%%%%%%%%%%%%
%%%%%%%%%%%%%%%%%%%%%%%%%%%%%%%%%%%%%%%%%%%%%%%%%%%%%

\subsection*{{\bf Acknowledgments}}
The author thanks Mark Ashbaugh, Fritz Gesztesy, Richard Laugesen, Boris Mityagin, Grigori Rozenblioum, and Alexander Turbiner for helpful discussions and references.

{\sc email contact:}  kshin@westga.edu
\end{document}